\newtheorem{theorem}{Theorem}[section]
\newtheorem{lemma}[theorem]{Lemma}
\newtheorem{definition}[theorem]{Definition}
\newtheorem{remark}[theorem]{Remark}
\newenvironment{acknowledgement}{\emph{Acknowledgement.}}
\numberwithin{equation}{section}
\begin{document}

\title[Multiscale Analysis and Localization]
{Multiscale Analysis and Localization of random 
 operators}

\author{Abel Klein}
\address{University of California, Irvine,
Department of Mathematics,
Irvine, CA 92697-3875,  USA}
 \email{aklein@uci.edu}

\thanks{2000 \emph{Mathematics Subject Classification.} 
Primary 82B44; Secondary  47B80, 60H25}
\thanks{A.K. was supported in part   by NSF Grant
 DMS-0200710}

\begin{abstract}  {A discussion of the method of multiscale analysis in the study of
 localization of random  operators based on lectures given at   
\emph{Random Schr\"odinger operators: 
 methods, results, and perspectives},  
\'Etats de la recherche, Universit\'e Paris 13, June 2002}
\end{abstract}

\maketitle

\tableofcontents

\section{Introduction}

In his seminal 1958  article \cite{And}, Anderson argued that for
a simple Schr\"odinger operator in a  disordered medium,``at sufficiently low
 densities transport does not take place; the exact wave functions are localized in a small
 region of space."   This phenomenon,  known as Anderson
 localization, originally  studied in the context
of quantum mechanical electrons in random media 
(e.g., \cite{T}), was later found relevant also in the context   of  classical waves 
in random media (e.g., \cite{And2,Ma,Jo1,Jo2}), where it 
was  observed in light waves in an experiment conducted by Wiersma et al 
 \cite{WBLR}.

Anderson localization was initially given a spectral
  interpretation:  pure point
spectrum with exponentially decaying eigenstates (exponential localization).
But the intuitive physical notion of 
localization has also a dynamical interpretation: 
the moments of a
 wave packet, initially localized both in
space and in energy,  should remain 
uniformly bounded under  time evolution.
 (Dynamical localization implies
pure point spectrum, but the converse is not true.) 
Although exponential localization has sometimes been called Anderson
 localization, we will use Anderson localization in a broader sense,
 since it can be argued  the circle
 of ideas
regarding localization, originating from 
\cite{And}, include the physical notion of dynamical localization.

Localization for random operators was first established in the celebrated
paper by  Gol'dsheid,  Molchanov and Pastur \cite{GMP} for a
 one dimensional continuous random Schr\"odinger operator. 
Their method was extended to other
 one and quasi-one (the strip) dimensional  
random Schr\"odinger operators \cite{KS,Car,Lac}.  But the
 multi-dimensional case required new methods.

 The method  with the wider applicability has been the
 multiscale analysis,  a
technique initially  developed by Fr\"ohlich and Spencer  \cite{FS}
and Fr\"ohlich, Martinelli, Spencer and Scoppolla
 \cite{FMSS}, and simplified by von Dreifus \cite{VD} and  von Dreifus
and Klein
\cite{VDK}. 
 (For the multiscale analysis 
per se, see
also
 \cite{HM,Sp,VDK2,K,Gr,Klo5,CH1,FK1,KSS,KSS2,Kr,St,GK1,GK4}, 
for applications see also
\cite{CKM,KlMP,KLS,Kl0,Kl,FK3,FK4,CH2,FK2,W1,BCH1,BCH2,SVW,CHT,K3,DBG,
FLM,Kl4,Z,DSS1,U,KK2,GK3,GK5,GK6}.)
Although it originally only gave exponential localization 
\cite{FMSS,DelyLS,SW,VDK,CH1},  it was later shown 
to also yield dynamical localization  by
 Germinet and  De Bi\`evre \cite{GDB},   strong dynamical localization
 for moments up to some finite order  
by Damanik and Stollman \cite{DS}, and 
 strong dynamical localization (up to all orders) in the Hilbert-Schmidt
norm by Germinet and Klein \cite{GK1}. The latest version of the multiscale analysis,
  the bootstrap multiscale analysis
of Germinet and Klein \cite{GK1}, 
 built out of four different
multiscale analyses,  
 yields exponential localization, semi-uniformly localized eigenfunctions
(SULE), and sub-exponential
decay of  the expectation of the kernel of the evolution operator.

The other successful method for proving localization in the multi-dimensional case
is the fractional moment method  introduced by Aizenman and 
Molchanov \cite{AM,A,ASFH}, which has just been extended to the continuum 
by Aizenman et al  \cite{AENSS}.  It yields exponential
decay for the  expectation of the kernel of the evolution operator, but it requires
that the conditional expectation of certain random variables
have bounded densities.

In these lectures we discuss the method of multiscale analysis in the study of
 localization of random  operators.    A random medium will be modeled by a ergodic  random
self-adjoint operator.   In Section~\ref{secrandomop} we discuss the most important
random operators: random Schr\"odinger operators, random Landau Hamiltonians,
and random classical wave operators (Maxwell, acoustic, elastic).
In Section~\ref{secloc} we discuss several
 definitions of localization from both the spectral and dynamical  points of view.
In Section~\ref{secreq} we describe the properties of random operators required by
 the multiscale analysis.  In Section~\ref{secMSA} we state  and discuss
the bootstrap multiscale analysis
plus the four multiscale analyses used in its proof. In Section~\ref{secMASAloc} we 
 prove exponential and dynamical localization from the multiscale analysis.  
In Section~\ref{proofMSA} we show how to 
perform a multiscale analysis; we give a complete proof of  the Dreifus-Klein multiscale analysis
in the continuum.

These lectures were written in 2002.  Since then Bourgain and Kenig \cite{BK}  proved localization in the continuous Anderson-Bernoulli model, using a multiscale analysis. The Wegner estimate is established   in the multiscale analysis using  ``free sites" and a new quantitative version of unique  continuation which gives  a lower bound on eigenfunctions. 
Since their Wegner estimate  has weak probability estimates and  the underlying random variables are discrete, they also introduced a new method to prove Anderson localization from estimates on the finite-volume resolvents given by a single-energy multiscale analysis.  The new method does not use spectral averaging as in \cite{CH1,DelyLS,SW}, which requires random variables with bounded densities.  It is also not an energy-interval multiscale analysis as in \cite{VDK,FMSS}, which requires better probability estimates.
Subsequently, Germinet, Hislop and Klein
\cite{GHK1, GHK2, GHK3}  proved localization for Schr\"odinger operators with  Poisson random potential, using   a multiscale analysis that 
exploits  the   probabilistic properties of Poisson point processes to
 control  the randomness of the configurations, and at the same time allows  the  use of the  new ideas introduced by  Bourgain and Kenig.

\section{Random operators} \label{secrandomop}

Quantum and classical waves in   random media 
are modeled by  random
self-adjoint operators  on either 
$\mathrm{L}^2(\mathbb{R}^d,{\rm d}x; \mathbb{C}^n)$
or $\ell^2(\mathbb{Z}^d; \mathbb{C}^n)$.    Examples include:

\begin{itemize}
\item Random Schr\"odinger operators:

\begin{itemize}

\item The Anderson model:   
\begin{equation}
 H_\omega = - \Delta + V_\omega \;\;
 \mbox{on}\;\; \ell^2(\mathbb{Z}^d) \, ,
\end{equation}
where $\Delta$ is the finite difference Laplacian and
$\{V_\omega(x); \ {x \in  \mathbb{Z}^d}\}$  are 
independent identically distributed bounded  random
variables.  (E.g., \cite{KS,FS,Lac,FMSS,CKM,MS,KlMP,CFKS,VDK,Sp,KLS,K,Gr,AM,A,
FK3,K1,K2,
SVW,ASFH,W2,Kl3}.)
\medskip

\item Anderson Hamiltonians on the continuum:

\begin{equation} \label{schr}
H_\omega = - \Delta + V_{\mathrm{per}} +  V_\omega \; \; \; \mathrm{on} \; \; \;   
\mathrm{L}^2(\mathbb{R}^d, {\mathrm{d}}x), 
\end{equation}
where $\Delta$ is the  Laplacian operator,
$V_{\mathrm{per}}$ is a periodic potential
 (by rescaling  we take the period to be one) of the form
$V_{\mathrm{per}}=
V_{\mathrm{per}}^{(1)} + V_{\mathrm{per}}^{(2)} $,
with $V_{\mathrm{per}}^{(i)} $, $i=1,2$, periodic with period one,
$0\le V_{\mathrm{per}}^{(1)} \in 
 \mathrm{L}_{\mathrm{loc}}^1(\mathbb{R}^d, {\mathrm{d}}x)$, 
  $ V_{\mathrm{per}}^{(2)}$  relatively form-bounded with respect to
$-\Delta$ with relative bound $<1$,  and
$V_\omega$  a random potential of the form
\begin{equation} \label{pot}
V_\omega(x)=\sum_{i\in  \frac1{q}\mathbb{Z}^d}
\omega_i \, u(x-i),
\end{equation}
where $q \in \mathbb{N}$,
$\omega =\{\omega_i; \ {i\in  \frac1{q}\mathbb{Z}^d}\}$  are 
independent identically distributed bounded  random
variables,  $u$ is  a real valued measurable function with 
compact support, $u \in \mathrm{L}^{p}(\mathbb{R}^d, {\mathrm{d}}x) $
  with 
$ {p} >   \frac d 2$ if
 $d\ge 2$ and
 ${p}=2$ if $d=1$. (E.g., \cite{HM,Klo5,Kl0,CH1,Kl,BCH1,KSS,KSS2,GDB,St,GK1,DS,DSS1,
Kl4,Z,GK3,GK4,GK5,GK6,AENSS}.)
\end{itemize}\bigskip

\item Random Landau Hamiltonians:
\begin{eqnarray}
H_\omega = H_0 +  V_\omega \; \; \; \mathrm{on} \; \; \;  
\mathrm{L}^2(\mathbb{R}^2, {\mathrm{d}}x), 
\end{eqnarray}
where $H_0 =  (-i\nabla -A)^2$, $A=\frac B2 (x_2,-x_1)$
with $B>0$, and the random potential 
$V_\omega$ is as 
in \eqref{pot} with $q=1$ and $u(x)$ bounded. 
 (See \cite{CH2,W1,BCH2,GK4}.)\bigskip

\item Random classical wave operators: 
\begin{itemize}

\item Maxwell operators in random media:
\begin{equation}
H_\omega =
\frac 1{\sqrt{\mu_\omega (x)}}\nabla{\times}
 \frac 1{\varepsilon_\omega  (x)}\nabla{\times}
\frac 1{\sqrt{\mu_\omega (x)}}  \; \; \; \mathrm{on} \; \; \;   
\mathrm{L}^{2}\left( \mathbb{R}^{3},dx;\mathbb{C}^3\right)
\end{equation}
 where $\nabla {\times}$ is the operator given by the curl,
 $\varepsilon_\omega(x) $   is 
the random dielectric constant and $\mu_\omega(x) $   is the 
random magnetic permeability.  We take  
\begin{eqnarray} \label{varepsilon}
\varepsilon _{\omega }(x)&=&\varepsilon _0(x)\gamma _{\omega }(x)
\;\;,\mbox{with}\;\;
\gamma _{\omega }(x)=1+\sum_{i\in \frac 1q \mathbb{Z}^3}\omega _iu(x-i), \\
\mu _{\omega }(x)&=&\mu _0(x)\beta _{\omega }(x)
\;\;,\mbox{with}\;\;
\beta _{\omega }(x)=1+\sum_{i\in \frac 1q \mathbb{Z}^3}\omega _iv(x-i),\label{mu}
\end{eqnarray}
where $q \in \mathbb{N}$,
$\omega =\{\omega_i; \ {i\in  \frac1{q}\mathbb{Z}^d}\}$  are 
independent identically distributed bounded  random
variables
  taking values in the
interval $[-1,1]$, $\varepsilon _0(x)$ and   $\mu _0(x)$ are periodic
measurable functions (by rescaling  we take the period to be one), such that
$
0<\varepsilon _{-}\leq \varepsilon \left( x\right) \le\varepsilon
_{+}<\infty$ and 
$0<\mu _{-}\leq \mu \left( x\right) \le\mu_{+}<\infty$
for some constants $\varepsilon _{\pm}$ and $\mu _{\pm}$, 
$u(x)$ and
$v(x)$ are
nonnegative measurable real valued functions with compact support,
 such that
\begin{eqnarray}
0 \le U_- \le U(x)\equiv \sum_{{i\in \frac 1q \mathbb{Z}^3}}u_i(x)&\le& U_+<\infty, 
 \label{muM}\\
0 \le V_-\le V(x)\equiv \sum_{{i\in \frac 1q \mathbb{Z}^3}}v_i(x)&\le& V_+<\infty, 
 \label{muM2}
\end{eqnarray}
for some constants $U_\pm$ and $V_\pm$, with $ U_- + V_- >0$
and ${\max\{U_+, V_+\}}<1 $.
(See \cite{FK4,FK2,K3,CHT,KK1,KK2}.)\medskip

\item Acoustic operators in random media:
\begin{equation}
H_\omega = \frac{1}{\sqrt{\kappa_\omega (x)}}\nabla^* \frac{1}{
\rho_\omega (x)}\nabla \frac{1}{\sqrt{\kappa_\omega (x)}} 
\; \; \; \mathrm{on} \; \; \;   
\mathrm{L}^2(\mathbb{R}^d, {\mathrm{d}}x), \label{acoustic}
\end{equation} 
  where  $\nabla$ is the gradient operator, and
  the  random compressibility  $\kappa_\omega (x)$ and   the random
mass density $\varrho_\omega (x)$ are of the same form as  
$\varepsilon _{\omega }(x)$ and
$\mu _{\omega }(x)$ in \eqref{varepsilon} and \eqref{mu}.  
(See \cite{FK4,FK1,CHT,KK1,KK2}).\medskip

\item Elastic operators in random media:
\begin{eqnarray}\label{elastic}
\lefteqn{H_\omega =}\\
 && \frac{1}{\sqrt{\rho_\omega (x)}}
 \left\{
\nabla \left(\lambda_\omega(x) +2 \mu_\omega(x)\right)\nabla^*
 +\nabla \times \mu_\omega(x) \nabla\times \right\}
\frac{1}{\sqrt{\rho_\omega (x)}} 
 \nonumber
\end{eqnarray} 
on 
$\mathrm{L}^{2}\left( \mathbb{R}^{3},dx;\mathbb{C}^3\right)$, 
 where   the mass  density $\rho_\omega(x) $, and the 
 Lam\'e moduli
 $\lambda_\omega(x)$ and $\mu_\omega(x)$ are
 of the same form as  
$\varepsilon _{\omega }(x)$ and
$\mu _{\omega }(x)$ in \eqref{varepsilon} and \eqref{mu}. 
 (See \cite{KK1,KK2}).
\end{itemize}

\end{itemize}

In all these examples the random operator
 $H_\omega$ is a $\mathbb{Z}^d$-ergodic  random
self-adjoint operator $H_\omega$ on a Hilbert space
$\mathcal{H}$, where $\omega$ belongs to a
set
$\Omega$ with a probability measure $\mathbb{P}$
and expectation $\mathbb{E}$, and either 
$\mathcal{H}=\mathrm{L}^2(\mathbb{R}^d,{\rm d}x; \mathbb{C}^n)$
(``on the continuum")
or $\mathcal{H}=\ell^2(\mathbb{Z}^d; \mathbb{C}^n)$
(``on the lattice"). 
They all
satisfy the following definition.

\begin{definition}  An ergodic random operator is a $\mathbb{Z}^d$-ergodic
 measurable map
 $H_\omega$
from a probability space $(\Omega,\mathcal{F},\mathbb{P})$
(with  expectation $\mathbb{E}$) to self-adjoint operators on
 either 
$\mathrm{L}^2(\mathbb{R}^d,{\rm d}x; \mathbb{C}^n)$
or $\ell^2(\mathbb{Z}^d; \mathbb{C}^n)$.   
\end{definition}

By measurability of $H_\omega$ we mean that
 the mappings $\omega \to f(H_\omega)$ are weakly
(and hence strongly)
measurable for all bounded Borel  measurable functions $f$ on $\mathbb{R}$.
 (See \cite{KM}, \cite[Section V.1]{CL} for more details.)
Random operators may be defined without any 
ergodicity requirement, ergodicity being an extra
 requirement, but since we will  be dealing only
with $\mathbb{Z}^d$-ergodic random operators, we included 
it in the definition for convenience.  We recall that 
 $H_\omega$ is  $\mathbb{Z}^d$-ergodic if 
there exists  a group representation of  $\mathbb{Z}^d$ by an ergodic family
$\{ \tau_y; \ y \in \mathbb{Z}^d\}$ of measure preserving
 transformations
on $ (\Omega, \mathcal{F}, \mathbb{P})$ 
 such that 
\begin{equation} \label{cov}
U(y) H_\omega U(y)^*=  H_{\tau_y(\omega)}\;\; 
\mbox{for  all $ y \in \mathbb{Z}^d$},
\end{equation}
where $U(y)$ is the unitary operator given
by translation: $(U(y)f)(x) = f(x-y)$.  (Note that for Landau Hamiltonians
translations are replaced
by magnetic translations.)

An important  consequence of ergodicity is 
 that there exists a nonrandom set $\Sigma $
such that $\sigma (H_\omega)=\Sigma $ with probability one, where 
$\sigma (A)$ denotes the spectrum of the operator $A$. In addition, the
decomposition of $\sigma (H_\omega)$ into pure point spectrum
$\sigma_{pp} (H_\omega)$,
absolutely continuous spectrum  $\sigma_{ac} (H_\omega)$,
 and singular continuous spectrum $\sigma_{sc} (H_\omega)$  is also
independent of the choice of $\omega $ with probability one, i.e.,
there are  nonrandom sets $\Sigma_{pp}$, $\Sigma_{ac}$ and
 $\Sigma_{sc}$, such that  $\sigma_{pp} (H_\omega)=\Sigma_{pp} $,
 $\sigma_{ac} (H_\omega)=\Sigma_{ac} $, and
 $\sigma_{sc} (H_\omega)=\Sigma_{sc} $ with
probability one. (See
 \cite{Pa,KS,KM,PF,CL,CFKS}.)

\section{Spectral and dynamical localization} \label{secloc}

Localization can be interpreted from either 
 the spectral or the dynamical point of views.  We give selected 
 definitions from each point of view.

By  $\chi_B$ we denote the characteristic  function of the
set $B \subset  \mathbb{R}^d$ (or $\mathbb{Z}^d$).
By  $\chi_{x}$ we denote the characteristic  function of the
 cube of side $1$ centered at
 $x \in \mathbb{Z}^d$. We write
$\langle x \rangle= \sqrt{1+|x|^2}$.  The spectral projection
of $H_\omega$ is denoted by $E_\omega(\cdot )$. 
 The Hilbert-Schmidt norm of an operator $A$  is written as
$\|A\|_2$.

\begin{definition}\label{DLdef} 
Let $H_\omega$ be an ergodic random operator and   ${\mathcal{{{I}}}}$
an open interval.  Then
\begin{description}
\item[(i)\!\!]  $H_\omega$ exhibits spectral
localization (SL) in  ${\mathcal{{{I}}}}$
 if it has pure point spectrum in
 ${\mathcal{I}}$, i.e., $\Sigma \cap {\mathcal{I}}= 
\Sigma_{pp} \cap {\mathcal{I}}\not=\emptyset$ and
 $\Sigma_{ac} \cap {\mathcal{I}}=\Sigma_{sc} \cap {\mathcal{I}}
=\emptyset$. 

\item[(ii)\!\!]  $H_\omega$ exhibits exponential
localization (EL) in  ${\mathcal{{{I}}}}$ if it exhibits  spectral
localization in  ${\mathcal{{{I}}}}$ and
 for $\mathbb{P}$-almost every $\omega$ the eigenfunctions of
$H_\omega$ with eigenvalue in ${\mathcal{I}}$ decay 
exponentially in the $L^2$-sense.  (A function $\psi$ 
 decays exponentially   in the $L^2$-sense if
 $\|\chi_x\psi\|$ decays exponentially, i.e.,  $\|\chi_x\psi\|\le C \mathrm{e}^{-m|x|}$ with  $C$ and $m>0$ constants.)

\item[(iii)\!\!] 
$H_\omega$ exhibits 
dynamical localization (DL) in ${\mathcal{I}}$ if 
$\Sigma \cap {\mathcal{I}}\not=\emptyset$ and,
 for $\mathbb{P}$-almost every $\omega$,
 each compact interval $I \subset{\mathcal{I}}$,
and $\psi \in \mathcal{H}$ with compact support,
  we have
\begin{equation} \label{sdlocintro66}
 \sup_{t \in \mathbb{R}}
 \left\| {\langle} x {\rangle}^{\frac n2} E_\omega(I) 
{\mathrm{e}^{-i t H_\omega }}\psi
 \right\| <\infty \;\;\; \mbox{for all $n \ge 0\,$}\, .
\end{equation}

\item[(iv)\!\!] 
$H_\omega$ exhibits strong 
dynamical localization (SDL) in ${\mathcal{I}}$ if 
$\Sigma \cap {\mathcal{I}}\not=\emptyset$ and
 for 
 each compact interval $I \subset{\mathcal{I}}$
and $\psi \in \mathcal{H}$ with compact support,
  we have
\begin{equation} \label{sdlocintro667}
\mathbb{E} \left\{ \sup_{t \in \mathbb{R}}
 \left\| {\langle} x {\rangle}^{\frac n2} E_\omega(I)
{\mathrm{e}^{-i t H_\omega }}\psi
 \right\|^2 \right\}<\infty \;\;\; \mbox{for all $n \ge 0\,$}\, .
\end{equation}

\item[(v)\!\!] 
$H_\omega$ exhibits strong
HS-dynamical localization (SHSDL) in ${\mathcal{I}}$ if 
$\Sigma \cap {\mathcal{I}}\not=\emptyset$ and
for
 each compact interval $I \subset{\mathcal{I}}$
and bounded Borel set $B$ we have
\begin{equation} \label{sdlocintro}
\mathbb{E} \left\{ \sup_{t \in \mathbb{R}}
 \left\| {\langle} x {\rangle}^{\frac n2} E_\omega(I)
{\mathrm{e}^{-i t H_\omega }}\chi_B 
 \right\|_2^2\right\}<\infty \;\;\; \mbox{for all $n \ge 0\,$}\, .
\end{equation}

\item[(vi)\!\!] 
$H_\omega$ exhibits strong full
HS-dynamical localization (SFHSDL) in ${\mathcal{I}}$ if 
$\Sigma \cap {\mathcal{I}}\not=\emptyset$ and
for
 each compact interval $I \subset{\mathcal{I}}$
and bounded Borel set $B $
  we have
\begin{equation} \label{sdlocintro7}
\mathbb{E} \left\{ \sup_{|\!|\!|f|\!|\!|\leq 1}
 \left\| {\langle} x {\rangle}^{\frac n2} E_\omega(I)   
{f( H_\omega) }\chi_B 
 \right\|_2^2\right\}<\infty \;\;\; \mbox{for all $n \ge 0\,$}\, ,
\end{equation}
the supremum being taken
over all  Borel functions $f$ of a real variable, with 
$\!|\!|\!| f |\!|\!|=\sup_{t \in \mathbb{R}} |f(t)|$

\item[(vii)\!\!] 
$H_\omega$ exhibits strong sub-exponential
HS-kernel decay (SSEHSKD) in ${\mathcal{I}}$ if 
$\Sigma \cap {\mathcal{I}}\not=\emptyset$ and
for each compact interval $I \subset{\mathcal{I}}$
and $0<\zeta<1$ there is a finite constant $ C_{I,\zeta}$
 such that 
\begin{equation} \label{sdlocintro9}
\mathbb{E} \left\{ \sup_{|\!|\!|f|\!|\!|\leq 1}
 \left\|\chi_x E_\omega(I)
{f( H_\omega) }\chi_y 
 \right\|_2^2\right\} \leq  C_{I,\zeta} \,\mathrm{e}^{-|x-y|^\zeta}\, ,
\end{equation}
for all $x,y \in \mathbb{Z}^d$, the supremum being taken
over all  Borel functions $f$ of a real variable, with 
$\!|\!|\!| f |\!|\!|=\sup_{t \in \mathbb{R}} |f(t)|$.

\end{description}
\end{definition}

\begin{definition} 
Let $H_\omega$ be an ergodic random operator.
The spectral localization  region $\Sigma_{\mathrm{SL}}$,
 exponential localization  region $\Sigma_{\mathrm{EL}}$,
dynamical localization  region $\Sigma_{\mathrm{DL}}$,
strong dynamical localization  region  $\Sigma_{\mathrm{SDL}}$,
strong HS-dynamical localization  region $\Sigma_{\mathrm{SHSDL}}$,
strong full  HS-dynamical localization  region 
$\Sigma_{\mathrm{SFHSDL}}$,
strong sub-exponential HS-kernel decay region 
$\Sigma_{\mathrm{SSEHSKD}}$,
for the random  operator
 $H_\omega$, are defined as the set of $E \in\Sigma$ for
 which there exists some
open interval ${\mathcal{I}} \ni E$ such that $H_\omega$ exhibits
 spectral localization,  exponential localization,
dynamical localization, strong dynamical localization,
strong HS-dynamical localization,
strong full  HS-dynamical localization  region,
strong sub-exponential HS-kernel decay, respectively,
 in ${\mathcal{I}}$.
\end{definition}

\begin{remark}  Note that 
\begin{equation}
\Sigma_{\mathrm{SSEHSKD}} \subset \Sigma_{\mathrm{SFHSDL}}
 \subset \Sigma_{\mathrm{SHSDL}} \subset \Sigma_{\mathrm{SDL}}
\subset \Sigma_{\mathrm{DL}} \subset \Sigma_{\mathrm{SL}} \, .
\end{equation}
 That
$\Sigma_{\mathrm{SSEHSKD}} \subset \Sigma_{\mathrm{SFHSDL}}$
 is a simple calculation
(see \cite[Proof of Corollary 3.10]{GK1});  that
$ \Sigma_{\mathrm{SFHSDL}}
 \subset \Sigma_{\mathrm{SHSDL}} \subset \Sigma_{\mathrm{SDL}}
\subset \Sigma_{\mathrm{DL}} $ is obvious;  that
$ \Sigma_{\mathrm{DL}} \subset \Sigma_{\mathrm{SL}}$
 follows from  the RAGE Theorem (e.g., the argument in 
\cite[Theorem 9.21]{CFKS}). But dynamical localization is actually a
strictly stronger notion than pure point spectrum, since the  latter can take
place whereas a quasi-ballistic motion is observed \cite{DRJLS}.
\end{remark}

For an ergodic random operator with suitable properties, spelled out in the next section,
the original multiscale analyses showed that  decay of the resolvent in
 a finite, but large enough, volume with high probability (the ``starting
hypothesis" for the multiscale analysis))  gave a sufficient condition for
$E \in  \Sigma_{\mathrm{SL}} $ \cite{FS,FMSS,VD,VDK}. Later that
 condition was shown to be sufficient for $E \in \Sigma_{\mathrm{DL}}$
\cite{GDB}, $E \in \Sigma_{\mathrm{SDL}}$ \cite{DS} (more predisely,
they show that
\eqref{sdlocintro} holds with the operator
norm substituted for the Hilbert-Schmidt norm and  $n\le n_0 $ for
 some $n_0 < \infty$), and finally
$E \in\Sigma_{\mathrm{SSEHSKD}} $ \cite{GK1}.  Moreover, 
the converse was found to be true:
 $E \in\Sigma_{\mathrm{SHSDL}}$
implies the starting hypothesis of the
multiscale analysis \cite{GK3}.

\begin{remark} The multiscale analysis region $\Sigma_{\mathrm{MSA}}$
is given in Definition~\ref{sigmaSMA} as the region where the conclusions of the
 multiscale analysis hold.  If the ergodic random operator satisfies the requirements of the 
multiscale analysis in an open interval $\mathcal{I}$, it will be shown in
Theorem~\ref{tBMSA2} that $\Sigma_{\mathrm{MSA}} \cap \mathcal{I} \subset
\Sigma_{\mathrm{EL}}  \cap \Sigma_{\mathrm{SSEHSKD}}
 \cap \mathcal{I} $.  If in addition we have property \eqref{KK12}  and 
the kernel decay estimates of \cite{GK2} hold uniformly for $\mathbb{P}$-a.e. $\omega$
(both requirements are usually satisfied),    
then it is proven in \cite{GK3} that
\begin{equation}\label{spregion}
\Sigma_{\mathrm{MSA}} \cap \mathcal{I} = \Sigma_{\mathrm{SSEHSKD}}
 \cap \mathcal{I}=  \Sigma_{\mathrm{SHSDL}}
 \cap \mathcal{I}\, .
\end{equation}
Moreover, in \cite{GKsudec} it is shown that the spectral region  in  (\ref{spregion}) has characterizations  by the  decay of eigenfunction correlations and 
by the  decay of Fermi projections, and that the former implies finite multiplicity of the eigenvalues of the ergodic random  operator.
\end{remark}

\section{Requirements of the multiscale analysis}\label{secreq}

We now state the properties of the ergodic random operator $H_\omega$
 that are required for  the multiscale analysis and its consequence.
We will work on the continuum, but everything will work on the lattice
(easier case) with
appropriate modifications.
We fix an open interval $\mathcal{I}$.

\subsection{Generalized eigenfunction expansion}
\label{subsectgee}

Generalized eigenfunction expansions were originally developed for
elliptic partial differential operators with smooth coefficients (see
Berezanskii   \cite{B} and references therein).
  These expansions were 
extended to Schr\"odinger operators with singular potentials by
Simon \cite{simonbkg} (see also  references therein), and to classical wave operators with nonsmooth 
coefficients by Klein, Koines and Seifert \cite{KKS}. 

 These  expansions construct polynomially bounded generalized 
eigenfunctions for
a set of generalized eigenvalues  with full spectral measure.  
These generalized eigenfunctions were used
 by Pastur \cite{Pa} and by Martinelli and Scoppola \cite{MS} to prove that 
certain Schr\"odinger operators with
random potentials have no absolutely continuous spectrum.  They played 
a crucial role in the work by Fr\"ohlich, Martinelli, Spencer and Scoppola
\cite{FMSS} and by von Dreifus and Klein \cite{VDK} on exponential localization
 of random Schr\"odinger operators, providing the
crucial link between the multiscale analysis and  
pure point spectrum: the 
exponential decay of finite volume Green's functions 
(obtained by a multiscale analysis) forces polynomially bounded 
generalized eigenfunctions to be bona fide eigenfunctions, so
 the spectrum is at most countable
 and hence pure point.

In \cite{GK1},    
as in \cite{Ge,GJ}, the generalized eigenfunction
expansion itself
(not just the existence of polynomially bounded generalized 
eigenfunctions) is used
to provide the link between the multiscale analysis and 
strong HS-dynamical localization (and hence pure point spectrum). 

We will now state the properties of an ergodic random operator that  guarantees
the existence of a generalized eigenfunction expansion. 
 We follow the
approach in \cite[Section~3]{KKS}.

Let $\mathcal{H} = 
\mathrm{L}^2(\mathbb{R}^d,{\rm d}x; \mathbb{C}^n) $.  (We
 discuss the generalized eigenfunction expansion on the continuum, but an analogous discussion 
is valid on the lattice.)
Given  $\nu>d/4$ (omitted from the notation), we 
define the weighted spaces 
$\mathcal{H}_\pm$:
\begin{equation} \label{H+-}
\mathcal{H}_\pm = 
\mathrm{L}^2(\mathbb{R}^d,
\langle x\rangle^{\pm4\nu}{\rm d}x; \mathbb{C}^n)  \ .
\end{equation}
 $\mathcal{H}_-$ is a  space of polynomially 
$\mathrm{L}^2$-bounded functions.  (Recall  $\langle x \rangle= \sqrt{1+|x|^2}$.)
The sesquilinear form
\begin{equation}\label{eq:dual}
\langle\phi_1,\phi_2\rangle_{\mathcal{H}_+,\mathcal{H}_-} =\int \overline{\phi_1(x)}\cdot\phi_2(x) {\rm d}x,
\end{equation}
where $\phi_1\in\mathcal{H}_+$ and $\phi_2\in\mathcal{H}_-$, 
makes $\mathcal{H}_+$ and $\mathcal{H}_-$
 conjugate duals to each other.  By  $O^{\dagger}$ 
we will denote the adjoint
of an operator  $O$  with respect to this duality.
 By construction,
$\mathcal{H}_+\subset\mathcal{H}\subset\mathcal{H}_-\,$, the
natural injections  
 $\imath_+: \mathcal{H}_+\rightarrow\mathcal{H}$  and 
$\imath_-:\mathcal{H}\rightarrow\mathcal{H}_-$  being
continuous with dense range, with 
$\imath_+^\dagger =\imath_-\,$. 

We set  $T$ to be the self-adjoint operator  on 
$\mathcal{H}$  given by
multiplication by the function
$\langle x\rangle^{2\nu}$;  note that $T^{-1}$ is bounded.
 The operators $T_+: \mathcal{H}_+ \to \mathcal{H}$
and $T_-: \mathcal{H} \to \mathcal{H}_-$, defined by
$T_+= T \imath_+\,$, 
 $T_-$ the closure of the operator $ \imath_- T $ on $ \mathcal{D}(T)$, are unitary
with $T_-=T_+^\dagger$.
The map $\tau: \mathcal{B}(\mathcal{H})  \to 
\mathcal{ B}(\mathcal{H}_+,\mathcal{H}_-)$, with 
$\tau(C)=T_-CT_+ \,$, is a Banach space isomorphism, as $T_\pm$
are unitary operators.  ($\mathcal{ B}(\mathcal{H}_1,\mathcal{H}_2)$
denotes the Banach space of bounded operators from $\mathcal{H}_1$
to $\mathcal{H}_2$,  
$\mathcal{B}(\mathcal{H})= \mathcal{B}(\mathcal{H},\mathcal{H})$.)
 If $1\le q<\infty$, we define
$
\mathcal{ T}_{q}(\mathcal{H}_+,\mathcal{H}_-)=
\tau\left(\mathcal{ T}_{q}(\mathcal{H})\right) $, where
$\mathcal{ T}_{q}(\mathcal{H})$ denotes 
 the Banach
space of bounded operators $S$ on $\mathcal{H}$ with 
$\|S\|_q= (\mathrm{tr} \,|S|^q)^{\frac 1 q}<\infty$.
By construction,  $\mathcal{ T}_{q}(\mathcal{H}_+,\mathcal{H}_-)$, equipped with 
the norm  $\| B \|_q=\|  \tau^{-1}(B)\|_q $, is a Banach space isomorphic to 
$\mathcal{ T}_{q}(\mathcal{H})$, with
   $\mathcal{ T}_{2}(\mathcal{H}_+,\mathcal{H}_-)$ being the usual
Hilbert space of Hilbert-Schmidt operators from  $\mathcal{H}_+$
to $\mathcal{H}_-$.  

Note that
\begin{equation}\label{eq:normchi}
\|\chi_{x}\|_{\mathcal{H},\mathcal{H}_+} =
 \|\chi_{x}\|_{\mathcal{H}_-,\mathcal{H}} \leq 
\textstyle{\left(\frac 3 2\right)^\nu}\langle x\rangle^{2\nu}
\end{equation}
for all $x \in \mathbb{R}^d$.
(Given an operator $B:\mathcal{H}_1\to\mathcal{H}_2$, 
$\|B\|_{\mathcal{H}_1,\mathcal{H}_2}$ will
denote its operator norm.) 

 The following property guarantees the existence of a
 generalized eigenfunction expansion (GEE)  in the
open interval $\mathcal{I}$ 
 with the right properties (see
\cite[Section~3]{KKS} for details).  
We write $E_\omega(B)$
for the spectral projections of the operator $H_\omega$, i.e.,
$E_\omega(J)= \chi_J(H_\omega)$ for any bounded Borel set
$J\subset \mathbb{R}$.
We will fix an appropriate $\nu>d/4$ and use the corresponding operator
 $T$ and weighted
spaces $\mathcal{H}_\pm$ as in (\ref{H+-}).

\bigskip
\noindent{\textbf{(GEE)}} \ \emph{For some $\nu>d/4$ the set
\begin{equation}
\mathcal{D}_+^\omega=
\{\phi\in\mathcal{D}(H_\omega)\cap\mathcal{H}_+;
 \; H_\omega\phi\in\mathcal{H}_+\}
\end{equation}
is dense in $\mathcal{H}_+$ and   an operator core for $H_\omega$ with probability one.
Moreover, there exists  a 
 bounded, continuous  function $f$ on $\mathbb{R}$, 
strictly positive on the spectrum of $H_\omega$, such that 
\begin{equation} \label{KK11}
\textrm{tr}_{\mathcal{H}}
\left(T^{-1} f(H_\omega)E_\omega(\mathcal{I}) T^{-1}\right) <\infty \; \; 
\end{equation}
with probability one.
}
\bigskip

A measurable function $\psi: \mathbb{R}^d \to \mathbb{C}^n$ 
 is said to be a {generalized eigenfunction} of  $H_\omega$ with 
generalized eigenvalue $\lambda$, if 
$\psi \in {\mathcal{H}}_-\backslash\{0\}$  and
$$ 
\langle H_\omega\phi,\psi\rangle_{\mathcal{H}_+,\mathcal{H}_-} = 
\lambda\langle \phi,\psi\rangle_{\mathcal{H}_+,\mathcal{H}_-} \; \; 
\mbox{for all $\phi \in \mathcal{D}_+^\omega$.}
$$
It follows from the first part of property (GEE) that if a generalized eigenfunction is in
$\mathcal{H}$, then it is a bona fide eigenfunction.

If (GEE) holds,  the following is true for
 $\mathbb{P}$-almost every $\omega$: 
For all bounded Borel sets $J$ we have
\begin{equation}
\mathrm{tr}_{\mathcal{H}} 
\left(T^{-1}E_{\omega}(J\cap \mathcal{I}) T^{-1}
 \right) <+\infty \, ,
\end{equation}
and hence
\begin{equation}\label{012}
\mu_\omega (J)=
\textrm{tr}_{\mathcal{H}}
 \left(T^{-1}E_{\omega}(J\cap \mathcal{I}) T^{-1}\right)
\end{equation}
 is a spectral measure for the restriction of $H_\omega$
to the Hilbert space $E_{\omega}( \mathcal{I}) \mathcal{H}$,
with
\begin{equation} \label{eq:ingr1bis2}
  \mu_\omega (J)   < \infty \ \ \mbox{for}  \ \ J  \  \
\mbox{bounded.}
\end{equation}
In particular,
we have a generalized eigenfunction expansion for $H_\omega$: with
probability one,  
 there exists a $\mu_\omega$-locally integrable function 
$P_\omega(\lambda)$
from the real line  into ${\mathcal{ T}_1(\mathcal{H}_+,\mathcal{H}_-)}$,
 with 
\begin{equation}  \label{pdagger}
P_\omega(\lambda) =
 P_\omega(\lambda)^\dagger
\end{equation}
 and
\begin{equation}\label{a16a}
\textrm{tr}_{\mathcal{H}} \left(T_-^{-1} P_\omega(\lambda)T_+^{-1}\right)=1
 \  \ \ \mathrm{for} \ 
\mu_\omega-\mathrm{a.e.} \
\lambda\ ,
\end{equation}
such that
\begin{equation}\label{eq:expantrace}
\imath_-E_{\omega}(J\cap \mathcal{I})\imath_+=
\int_{J}P_\omega(\lambda)\,d\mu_\omega(\lambda)\  \ \
\mbox{for bounded Borel sets}\ \ J \ ,
\end{equation} 
where the integral is the Bochner integral of 
${\mathcal{ T}_1(\mathcal{H}_+,\mathcal{H}_-)}$-valued 
functions.  
Moreover, for 
$\mu_\omega$-almost every $\lambda$,
 if $\phi\in\mathcal{H}_+$ and 
$P_\omega(\lambda)\phi \not= 0$,
 then $P_\omega(\lambda)\phi$ is a generalized
eigenfunction of
$H_\omega$ with generalized eigenvalue $\lambda$. It follows,
using \eqref{eq:expantrace}, that
$\mu_\omega$-almost every $\lambda$ is a generalized eigenvalue
of $H_\omega$.

\begin{lemma}
\label{lempoly} If the ergodic random operator $H_\omega$ has property (GEE),
then  for $\mathbb{P}$-almost every $\omega$, we have
\begin{equation}\label{eq:poly2}
\|\chi_{x}P_\omega(\lambda)\chi_{y}\|_1 \leq 
\textstyle{\left(\frac 3 2\right)^{2\nu}}
\langle x \rangle^{2\nu} \langle y \rangle^{2\nu}
\end{equation}
 for all $x,y\in\mathbb{R}^d$ and
 $\mu_\omega$-almost every $\lambda$. ($\| \ \|_1$ denotes
the trace norm in $\mathcal{H}$.)
\end{lemma}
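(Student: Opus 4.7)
The plan is to exploit the Banach-space isomorphism $\tau:\mathcal{B}(\mathcal{H})\to\mathcal{B}(\mathcal{H}_+,\mathcal{H}_-)$ together with the bounds \eqref{eq:normchi} on $\chi_x$ viewed as an operator between weighted spaces. Specifically, from \eqref{eq:normchi} the multiplication operator $\chi_y$ is bounded from $\mathcal{H}$ into $\mathcal{H}_+$ with norm at most $(3/2)^\nu\langle y\rangle^{2\nu}$, and $\chi_x$ is bounded from $\mathcal{H}_-$ into $\mathcal{H}$ with the analogous bound. Since $P_\omega(\lambda)\in\mathcal{T}_1(\mathcal{H}_+,\mathcal{H}_-)$, the composition
\[
\chi_x\,P_\omega(\lambda)\,\chi_y\colon \mathcal{H}\xrightarrow{\chi_y}\mathcal{H}_+\xrightarrow{P_\omega(\lambda)}\mathcal{H}_-\xrightarrow{\chi_x}\mathcal{H}
\]
is trace class on $\mathcal{H}$, and the standard submultiplicativity of the trace norm under pre- and post-composition with bounded operators gives
\[
\|\chi_x P_\omega(\lambda)\chi_y\|_1
\le \|\chi_x\|_{\mathcal{H}_-,\mathcal{H}}\,\|P_\omega(\lambda)\|_{\mathcal{T}_1(\mathcal{H}_+,\mathcal{H}_-)}\,\|\chi_y\|_{\mathcal{H},\mathcal{H}_+}.
\]

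It then remains to show $\|P_\omega(\lambda)\|_{\mathcal{T}_1(\mathcal{H}_+,\mathcal{H}_-)}\le 1$ for $\mu_\omega$-a.e.\ $\lambda$. By definition of the norm on $\mathcal{T}_1(\mathcal{H}_+,\mathcal{H}_-)$ this equals $\|T_-^{-1}P_\omega(\lambda)T_+^{-1}\|_1$, and \eqref{a16a} already tells us that the operator $T_-^{-1}P_\omega(\lambda)T_+^{-1}\in\mathcal{T}_1(\mathcal{H})$ has trace exactly one. So the claim reduces to verifying that this operator is nonnegative on $\mathcal{H}$, in which case its trace norm coincides with its trace. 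Combining with the two bounds from \eqref{eq:normchi} immediately yields the stated estimate $(3/2)^{2\nu}\langle x\rangle^{2\nu}\langle y\rangle^{2\nu}$.

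The main obstacle, then, is the positivity of $\widetilde{P}_\omega(\lambda):=T_-^{-1}P_\omega(\lambda)T_+^{-1}$. This should be extracted from the generalized eigenfunction expansion \eqref{eq:expantrace}: rewriting that identity under $\tau^{-1}$ shows that $\widetilde{P}_\omega(\lambda)$ is the Radon-Nikodym density (with respect to $\mu_\omega$) of the positive operator-valued measure $J\mapsto T^{-1}E_\omega(J\cap\mathcal{I})T^{-1}$ on $\mathcal{T}_1(\mathcal{H})$. Since this measure takes values in nonnegative trace class operators and $\mathcal{T}_1(\mathcal{H})$ is separable, a standard selection/differentiation argument yields that $\widetilde{P}_\omega(\lambda)\ge 0$ for $\mu_\omega$-a.e.\ $\lambda$. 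The self-duality relation \eqref{pdagger} is the companion statement on the $\mathcal{H}_\pm$ side. Once positivity is in hand the conclusion is immediate; the bulk of the work is packaging the measure-theoretic disintegration cleanly, and for that one can simply quote the construction in \cite[Section~3]{KKS}.
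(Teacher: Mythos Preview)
Your proposal is correct and follows exactly the paper's approach: the paper's proof is the single inequality
\[
\|\chi_{x}P_\omega(\lambda)\chi_{y}\|_1\le \|\chi_{x}\|_{\mathcal{H}_-,\mathcal{H}}\,\|P_\omega(\lambda)\|_{\mathcal{T}_1(\mathcal{H}_+,\mathcal{H}_-)}\,\|\chi_{y}\|_{\mathcal{H},\mathcal{H}_+},
\]
followed by an appeal to \eqref{eq:normchi} and \eqref{a16a}. You are in fact more careful than the paper, since you make explicit the positivity of $T_-^{-1}P_\omega(\lambda)T_+^{-1}$ needed to pass from trace equal to one to trace norm equal to one; the paper tacitly absorbs this into the construction from \cite[Section~3]{KKS}, which you correctly invoke.
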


\begin{proof}
Since
\begin{eqnarray} \label{eq:poly1}
\|\chi_{x}P_\omega(\lambda) \chi_{y}\|_1\leq 
 \|\chi_{x}\|_{\mathcal{H}_-,\mathcal{H}}
 \|P_\omega(\lambda)\|_{\mathcal{ T}_1(\mathcal{H}_+,\mathcal{H}_-)}
\|\chi_{y}\|_{\mathcal{H},\mathcal{H}_+} \, ,
\end{eqnarray}
(\ref{eq:poly2}) follows from (\ref{eq:normchi}) and (\ref{a16a}). 
\end{proof}

(GEE) suffices for proofs of exponential localization \cite{FMSS,VDK} 
and dynamical localization \cite{GDB,Ge}.  But for strong
 dynamical localization we need to strengthen (\ref{KK11}).

\bigskip
\noindent{\textbf{(SGEE)}} \ Property \emph{(GEE) holds
with
\begin{equation} \label{KK1}
\mathbb{E}\left\{\left[\textrm{tr}_{\mathcal{H}}
\left(T^{-1} f(H_\omega)E_{\omega}( \mathcal{I}) T^{-1}\right)\right]^2\right\} <\infty  \ .
\end{equation}
}

\medskip
It follows that 
\begin{equation}\label{eq:ingr1}
\mathbb{E}\left\{\left[\mathrm{tr}_{\mathcal{H}} \left(T^{-1}
E_{\omega}(J\cap \mathcal{I}) T^{-1}
 \right)\right]^2 \right\} < \infty
\end{equation}
for all bounded Borel sets $J$, so 
we have a stronger version of (\ref{eq:ingr1bis2}):
\begin{equation} \label{eq:ingr1bis}
 \mathbb{E}\left\{\left[ \mu_\omega (J)  \right]^2\right\} < \infty \ \ \mbox{for}  \ \ J  \  \
\mbox{bounded.}
\end{equation}

\begin{remark} Estimate (\ref{KK1}) is true for the
 usual ergodic random operators.  In fact one usually proves
 the stronger 
\begin{equation} \label{KK12}
\left\| \textrm{tr}_{\mathcal{H}} 
\left(T^{-1} f(H_\omega) E_{\omega}( \mathcal{I})T^{-1}\right)
 \right\|_{\mathrm{L}^\infty(\Omega, \mathcal{F},\mathbb{P})} < \infty \, ,
\end{equation}
 which is a hypothesis in \cite{DS}.  For a proof,
see   \cite[Theorem~1.1]{KKS} for classical wave operators
and   \cite{simonbkg},\cite[Theorem A.1]{GK3} for Schr\"odinger operators.
\end{remark}

\subsection{Finite volume operators and their properties} \label{notations}

Throughout these lectures we use the sup norm in $\mathbb{R}^d$: 
\begin{equation}|x|= \max \{|x_i|,\;i=1,\ldots,d\} \ .
\end{equation}
By $\Lambda_L(x)$  we denote the
 open box (or cube) of side $L>0$ centered at
 $x\in\mathbb{R}^d$:
\begin{equation}
\Lambda_L(x) =\left \{y\in\mathbb{R}^d  ; \ \, |y-x|<\textstyle{\frac L 2}\right\} ,
\end{equation}
and by $\overline{\Lambda}_L(x)$ the closed box.
We set
\begin{equation}
\chi_{x,L} = \chi_{\Lambda_L(x)}, \quad \chi_x = \chi_{x,1}
=\chi_{\Lambda_1(x)}.
\end{equation}

 \emph{We will usually take boxes
centered at sites $x \in\mathbb{Z}^d$ with side  $L\in2\mathbb{N}$.} 
Given such a box  $\Lambda_L(x)$, we set
\begin{equation}
\Upsilon_{L}(x) = \left\{y \in \mathbb{Z}^d; \ |y-x| = \textstyle{\frac L 2} - 1  \right\},
\end{equation}
 and define its boundary belt  by
\begin{equation} \label{bdrybelt}
\tilde{\Upsilon}_{L}(x) =  \overline{\Lambda}_{L-1}(x) \backslash 
{\Lambda}_{L-3}(x)  = 
\bigcup_{y \in \Upsilon_{L}(x) } \overline{\Lambda}_1(y) \ ;
\end{equation}
it has the characteristic function
\begin{equation}
\Gamma_{x,L} = \chi_{\tilde{\Upsilon}_{L}(x)} =
 \sum_{y \in \Upsilon_{L}(x) } \chi_y \; \; a.e.
\end{equation}
Note that
\begin{equation}
|\Upsilon_L(x)| = (L-1)^d -(L-2)^d = d\int_{L-2}^{L-1} x^{d-1} dx 
\le d (L-1)^{d-1}\, .
\end{equation}

We shall suppress the dependency of a box on its center  when not necessary.
When using boxes $\Lambda_{\ell}$  contained in bigger boxes
$\Lambda_L$, we shall need to know that the small box is inside the
belt $\tilde{\Upsilon}_L$ of the bigger one. If
  $L > \ell + 3 $ and $x\in \mathbb{Z}^d$, 
we  say that
\begin{equation}
\Lambda_{\ell}\sqsubset\Lambda_{L}(x) \quad \mbox{if}
\quad\Lambda_{\ell}\subset
\Lambda_{L-3}(x)\ .
\end{equation}

Very often we will require $L\in 6\mathbb{N}$;   given $K \ge 6$, we set
\begin{equation}
[K]_{6\mathbb{N}}= \max \{ L \in 6\mathbb{N}; \; L \le K\}.
\end{equation}

The multiscale analysis requires the notion of a 
{\em finite volume operator}, a ``restriction"
$H_{\omega,x,L}$ of $H_{\omega}$ to the box
$\Lambda_L(x)$  where the ``randomness based
 outside the box $\Lambda_L(x)$" is not taken into account. 
Usually
 $H_{\omega,x,L}$ is defined as the restriction of $H_\omega$, either
to the open  box $\Lambda_L(x)$ with 
 Dirichlet boundary condition, or to the closed box
$\overline{\Lambda}_L(x)$ with  periodic boundary condition. The
operator $H_{\omega,x,L}$ then acts on
$\mathrm{L}^2(\Lambda_L(x),{\rm d}x; \mathbb{C}^n)$.  But   
$H_{\omega,x,L}$ may also be defined as acting on the whole space,
by throwing away the random coefficients ``based
 outside the box $\Lambda_L(x)$"; this is usually done for random
Landau operators \cite{CH2,W1,GK4}. In all cases the
finite volume operators have either compact resolvent or 
are  relatively compact perturbations of the free Hamiltonian.

\begin{definition} The ergodic random operator $H_\omega$ is called standard
if it has a finite volume
restriction, i.e.,  if for each  $x \in\mathbb{Z}^d$ and  $L\in2\mathbb{N}$
there is a measurable map
 $H_{\omega,x,L}$
from the probability space $(\Omega,\mathcal{F},\mathbb{P})$
 to self-adjoint operators on
$\mathrm{L}^2(\Lambda_L(x),{\rm d}x; \mathbb{C}^n)$
(or all such mappings taking values as  self-adjoint operators on 
 $\mathrm{L}^2(\mathbb{R}^d,{\rm d}x; \mathbb{C}^n)$), such that 
\begin{equation}
U(y) H_{\omega,x,L} U(y)^*=  H_{\tau_y(\omega), x+y,L}\;\; 
\mbox{for  all $ y \in \mathbb{Z}^d$},
\end{equation}
where $U(y)$ is as in \eqref{cov}.
  We write
$R_{\omega,x,L}(z)= (H_{\omega,x,L}-z)^{-1}$ for the resol vent of
the finite volume operator $H_{\omega,x,L}$
 and $E_{{\omega,x,L}}(\cdot)$ for its spectral projection.
\end{definition}

The multiscale analysis and its consequences require
 certain properties
of the finite volume restriction of the ergodic random operator.
 These properties
are routinely verified for the usual ergodic random operators
 (e.g.,  \cite{FS,FMSS,VDK,HM,CH1,CH2,FK1,FK2,W1,St,KK1,KK2,GK3,GK4}. 

The first property is  \emph{independence at a distance} (IAD)
for the finite volume operators.
 It says that 
if boxes are far apart, events defined by the restrictions of the random
operator $H_\omega$ to these  boxes are independent.  
 This assumption can be relaxed in some ways by suitable
 modifications of the multiscale analysis (e.g., 
\cite{VDK2,KSS2,FLM,Z}).

An 
event is said to be based on
the box $\Lambda_L(x) $ if it is determined by conditions on the 
finite volume operator
$H_{\omega,x,L}$.  Given  $\varrho > 0$, we say that two boxes
$\Lambda_L(x) $ and $\Lambda_{L^\prime}(x^\prime) $  are
\emph{$\varrho$-nonoverlapping} if  
$|x - x^\prime| > \frac{L + L^\prime} 2  + \varrho$,  i.e., if
$\mathrm{dist}(\Lambda_L(x),\Lambda_{L^\prime}(x^\prime)  ) 
>\varrho$. 
\bigskip

\noindent{\textbf{(IAD)}} \  
\emph{There exists $\varrho > 0$ such that events based on 
$\varrho$-non\-overlapping boxes are independent.}

\smallskip
The remaining properties are to hold in the fixed  open interval $\mathcal{I}$.

The first such property is reminiscent of the Simon-Lieb inequality (SLI) in
Classical Statistical Mechanics.  It relates resolvents in different scales.
In the lattice it is an immediate
consequence of the resolvent identity, in this context it was originally
used in \cite{FS}.  In the continuum, its proof requires interior estimates, and
was   proved in \cite{CH1}
for Schr\"odinger operators.
 It was adapted to
 classical wave operators in \cite{FK1}.  We state it in the form given in
\cite[Lemma 3.8]{KK1} for  classical wave operators and 
\cite[Theorem A.1]{GK3} for Schr\"odinger operators. 
 (The lattice requires slight modifications.)

\bigskip\noindent{\textbf{(SLI)}} \ 
\emph{For any compact interval $I \subset \mathcal{I}$  
there exists a finite constant $\gamma_{I}$,
such that,  given $ L,\ell^\prime,\ell^{\prime\prime}\in 2\mathbb{N}$,
  $x, y, y^\prime\in \mathbb{Z}^d$ with
$\Lambda_{\ell^{\prime\prime}}(y) 
\sqsubset \Lambda_{\ell^\prime}(y^\prime) \sqsubset
 \Lambda_{L}(x)$, then
for $\mathbb{P}$-almost every $\omega$, if $E \in {I}$ with
 $E \notin
\sigma(H_{\omega,x,L})\cup  \sigma(H_{\omega,y^\prime,\ell'})$,
   we have 
\begin{equation} 
\| \Gamma_{x,L} R_{\omega,x,L}(E) \chi_{y,\ell^{\prime\prime}}\|
\le \gamma_{{I}}\,
\| \Gamma_{y^\prime,\ell'} R_{\omega,y',\ell'}(E) 
\chi_{y,\ell^{\prime\prime}}\| \,
\| \Gamma_{x,L} R_{\omega,x,L}(E)
 \Gamma_{y^\prime,\ell'}\| \,   . \label{sli1}
\end{equation}}

\begin{remark} Property (SLI)  will be used  in the following way: 
 We will take 
$ \ell^{\prime\prime} =\frac \ell 3$ with $\ell \in 6 \mathbb{N}$, and 
$\ell'=k\frac \ell 3$ with $3\le k \in \mathbb{N}$.  By a  \emph{cell}
 we will mean a closed box 
$\overline{\Lambda}_{\frac \ell 3}(y^{\prime\prime})$, with 
$y^{\prime\prime} \in \frac \ell 6\mathbb{Z}^d$.  We define
 $\mathbb{Z}_{\mathrm{even}}$
and $\mathbb{Z}_{\mathrm{odd}}$ to be the sets of even and odd
integers.  We take
 $y \in \frac \ell 6\mathbb{Z}^d$, so $\chi_{y,\frac \ell 3}$ is the characteristic
function of a cell. We want 
 the closed box 
$\overline{\Lambda}_{\ell'}(y')$ to be exactly
covered by   cells (in effect, by $k^d$ cells); thus
 we specify
$y^\prime  \in \frac \ell 3\mathbb{Z}^d= 
\frac \ell 6\mathbb{Z}_{\mathrm{even}}^d$  if $k$ is odd, and 
$y^\prime \in \frac\ell 3\mathbb{Z}^d  + 
 \frac \ell 6\left(1,1, \ldots, 1\right)=
\frac \ell 6\mathbb{Z}_{\mathrm{odd}}^d$
 if $k$ is even.
We then replace the boundary belt $ \tilde{\Upsilon}_{\ell'}(y')$ 
(of width $1$) by a thicker
belt $\tilde{\Upsilon}_{\ell',\ell}(y')$ of width $\frac \ell 3$.  To do so,
we set
\begin{equation} \label{2Upsilon}
\Upsilon_{\ell',\ell}(y') = 
\left\{y^{\prime\prime} \in \frac \ell 3\mathbb{Z}^d; \ |y^{\prime\prime}-y'| 
= \frac {\ell'}2 - \frac {\ell}6  \right\},
\end{equation}
and define the boundary $\ell$-belt of $ \Lambda_{\ell'}(y') $ 
 by
\begin{equation}
\tilde{\Upsilon}_{\ell',\ell}(y')=  \overline{\Lambda}_{\ell'}(y') \backslash 
{\Lambda}_{\ell' - \frac{2\ell}3}(y')  = 
\bigcup_{y^{\prime\prime} \in \Upsilon_{\ell',\ell}(y') }
\overline{\Lambda}_{{\frac \ell 3}}(y^{\prime\prime}) \ ,
\end{equation}
with characteristic function
\begin{equation}
\Gamma_{y',\ell',\ell} = \chi_{\tilde{\Upsilon}_{\ell',\ell}(y')} =
 \sum_{y^{\prime\prime} \in\Upsilon_{\ell',\ell}(y')  } \chi_{y^{\prime\prime}, {{\frac \ell 3}}} \; \; a.e.
\end{equation}
Note that
\begin{equation}
|\Upsilon_{\ell',\ell}(y')|= (k^d-(k-2)^d) \le k^d  \, .
\end{equation}
Since
$\Gamma_{y',\ell',\ell}\Gamma_{y',\ell'}= \Gamma_{y',\ell'}$,
the projection $\Gamma_{\ell'}$ on the belt of $\Lambda_{\ell'}$ can be
replaced by the projection over the thicker belt of width ${\frac \ell 3}$, which can
be decomposed in boxes of side ${\frac \ell 3}$.  Thus (\ref{sli1}) yields
\begin{eqnarray} \label{sli}
\| \Gamma_{x,L} R_{\omega,x,L}(E) \chi_{y,{\frac \ell 3}}\| 
\!\le\!
k^d\gamma_{I}   
\| \Gamma_{y',\ell'} R_{\omega,y',\ell'}(E) \chi_{y,{\frac \ell 3}}\| 
\| \Gamma_{x,L} R_{\omega,x,L}(E) \chi_{y^{\prime\prime},{{\frac \ell 3}}}\|
\ \ 
\end{eqnarray} 
for some $y^{\prime\prime} \in \Upsilon_{\ell',\ell}(y')$. 
Performing the SLI, i.e., using the estimate (\ref{sli}), we moved from
the cell center $y$ to the cell center $y^{\prime\prime}$.
\end{remark}

\begin{remark} \label{rsli}
While performing a multiscale analysis we will use (\ref{sli})
  with either $\ell'=\ell$ (for good boxes), or
some  $\ell'=k{\frac \ell 3}$, $k >3$, which will be the side of a bad box. Note that in the first case,
$k=3$, and the geometric factor is $3^d-1 \leq 3^d$. In that case note also
that we must have $y=y^\prime$ and
$|y^{\prime\prime}-y|={\frac \ell 3}$, so after performing the SLI
we moved to an adjacent cell center, i.e., by ${\frac \ell 3}$ in the sup norm.  (Recall that 
we are using the sup norm in $\mathbb{R}^d$, so we may
 move both sidewise and along the diagonals.)
\end{remark}

The second property is an estimate of generalized eigenfunctions 
in terms of finite volume resolvents.  It is not needed for the
 multiscale analysis,
but it plays an important role in obtaining localization from
the multiscale analysis \cite{FMSS,VDK,FK1,GK1}. We call it an \emph{eigenfunction
decay inequality} (EDI), since it translates decay of finite volume resolvents
 into decay of generalized eigenfunctions ; we present it as proved in 
\cite[Lemma 3.9]{KK1} and \cite[Theorem A.1]{GK3}.
  It is closely related to property (SLI), the proofs
being very similar.
\bigskip

\noindent{\textbf{(EDI)}} \ 
\emph{For any compact interval $I \subset \mathcal{I}$  
there exists a finite constant $\tilde{\gamma}_{I}$,
such that for $\mathbb{P}$-almost every $\omega$,   given   a generalized 
eigenfunction $\psi$ of $H_\omega$  with generalized eigenvalue 
 $E \in  I$,
we have for any
$x \in  \mathbb{Z}^d$ and
$L\in 2\mathbb{N}$ with
 $E \notin \sigma(H_{\omega,x,L})$ that
\begin{equation}
\| \chi_x \psi \| \le  \tilde{\gamma}_{I} 
\| \Gamma_{x,L} R_{\omega,x,L}(E) \chi_{x}\|
\|  \Gamma_{x,L} \psi\| \ .\label{ediNew}
\end{equation}
}

\medskip
Typically we have $\tilde{\gamma}_{I} =\gamma_{I} $, with 
$\gamma_{I}$ as in (\ref{sli1}). We will use the following consequence of
 (\ref{ediNew}):
\begin{equation}
\| \chi_x \psi \| \le  d \tilde{\gamma}_{I} L^{d-1}
\| \Gamma_{x,L} R_{\omega,x,L}(E) \chi_{x}\|
\|  \chi_{y} \psi\| \label{edi}
\end{equation}
for some $y \in \Upsilon_{L}(x)$.

The third property is an ``a priori" estimate
on the  average \emph{number of eigenvalues} (NE) of 
finite volume random operators in a fixed, bounded interval.  It is
 usually proved by a deterministic
argument, using the well known bound for the  Laplacian 
\cite{CH1,FK1,FK2,KK1}.
It is, of course, entirely obvious in the lattice.
\bigskip

\noindent{\textbf{(NE)}} \  
\emph{For any compact interval $I \subset \mathcal{I}$  
there exists a finite constant $C_{I}$ such that  
\begin{equation} \label{density}
\mathbb{E} \left(\mathrm{tr}_\mathcal{H} 
E_{{\omega,x,L}}({I}) \right) \le C_{I} L^d 
\end{equation}
for all $x \in \mathbb{Z}^d$ and 
$L \in 2\mathbb{N}$.}

\bigskip
 The final property is a form of 
\emph{Wegner's estimate} (W),
 a probabilistic  estimate on the size of the resolvent. 
 It is a crucial ingredient for the multiscale analysis, where it is used
 to control the bad regions.
\bigskip

\noindent{\textbf{(W)}} \  
\emph{ For some $b \ge1$ there exists  
 a finite constant
  $Q_{I} $ for each compact interval $I \subset \mathcal{I}$,  
 such that 
\begin{equation}
\mathbb{P}\left\{ \mbox{\rm dist}(\sigma (H_{\omega,x,L}),E)\le \eta
\right\} \le Q_{I} \eta L^{bd} \ , \label{wegner}
\end{equation}
 for all  $E \in{I}$,  $0<\eta\le 1$, $x \in \mathbb{Z}^d$, and 
$L \in 2\mathbb{N}$.}

\begin{remark} \label{apresW}
In practice we have either $b=1$ or $b=2$ in the Wegner
estimate (\ref{wegner}). For some random Schr\"odinger operators
with Anderson potential
we may have $b=1$
\cite{CH1,Kl} (including the Landau Hamiltonian). For classical waves in
random media, (\ref{wegner}) has been proven with $b=2$
\cite{FK1,FK2,KK2}.
More  recently the correct
volume dependency (i.e., $b=1$) 
was obtained in \cite{CHN,CHKN,HK} for random Schr\"odinger operators, at the price
of  losing a  bit in
the $\eta$ dependency; more precisely, the right hand side of
(\ref{wegner}) is replaced by $ Q_{a,I} \eta^a L^{d} $ for
any $0<a <1$.  In these lectures,
we shall use (\ref{wegner}) as stated, the  modifications in our methods
required for the other
forms of (\ref{wegner}) being obvious.
Our methods may also accomodate properties (NE) and (W) being valid only
for large $L$, and/or property (W) being valid only for
 $\eta <\eta_L$ for some appropriate $\eta_L$, say $\eta_L = L^{-r}$,
some $r>0$,
or $\eta_L= \mathrm{e}^{-L^\beta}$ for some $0<\beta<1$.
 The latter is of importance if one wants to deal with singular probability
measures like Bernoulli \cite{CKM,KLS,DBG,DSS1}.
\end{remark}

\begin{remark} \label{apresW3}
In the continuum one usually proves the stronger estimate
\cite{HM,CH1,CH2,FK1,FK2,KK2,CHN}:
\begin{equation}
\mathbb{E} \left(\mathrm{tr}_\mathcal{H}
E_{H_{\omega,x,L}}\left([E-\eta,E+\eta]\right) \right)
\le Q_{I} \eta L^{bd} \ , \label{wegner+}
\end{equation}
from which (\ref{wegner}) follows by Chebychev's inequality.
The estimate (\ref{density}) is used as an ``a priori" estimate
in the proof of (\ref{wegner+}).
\end{remark}

\section{The  bootstrap multiscale analysis} \label{secMSA}

  Given a standard ergodic random operator $H_\omega$,
 the multiscale analysis  looks for localization  by studying the 
probability of decay
 of the  finite volume resolvent from the center of a box  $\Lambda_L(x)$ to its
boundary belt as measured by 
\begin{equation}
\| \Gamma_{x,L} R_{\omega,x,L}(E) \chi_{x,{{\frac L 3}}}\| \, .
\end{equation}

We start with three definitions, which characterize ``good boxes" in a given scale
by different types of decay relative to the scale.

\begin{definition} \label{dsuitable}
Given $E \in \mathbb{R}$,
$x \in  \mathbb{Z}^d$ and $L \in 6 \mathbb{N}$,  with
$E \notin \sigma(H_{\omega,x,L})$, 
we say that
 the box 
$\Lambda_L(x) $ is
\begin{description}
\item[(i)\!\!] $(\omega,\theta, E)$-suitable for a given $\theta >0$  if 
\begin{equation}
\| \Gamma_{x,L} R_{\omega,x,L}(E) \chi_{x,{{\frac L 3}}}\| 
 \le \frac{1}{L^\theta} \,  .
\label{suitable}
\end{equation}

\item[(ii)\!\!]  $(\omega,\zeta, E)$-sub-exponentially-suitable for a
 given $\zeta\in(0,1)$ if
\begin{equation}
\| \Gamma_{x,L} R_{\omega,x,L}(E) \chi_{x,{{\frac L 3}}}\| 
\le\mathrm{e}^{-L^\zeta} \,  .
\label{subexp_suitable}
\end{equation}

\item[(iii)\!\!]  $(\omega,m, E)$-regular  for a
 given $m>0$ if
\begin{equation} 
\| \Gamma_{x,L} R_{\omega,x,L}(E)\chi_{x,{{\frac L 3}}} 
\| \le {\rm e}^{-m\frac{L}{2}} 
\, . \label{regular}
\end{equation}

\end{description}
\end{definition}

\begin{remark} \label{regsuit}
Note that a box $\Lambda_L(x) $ is $(\omega,\theta, E)$-suitable if and only if
it is $(\omega, m, E)$-regular, where $m=2\theta\frac{\log L} L$.   Similarly,
$\Lambda_L(x) $ is $(\omega,\zeta, E)$-sub-exponentially-suitable
 if and only if
it is $(\omega, 2L^{\zeta-1}, E)$-regular.

\end{remark}

The multiscale analysis converts decay with high  probability at a large enough scale
into  decay with better probabilities at higher scales. We state 
 the strongest version, the bootstrap multiscale analysis
of Germinet and Klein \cite[Theorem 3.4]{GK1}.

\begin{definition} \label{sigmaSMA}  Let $H_\omega$ be a 
 standard ergodic random operator with property (IAD).
The multiscale analysis  region
$\Sigma_{\mathrm{MSA}}$ for 
 $H_\omega$ is the set of $E \in\Sigma$ for which there exists some
open interval $I \ni E$, such that 
 given any $\zeta$, $0<\zeta<1 $, and $\alpha$,
 $ 1<\alpha<\zeta^{-1}$, there is a length scale
$L_0\in 6 \mathbb{N}$
 and a mass $m>0$, so if we
 set 
$L_{k+1} = [L_k^\alpha]_{6\mathbb{N}}$, $k=0,1,\dots$,
 we have
\begin{equation} \label{msres-bstp}
\mathbb{P}\,\left\{R\left(m, L_k, I,x,y\right) \right\}\ge 1
-\mathrm{e}^{-L_{k}^\zeta}
\end{equation} for all $k=0,1,\ldots$, and  $x, y \in \mathbb{Z}^d$
with 
$|x-y| > L_k +\varrho$, where 
\begin{eqnarray}
\lefteqn{R(m,L, I,x,y) =} \label{eq:defsetR} \\ 
&&\{\mbox{$\omega$; for every}
\; E^\prime \in I
\ \mbox{either} \ \Lambda_L(x) \ \mbox{or} \ 
\Lambda_L(y) \  \mbox{is} \  \mbox{$(\omega,m, E^\prime)$-regular} \} \ .
\nonumber
\end{eqnarray} 
\end{definition}

\begin{theorem}[{\cite[Theorem 3.4]{GK1}}] \label{tBMSA}  Let $H_\omega$ be a 
 standard ergodic random operator with  (IAD) and properties
 (SLI), (NE) and (W) in an open interval $\mathcal{I}$.  Given
 $\theta > bd$, for each $E \in \mathcal{I}$   there exists 
 a  finite scale 
${\mathcal{L}}_\theta (E)={\mathcal{L}}_\theta (E,b,d,\varrho)$,
 bounded on compact subintervals of 
$\mathcal{I}$, 
 such that, if  for a given
  $E_0 \in \Sigma\cap {\mathcal{I}} $   we
can verify  that
\begin{equation}
\mathbb{P}\{\Lambda_{L_0}(0) \;\;
\mbox{is $(\omega,\theta, E_0)$-suitable} \}
> 1 - \frac 1 {841^d} \ 
 \label{hypH}
\end{equation}
at some scale $L_0\in 6 \mathbb{N}$ with
$L_0 > {\mathcal{L}}_\theta (E_0)$, 
then $E_0 \in \Sigma_{\mathrm{MSA}}$.
\end{theorem}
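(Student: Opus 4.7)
The plan is to realize the bootstrap as four successive single-energy and energy-interval multiscale analyses, each amplifying the probabilistic and geometric content of the previous one. The input is the one-scale, single-energy estimate \eqref{hypH} with a relatively weak probability $1-1/841^d$ (whose small geometric denominator is dictated by the number of pairs of non-overlapping subboxes entering the deterministic step of the first induction). The output is the existence of an open interval $I\ni E_0$, a mass $m>0$, and a length scale such that, for any prescribed $\zeta\in(0,1)$ and $\alpha\in(1,\zeta^{-1})$, the Germinet--Klein geometric-sum estimate \eqref{msres-bstp} holds along the sequence $L_{k+1}=[L_k^\alpha]_{6\mathbb{N}}$.

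First I would run a single-energy multiscale analysis in the classical Dreifus--Klein style, but with the twist that it only needs to transform the weak initial estimate at $E_0$ into polynomial resolvent decay of any preassigned large exponent, at any preassigned polynomial probability rate. The inductive step uses (SLI) in the sharpened form given in Remark around \eqref{sli} to cover a large box by order $(L/\ell)^d$ small cells and to transport the resolvent estimate from one scale to the next, while (W) is used to throw away the small-probability event that the energy lies too close to the finite-volume spectrum (this is where the requirement $\theta>bd$ enters: it ensures the Wegner loss $L^{bd}$ is beaten by the decay factor $L^{-\theta}$). A standard combinatorial probability bound then combines the independence of sufficiently separated subboxes (IAD) with the geometric counting of \eqref{density} to gain a factor in probability at each scale. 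The threshold $\mathcal{L}_\theta(E_0)$ is the scale at which the fixed-point iteration for this induction converges, and it depends only on $b$, $d$, $\varrho$, $\theta$ and on the deterministic constants $\gamma_I, C_I, Q_I$ through the compact interval $I$ containing $E_0$.

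Second, I would feed this polynomial output into a single-energy multiscale analysis of the same geometric shape but with a different bookkeeping: a polynomial input of arbitrarily large order and arbitrarily large polynomial probability is enough to drive a Borel--Cantelli--type iteration whose output is sub-exponential decay $\mathrm{e}^{-L^\zeta}$ with sub-exponential probability $1-\mathrm{e}^{-L^{\zeta'}}$. Third, I would pass from the single energy $E_0$ to an open interval $I\ni E_0$ by a Wegner-net argument: the resolvent identity and the analyticity of $z\mapsto R_{\omega,x,L}(z)$ allow one to propagate the bound on $\Gamma_{x,L}R_{\omega,x,L}(E_0)\chi_{x,L/3}$ to all $E'\in I$ at distance at most $\eta_L$ from a discrete net of energies, after discarding the Wegner-bad events provided by \eqref{wegner}; choosing $\eta_L=\mathrm{e}^{-L^\zeta}$ and the net cardinality $|I|/\eta_L$ keeps the total excluded probability bounded by the allotted sub-exponential budget. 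Finally, a fourth energy-interval multiscale analysis of exponential type amplifies the sub-exponential interval estimate into the form \eqref{msres-bstp} for any $\zeta\in(0,1)$ and $\alpha\in(1,\zeta^{-1})$, yielding $E_0\in\Sigma_{\mathrm{MSA}}$.

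The main obstacle is the transition from single-energy to energy-interval estimates, i.e., Step~3. The difficulty is that a naive covering of $I$ by a net of $\eta_L^{-1}$ points, combined with (W), would leak a probability of size $|I|\eta_L^{-1}\cdot Q_I\eta_L L^{bd}=|I|Q_I L^{bd}$, which is not small. The Germinet--Klein remedy, which I would implement here, is to interleave the net argument with the inductive step itself: at each new scale one only tests finitely many energies produced by the previous scale, chooses $\eta_L$ of sub-exponential size, and uses the already-available sub-exponential improvement of probability from Steps~1--2 to absorb the Wegner cost. Handling the interplay between the geometric parameter $\alpha$, the decay exponent $\zeta$, and the scale growth $L_{k+1}=[L_k^\alpha]_{6\mathbb{N}}$ without losing a factor is the quantitatively delicate part; once those parameters are chosen consistently (which forces $\alpha\zeta<1$), the Borel--Cantelli closure of the iteration gives \eqref{msres-bstp} and thus $E_0\in\Sigma_{\mathrm{MSA}}$.
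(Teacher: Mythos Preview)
Your four-stage bootstrap architecture is correct in spirit, but the mechanism you propose for passing from a single energy $E_0$ to an open interval $I\ni E_0$ is not the one used, and your own analysis shows why your version is problematic. You identify the obstacle correctly: a naive Wegner-net over $I$ leaks probability $|I|Q_I L^{bd}$, which is not small. Your proposed fix (``interleave the net argument with the inductive step'') is too vague to close this gap, and in fact no net argument is needed.

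The paper avoids the net entirely. The interval is opened \emph{once}, at the starting scale, by the first resolvent identity: if $\Lambda_{L_0}(0)$ is $(\omega,m_0,E_0)$-regular and $\mathrm{dist}(\sigma(H_{\omega,0,L_0}),E_0)>L_0^{-s}$, then $\Lambda_{L_0}(0)$ is $(\omega,m_0',E)$-regular for all $E$ in a fixed interval $I(\delta)$ whose width depends only on $L_0$. Thereafter one runs an \emph{energy-interval} multiscale analysis directly on the two-box event $R(m,L,I,x,y)$ of \eqref{eq:defsetR}. The crucial probabilistic ingredient you are missing is the estimate
\[
\mathbb{P}\bigl\{\mathrm{dist}\bigl(\tilde\sigma(H_{\omega,u,\ell_1}),\tilde\sigma(H_{\omega,v,\ell_2})\bigr)\le\eta\bigr\}\le C_{\tilde I_0}Q_{\tilde I_0}\,\eta\,\ell_1^{bd}\ell_2^{d}
\]
for non-overlapping boxes, obtained by conditioning on the spectrum of one box via (IAD), counting its eigenvalues in $\tilde I_0$ via (NE), and applying (W) to the other box at each such eigenvalue. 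This handles all $E\in I$ simultaneously without any net, and is precisely where (NE) enters---not, as you suggest, in the single-energy step. Your ordering is also slightly off: the paper opens the interval (Theorem~\ref{msa2}) \emph{before} boosting to sub-exponential probabilities (Theorem~\ref{thmsuit-subexp}), so that the latter can be applied uniformly at every $E\in I(\delta_1)$; the interval is then re-opened with sub-exponential control in Theorem~\ref{thmmsa-subexp}, and a final finite covering of $I(\delta_1)$ by intervals of width $\delta_2$ yields \eqref{msres-bstp}.
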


\begin{remark} Explicit estimates on
 ${\mathcal{L}}_\theta (E)$ are given
in \cite{GK4}.
\end{remark}

We call Theorem~\ref{tBMSA} the bootstrap multiscale analysis because its proof
uses  four different multiscale analyses, each one bootstrapping into the next. 
We present them in the order in which they are used.

\begin{theorem}[{\cite[Lemma 36]{FK1}, \cite[Theorem 5.1]{GK1}}]  \label{thmmsa1}
Let $H_\omega$ be a standard ergodic random operator with (IAD)
and properties (SLI) and 
(W) in an  open interval $\mathcal{I}$.  Let $I_0$ be a compact subinterval of
$\mathcal{I}$, 
$E_0 \in I_0$, and  $\theta > bd$.  Given  an odd integer
$Y \ge 11$,  for any $p$ with
   $0 <p < \theta - bd $ we can find
${\mathcal{Z}} = {\mathcal{Z}}( d,\varrho,Q_{I_0},  \gamma_{I_0},b,\theta, p, Y) $,
such that if for some $L_0>{\mathcal{Z}}$, $L_0\in 6\mathbb{N}$, we have
 \begin{equation} \label{hypH2} 
\mathbb{P}\{\Lambda_{L_0}(0) \;\;\mbox{is $(\theta, E_0)$-suitable} \}
 > 1 -  (3Y-4)^{-2d},
\end{equation}
 then, setting $L_{k+1}= Y L_k$, $k=0,1,2,\ldots$,  
 we have that   
  \begin{equation}
\mathbb{P}\{\Lambda_{L_k}(0) \;\;\mbox{is $(\theta, E_0)$-suitable} \}
 \ge 1 - \frac{1}{L_k^p} 
\label{x11}
\end{equation}
for all $k \ge {\mathcal{K}}$, where $  {\mathcal{K}}= {\mathcal{K}}(p,Y,L_0) < \infty$.
\end{theorem}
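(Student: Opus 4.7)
The plan is to prove (\ref{x11}) by induction on the scale index $k$, with $L_{k+1}=YL_k$. The deterministic heart of the induction is the Simon--Lieb iteration: cover $\Lambda_{L_{k+1}}(0)$ by overlapping sub-boxes $\Lambda_{L_k}(y_j)$ whose centers $y_j$ sit on a sub-lattice of $\tfrac{L_k}{3}\mathbb{Z}^d$ (the odd integer $Y$ is exactly what makes this covering clean), and call $\Lambda_{L_k}(y_j)$ \emph{bad at $E_0$} if it is not $(\theta,E_0)$-suitable. Starting from $\chi_{0,L_{k+1}/3}$, apply (SLI) in the form (\ref{sli}) repeatedly to propagate out to the boundary belt $\Gamma_{0,L_{k+1}}$; each application of (SLI) across a good sub-box gains a factor $3^d\gamma_{I_0}L_k^{-\theta}$ and moves the cell-center by $L_k/3$ in the sup norm, so after at most $\sim Y$ iterations one reaches $\Gamma_{0,L_{k+1}}$. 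Provided the telescoping path avoids the bad sub-boxes, this yields
\[
\|\Gamma_{0,L_{k+1}}R_{\omega,0,L_{k+1}}(E_0)\chi_{0,L_{k+1}/3}\|\le (3^d\gamma_{I_0})^{cY}L_k^{-\theta(cY)}\cdot\|R_{\omega,0,L_{k+1}}(E_0)\|,
\]
and combined with the crude resolvent bound $\|R_{\omega,0,L_{k+1}}(E_0)\|\le L_{k+1}^{\theta}$ furnished by a Wegner-type separation $\mathrm{dist}(\sigma(H_{\omega,0,L_{k+1}}),E_0)>L_{k+1}^{-\theta}$, one verifies $\le L_{k+1}^{-\theta}$ for $L_k$ large enough, which is precisely $(\theta,E_0)$-suitability of $\Lambda_{L_{k+1}}(0)$.

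\textbf{Probabilistic step.} The deterministic argument fails only on (i) the Wegner event $\mathrm{dist}(\sigma(H_{\omega,0,L_{k+1}}),E_0)\le L_{k+1}^{-\theta}$, whose probability is at most $Q_{I_0}L_{k+1}^{bd-\theta}$ by (W), and (ii) the event that there exist two $\varrho$-nonoverlapping bad sub-boxes among the $\Lambda_{L_k}(y_j)$, since then no good path from the center to the boundary belt can be drawn. By (IAD) the two bad-box events factorize, and the count of disjoint pairs of centers in the covering is $\le\tfrac{1}{2}(3Y-4)^{2d}$. If the inductive hypothesis gives $\mathbb{P}(\Lambda_{L_k}(y)\text{ bad})\le L_k^{-p}$, then event (ii) has probability $\le\tfrac{1}{2}(3Y-4)^{2d}L_k^{-2p}=\tfrac{1}{2}(3Y-4)^{2d}Y^{2p}L_{k+1}^{-2p}$, which is $\le\tfrac{1}{2}L_{k+1}^{-p}$ once $L_k$ is large enough. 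Likewise, since $p<\theta-bd$, event (i) contributes at most $\tfrac{1}{2}L_{k+1}^{-p}$ at sufficiently large scales. Summing the two bounds closes the induction.

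\textbf{Launching the induction.} The hypothesis (\ref{hypH2}) gives bad probability only $\le (3Y-4)^{-2d}$ at the initial scale $L_0$, which is a constant rather than $L_0^{-p}$. To bridge the gap, one reapplies the two-bad-box IAD argument at constant probability level: at each scale the bad probability is essentially squared (up to the combinatorial factor $(3Y-4)^{2d}$ and the Wegner contribution), so after finitely many steps $k\ge\mathcal{K}(p,Y,L_0)$ it falls below $L_k^{-p}$; from that point on the polynomial induction described above takes over. The threshold $\mathcal{Z}$ is chosen to make the Wegner and combinatorial contributions both fit under $\tfrac{1}{2}L_{k+1}^{-p}$ uniformly in $k$, which is why it depends on $d,\varrho,Q_{I_0},\gamma_{I_0},b,\theta,p,Y$.

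\textbf{Main obstacle.} The delicate part is the combinatorial/geometric bookkeeping in the deterministic SLI iteration: one must set up the sub-box covering so that \emph{every} cell $\chi_{z,L_k/3}\subset\Lambda_{L_{k+1}-L_k/3}(0)$ lies in some $\Lambda_{L_k}(y_j)\sqsubset\Lambda_{L_{k+1}}(0)$, verify that telescoping the (SLI) factors along a path avoiding the (at most one, up to $\varrho$-overlap) bad sub-box still leaves room for the loss $L_{k+1}^{\theta}$ from the single uncontrolled resolvent factor, and track that all these constants fit together with the $(3Y-4)^{-2d}$ threshold in (\ref{hypH2}). Choosing $Y\ge 11$ is exactly what guarantees enough room both for the geometric iteration and for producing genuinely $\varrho$-nonoverlapping pairs in the IAD step.
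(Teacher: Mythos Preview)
Your overall strategy is correct and matches the approach of \cite[Theorem~5.1]{GK1} (to which the paper defers for this proof): linear scales $L_{k+1}=YL_k$, an SLI iteration for the deterministic step, control of the failure probability via a two-nonoverlapping-bad-boxes event (using (IAD)) plus a Wegner event, and the quadratic recursion $p_{k+1}\lesssim (3Y-4)^{2d}p_k^2+O(L_{k+1}^{bd-\theta})$ that bootstraps the initial constant bound $p_0<(3Y-4)^{-2d}$ down to $p_k\le L_k^{-p}$ after finitely many steps.

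There is, however, a gap in your deterministic step. You claim the SLI telescoping proceeds ``along a path avoiding the (at most one) bad sub-box,'' but the SLI in the form \eqref{sli} does not let you choose the next cell: the new center $y''\in\Upsilon_{\ell',\ell}(y')$ is whichever one dominates in the decomposition of $\Gamma_{y',\ell'}$, and the iteration may land in the bad region---indeed, if the bad sub-box contains the starting cell $x_0$ there is nothing to avoid. The remedy, exactly as in the paper's proof of Theorem~\ref{msa2} at \eqref{slibad}--\eqref{slibad2}, is to apply the SLI once with an enlarged box $\Lambda_{3L_k}(u)$ enclosing the bad cluster and to control the resulting factor $\|R_{\omega,u,3L_k}(E_0)\|$ by a separate Wegner estimate at that intermediate scale; this exits the bad region in one step and the good iteration resumes. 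Your probabilistic step must then include a union bound over the $O(Y^d)$ possible locations of this intermediate box (cf.\ \eqref{w1}--\eqref{probw}), which is harmless since the added Wegner contribution is still $O(L_{k+1}^{bd-\theta})$. With this correction your argument goes through.
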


The value of Theorem~\ref{thmmsa1} is that it requires a very weak
starting hypothesis, in which the bound on the probability of the bad event is
independent of the scale, and its conclusion, 
 in view of Remark~\ref{regsuit}, gives the starting hypothesis of  a modified
 form of the 
 Dreifus--Klein
multiscale analysis,  Theorem~\ref{msa2} below. 
Theorem~\ref{thmmsa1} is an enhancement of
{\cite[Lemma
36]{FK1}}, adapted to our  assumptions and definitions. 
It is  proven  by a multiscale analysis which combines  an idea of Spencer
 \cite[Theorem 1]{Sp}
with the methods of \cite{VDK}.

\begin{theorem}[{\cite[Theorem 32]{FK1},\cite[Theorem 5.2]{GK1}}] \label{msa2}
Let $H_\omega$ be a standard ergodic random operator 
with (IAD)
and properties (SLI) and 
(W) in an  open interval $\mathcal{I}$.   Let
 $I_0$ be a compact subinterval of  $\mathcal{I}$,
$E_0 \in I_0$,  $\theta > bd$, and   $0< p < \theta - bd $.
Then given $p^\prime > p$ and  
 $1<\alpha< \min\left\{\frac {2{p}+2d}{{p} + 2d}, 
\frac{\theta }{{p} +bd}\right\}$,  there is
${\mathcal{B}}=
{\mathcal{B}}(d,b,\varrho, Q_{I_0},\gamma_{I_0},\theta,p,p', \alpha) $,
 such that, if at some finite scale $L_0\geq {\mathcal{B}}$ we verify that
\begin{equation}
\mathbb{P}\{\Lambda_{L_0}(0) \;\;\mbox{is 
$(2\theta {\log L_0 \over L_0}, E_0)$-regular} \} \ge 1 - \frac{1}{L_0^{p^\prime}} \ ,
\label{p1}
\end{equation}
then there exists  
 $\delta_1=
\delta_1(d,b,\theta,p, \alpha,L_0)>0$,
such that if we set
   $I(\delta_1)=[E_0-\delta_1, E_0+\delta_1]\cap I_0$,
$m_0= 2\theta {\log L_0 \over L_0}$, and
$L_{k+1} = [L_k^\alpha]_{6\mathbb{N}}$, $k=0,1,\dots$,
we have 
\begin{equation}
\mathbb{P}\{\Lambda_{L_k}(0) \;\;\mbox{is 
$(\frac{m_0}{2}, E)$-regular} \} 
\ge 1 - \frac{1}{L_k^{p}}  \; \; \; \; \mbox{for all} \; \;   E\in I(\delta_1)\, ,
\label{p111th}
\end{equation}
for  all $k=0,1,\ldots$.

If in addition $H_\omega$ has property (NE) in  $\mathcal{I}$ and we have
 $\theta > 2p +(b+1)d $,  then, fixing   a
compact subinterval ${\tilde{I}_0}$ of $\mathcal{I}$ with
$I_0 \subset \tilde{I}_0^\circ $,   there is a scale   
$\widetilde{\mathcal{B}}=
\widetilde{\mathcal{B}}(d,b,\varrho, Q_{{\tilde{I}_0}}, C_{\tilde{I}_0},
\gamma_{I_0}, \mathrm{dist}(I_0,\mathcal{I}\backslash{\tilde{I}_0}), 
\theta,p,p', \alpha) $,
 such that, if at some finite scale $L_0\geq \widetilde{\mathcal{B}}$
we verify \eqref{p1}, we  have
\begin{equation} \label{msres-bstp2}
\mathbb{P}
\left\{R\left({\textstyle{\frac{m_0}{2}}}, L_k, I(\delta_1),x,y\right) \right\}\ge 1 - 
\frac{1}{L_k^{2p}} \;\; \mbox{for all}\;  \; x, y \in \mathbb{Z}^d\, , 
|x-y| > L_k +\varrho \ , 
\end{equation}
for all $k=0,1,\ldots$.
\end{theorem}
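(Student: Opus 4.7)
The strategy is induction on the scale $L_k=[L_{k-1}^\alpha]_{6\mathbb{N}}$, propagating a polynomial probability bound with a decreasing mass sequence $m_k\searrow m_0/2$, namely $\mathbb{P}\{\Lambda_{L_k}(0)\text{ is }(\omega,m_k,E)\text{-regular}\}\ge 1-L_k^{-p}$ uniformly for $E\in I(\delta_1)$. The base case follows from hypothesis \eqref{p1} at the single energy $E_0$ together with the second resolvent identity $R_{\omega,0,L_0}(E)=R_{\omega,0,L_0}(E_0)+(E-E_0)R_{\omega,0,L_0}(E_0)R_{\omega,0,L_0}(E)$, which transfers decay at $E_0$ to nearby $E$ provided $\|R_{\omega,0,L_0}(E)\|$ is not too large. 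An auxiliary use of property (W) at scale $L_0$ produces an event of probability $\ge 1-Q_{I_0}L_0^{-r+bd}$ on which $\mathrm{dist}(E,\sigma(H_{\omega,0,L_0}))>L_0^{-r}$ for all $E\in I(\delta_1)$ provided $\delta_1\le\tfrac12 L_0^{-r}$; choosing $r$ small enough absorbs the extra loss into the probability gap $p'>p$.

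\textbf{Probabilistic inductive step.} Assume the bound at scale $L_k$. Tile $\Lambda_{L_{k+1}}(0)$ by $\lesssim(L_{k+1}/L_k)^d$ sub-boxes of side $L_k$ centered on $\frac{L_k}{3}\mathbb{Z}^d$; any two non-overlapping sub-boxes are independent by (IAD). The inductive hypothesis and a union bound over the $\lesssim(L_{k+1}/L_k)^{2d}$ pairs show that, with probability at least $1-C(L_{k+1}/L_k)^{2d}L_k^{-2p}$, no two non-overlapping sub-boxes are simultaneously non-regular; the restriction $\alpha<(2p+2d)/(p+2d)$ is what makes this probability $\ge 1-\tfrac12 L_{k+1}^{-p}$. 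Combining with one further application of (W) at scale $L_{k+1}$ with $\eta=L_{k+1}^{-q}$, of probability $\ge 1-Q_{I_0}L_{k+1}^{-q+bd}$ (the condition $\alpha<\theta/(p+bd)$ allows a compatible choice of $q$), yields a good event $\mathcal{G}_k$ of probability $\ge 1-L_{k+1}^{-p}$ on which (i) at most one sub-box of $\Lambda_{L_{k+1}}(0)$ fails to be $(\omega,m_k,E)$-regular, and (ii) $\mathrm{dist}(E,\sigma(H_{\omega,0,L_{k+1}}))>L_{k+1}^{-q}$.

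\textbf{Deterministic chaining.} On $\mathcal{G}_k$, I iterate the cellular form \eqref{sli} of (SLI) from the cell at $0$ outward: each step moves the source cell by $L_k/3$ (cf.\ Remark~\ref{rsli}) and contributes a factor $3^d\gamma_{I_0}\,\mathrm{e}^{-m_k L_k/2}$ through a regular sub-box; when the chain is forced through the unique bad sub-box, absorb it into a super-box of side $\ell'=kL_k/3$ via the thicker belt of Remark~\ref{rsli}, paying a single Wegner factor $L_{k+1}^q$. After $\approx 3L_{k+1}/(2L_k)$ iterations the chain reaches the boundary belt of $\Lambda_{L_{k+1}}(0)$, yielding $(\omega,m_{k+1},E)$-regularity with
\[
 m_{k+1}\;=\;m_k\Bigl(1-\tfrac{2\log(3^d\gamma_{I_0})}{m_k L_k}\Bigr)\;-\;O\Bigl(\tfrac{\log L_{k+1}}{L_{k+1}}\Bigr).
\]
Since $\sum_k L_k^{-1}\log L_k$ converges for any $\alpha>1$, choosing $L_0$ large enforces $m_k\ge m_0/2$ for all $k$, completing the induction and proving \eqref{p111th}.

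\textbf{Energy-interval conclusion and main obstacle.} For \eqref{msres-bstp2}, Chebyshev applied to property (NE) via \eqref{density} gives, with probability $\ge 1-L_k^{-d-1}$, a set of at most $C_{I_0}L_k^{d+1}$ eigenvalues of $H_{\omega,x,L_k}$ in $I(\delta_1)$; discretising $I(\delta_1)$ on a grid of spacing $L_k^{-s}$ and using the perturbative transfer from the base case converts the single-energy bound \eqref{p111th} into a bound uniform in $E'\in I(\delta_1)$. The (IAD)-independence of the events for $\Lambda_{L_k}(x)$ and $\Lambda_{L_k}(y)$ when $|x-y|>L_k+\varrho$ then squares the single-box probability to $L_k^{-2p}$. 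The main obstacle is the three-way bookkeeping in the probabilistic step, where the combinatorial factor $(L_{k+1}/L_k)^{2d}$, the Wegner volume factor $L_{k+1}^{bd}$, and the targeted probability $L_{k+1}^{-p}$ must be simultaneously balanced; this is precisely what forces the hypothesis $1<\alpha<\min\{(2p+2d)/(p+2d),\ \theta/(p+bd)\}$, and controlling the accompanying mass recursion so that $m_k$ stays above $m_0/2$ requires the summability of $L_k^{-1}\log L_k$.
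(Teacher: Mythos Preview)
Your treatment of the single-energy conclusion \eqref{p111th} is essentially the paper's argument: the same base-case perturbation off $E_0$ via the resolvent identity plus (W), the same induction allowing at most one non-regular sub-box (the event $Q^{(2)}$), the same SLI chaining with one Wegner-controlled super-box around the bad region, and the same mass recursion. That part is fine.

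The energy-interval conclusion \eqref{msres-bstp2}, however, has a genuine gap. Your claim that ``the (IAD)-independence of the events for $\Lambda_{L_k}(x)$ and $\Lambda_{L_k}(y)$ \ldots\ squares the single-box probability to $L_k^{-2p}$'' does not work: the complement of $R(m,L_k,I,x,y)$ is $\{\exists\,E\in I:\ \Lambda_{L_k}(x)\ \text{and}\ \Lambda_{L_k}(y)\ \text{both $(\omega,m,E)$-singular}\}$, and the offending energy $E$ is random and couples the two boxes. Independence holds for each \emph{fixed} $E$, but not for the union over $E\in I$; nor can you reduce to a single box, since for any $\omega$ there is always \emph{some} $E\in I$ (near an eigenvalue of $H_{\omega,x,L_k}$) at which $\Lambda_{L_k}(x)$ is singular, so a uniform-in-$E$ single-box bound is impossible. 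Your discretisation/NE step does not repair this.

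The paper does \emph{not} derive \eqref{msres-bstp2} from \eqref{p111th}; it runs a \emph{separate} induction directly on $P_L(m,x,y)=\mathbb{P}\{R(m,L,I,x,y)^{\mathrm c}\}$. Two ingredients are different from the single-energy analysis. First, the deterministic step now allows up to \emph{three} non-overlapping singular sub-boxes (one forbids the event $Q^{(4)}$ rather than $Q^{(2)}$); the probabilistic bound on $Q^{(4)}_0(I,\ell,L,m)$ uses the inductive hypothesis on $P_\ell$ and IAD to factor into pairs, yielding $\bigl((3L/\ell)^{2d}\ell^{-2p}\bigr)^2$, which beats $L^{-2p}$ precisely when $\alpha<\frac{2p+2d}{p+2d}$. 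Second, and this is where (NE) actually enters, one needs the two-box Wegner estimate
\[
\mathbb{P}\bigl\{\mathrm{dist}\bigl(\tilde\sigma(H_{\omega,u,\ell_1}),\tilde\sigma(H_{\omega,v,\ell_2})\bigr)\le\eta\bigr\}\ \le\ C_{\tilde I_0}Q_{\tilde I_0}\,\eta\,\ell_1^{bd}\ell_2^{d},
\]
proved by conditioning on one box, applying (W) to the other at each of its eigenvalues, and then averaging the eigenvalue count via (NE). This controls the event that the Wegner bounds fail \emph{simultaneously} in both $\Lambda_L(x)$ and $\Lambda_L(y)$ at some common $E\in I$, and it is exactly what forces the extra hypothesis $\theta>2p+(b+1)d$ (one needs $s>2p+(b+1)d$ with $s<\theta$). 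Your proposal is missing both the $Q^{(4)}$ mechanism and the two-box Wegner estimate, and without them the energy-interval step does not close.
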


 Theorem~\ref{msa2} is an enhancement of the  Dreifus-Klein multiscale
analysis   \cite{VDK}. The crucial difference  is that
Theorem~\ref{msa2} allows the mass to go to zero as the inital scale $L_0$
goes to infinity, which may seem very surprising at the first sight. Indeed, in
the original versions of the MSA ( e.g., \cite{FS,FMSS,VD,VDK,CH1}), the
mass {\it has} to be fixed first in order to know how large $L_0$ has to be
chosen.   Figotin and Klein  {\cite[Theorem 32]{FK1}} were the first to note that the
 mass may depend on the 
scale, as  in
(\ref{p1}) above,  i.e., a mass proportional to $\frac{\log L_0} {L_0}$. 
\emph{Thus the starting hypothesis (\ref{p1}) only requires the decay of the resolvent
 on finite boxes  to be polynomially small  in the scale, 
  not 
exponentially small.} Note also that by using the SLI as in (\ref{sli}), so
 we only move between cells, we only need to require $p>0$ as in \cite{KSS}, not $p>d$
as in \cite{VDK}  (we need to consider only the
 $\left(3 \frac L \ell\right)^d$ cells that are cores of boxes of side $\ell$ inside the bigger
box of side $L$, instead of $ L^d$ boxes as in  \cite{VDK}).

Only  the weaker conclusion (\ref{p111th}) is needed for
 the bootstrap multiscale
analysis;  we also stated (\ref{msres-bstp2}) because
 it is the usual conclusion of this
multiscale analysis.  Note that for  conclusion (\ref{p111th}) we may take
$p^\prime =p$ with $\delta_1=0$.

Theorems~~\ref{thmmsa1}  and \ref{msa2}
only yield polynomially decaying 
probabilities for bad events.   Germinet and Klein \cite{GK1} introduced 
 new versions of these
multiscale analyses that give sub-exponential decay for the probabilities of
 bad events.

\begin{theorem}
\label{thmsuit-subexp}
Let $H_\omega$ be a standard ergodic random operator with (IAD)
and properties (SLI) and 
(W) in an  open interval $\mathcal{I}$.  Let
 $I_0$ be a compact subinterval of
$\mathcal{I}$,
$E_0 \in I_0$, and  $\zeta_0\in(0,1)$.
 Given  an odd integer
$Y \ge 11^{\frac 1{1-\zeta_0}}$,  for any $\zeta_1$ with $0<\zeta_1<\zeta_0$
we can find
${\mathcal{Z}} = {\mathcal{Z}}( d,\varrho,Q_{I_0},  \gamma_{I_0},b,\zeta_0,\zeta_1, Y) $,
such that if for some $L_0>{\mathcal{Z}}$, $L_0\in 6\mathbb{N}$, we have
 \begin{equation}
\mathbb{P}\{\Lambda_{L_0}(0) \;\;\mbox{is $(\zeta_0, E_0)$-sub-exponentially-suitable} \} 
> 1 - (3Y-4)^{-2d} ,
\label{hyp1_subexp}
\end{equation}
 then, setting $L_{k+1}= Y L_k$, $k=0,1,2,\ldots$,  
 we have that   
\begin{equation}
\mathbb{P}\{\Lambda_{L_k}(0) \;\;\mbox{is $(\zeta_0, E_0)$-sub-exponentially-suitable} \}
 \ge 1 -\mathrm{e}^{-L_k^{\zeta_1}} 
\label{xxk_subexp}
\end{equation}
  for all $k \ge {\mathcal{K}}$, where $  {\mathcal{K}}= {\mathcal{K}}(\zeta_0,\zeta_1,Y,L_0) < \infty$. 
\end{theorem}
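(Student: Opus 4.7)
\medskip

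\noindent\emph{Proof proposal.}
The plan is to mirror the Spencer--type single-energy multiscale analysis used for Theorem~\ref{thmmsa1}, replacing the polynomial decay rate $\theta$ by the sub-exponential rate $\zeta_0$, and replacing polynomial probability bounds by sub-exponential ones. We will work at the single fixed energy $E_0 \in I_0$ and iterate along the geometric sequence $L_{k+1}=YL_k$. The inductive quantity is
$$
p_k := \mathbb{P}\bigl\{\Lambda_{L_k}(0)\ \text{is not $(\omega,\zeta_0,E_0)$-sub-exponentially-suitable}\bigr\},
$$
and we aim to show that once $L_0$ exceeds the threshold $\mathcal{Z}$, the bound $p_k \leq e^{-L_k^{\zeta_1}}$ holds for all $k\geq\mathcal{K}$.

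\medskip

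The single-scale step is the usual dichotomy. If the large box $\Lambda_{L_{k+1}}(0)$ fails to be $(\omega,\zeta_0,E_0)$-sub-exponentially-suitable, then either (i) $E_0$ lies within distance $\eta_{k+1}:=e^{-L_{k+1}^{\zeta_0}}$ of $\sigma(H_{\omega,0,L_{k+1}})$, an event of probability at most $Q_{I_0}\eta_{k+1}L_{k+1}^{bd}$ by (W); or (ii) the finite-volume resolvent exists on $\Lambda_{L_{k+1}}(0)$, and then chaining SLI (in the cell form~\eqref{sli}) across a covering of the large box by $(\omega,\zeta_0,E_0)$-sub-exponentially-suitable sub-boxes of scale $L_k$ would force the large box to inherit the required decay; so at least two disjoint bad sub-boxes at scale $L_k$ must occur, which by (IAD) (valid because $L_k>\varrho$) are independent events. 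A standard counting of pairs of $\varrho$-nonoverlapping sub-boxes inside $\Lambda_{L_{k+1}}(0)$ thus yields
$$
p_{k+1}\leq C\, Y^{2d}\, p_k^{\,2}\;+\;Q_{I_0}\,L_{k+1}^{bd}\,e^{-L_{k+1}^{\zeta_0}},
$$
with $C=C(d,\varrho)$. Here the SLI chain of length $\sim Y$ generates a cost of order $(\gamma_{I_0}Y^d)^{Y}$ which must be dominated by the gain $e^{-(Y-Y^{\zeta_0})L_k^{\zeta_0}}=e^{-Y^{\zeta_0}(Y^{1-\zeta_0}-1)L_k^{\zeta_0}}$ afforded by chaining sub-exponential decays. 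This is precisely where the hypothesis $Y\geq 11^{1/(1-\zeta_0)}$ enters: it guarantees $Y^{1-\zeta_0}\geq 11$, providing the same amount of geometric slack that the factor $11$ provides in Theorem~\ref{thmmsa1}, and hence the recursion displayed above closes provided $L_0$ is larger than a threshold $\mathcal{Z}=\mathcal{Z}(d,\varrho,Q_{I_0},\gamma_{I_0},b,\zeta_0,\zeta_1,Y)$.

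\medskip

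To run the bootstrap, set $q_k := C\,Y^{2d}p_k$. The recursion becomes $q_{k+1}\leq q_k^{2} + r_{k+1}$ with a tiny Wegner remainder $r_{k+1}\leq CQ_{I_0}Y^{2d}L_{k+1}^{bd}e^{-L_{k+1}^{\zeta_0}}$. The assumed starting bound $p_0<(3Y-4)^{-2d}$ gives $q_0<CY^{2d}(3Y-4)^{-2d}<1/2$ for $Y$ large, and the purely quadratic recursion would then yield $q_k\leq(q_0)^{2^{k}}$, i.e.\ $p_k\leq C^{-1}Y^{-2d}\eta^{2^k}$ for some $\eta<1$, doubly-exponentially small in $k$. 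Since $L_k^{\zeta_1}=L_0^{\zeta_1}Y^{k\zeta_1}$ grows only singly-exponentially in $k$, the iterate eventually satisfies $p_k\leq \tfrac12 e^{-L_k^{\zeta_1}}$, and the Wegner remainder, being bounded by $e^{-L_{k+1}^{\zeta_0}}$ times a polynomial, is absorbed into the other half since $\zeta_1<\zeta_0$ and $L_0>\mathcal{Z}$. One then checks that the bound $p_k\leq e^{-L_k^{\zeta_1}}$ is self-reproducing under the recursion, completing the induction for all $k\geq\mathcal{K}(\zeta_0,\zeta_1,Y,L_0)$. The main technical obstacle is the careful bookkeeping in Step~(ii): verifying that the SLI chain indeed produces the decay rate $e^{-L_{k+1}^{\zeta_0}}$ in the big box whenever a sufficient set of scale-$L_k$ sub-boxes is good; the condition $Y^{1-\zeta_0}\geq 11$ is precisely what permits the geometric and norm factors $(\gamma_{I_0}Y^d)^{O(Y)}$ accumulated along the chain to be swallowed by the exponential slack.
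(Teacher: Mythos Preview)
The paper does not prove this theorem but refers to \cite{GK1}; your overall template---a single-energy Spencer-type iteration along $L_{k+1}=YL_k$ with a Wegner/bad-box dichotomy, and the observation that $Y^{1-\zeta_0}\ge 11$ supplies the slack for the deterministic SLI chain---is the right one.

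The gap is in the probabilistic recursion. Forbidding only \emph{two} nonoverlapping bad sub-boxes yields $p_{k+1}\le CY^{2d}p_k^{\,2}+r_{k+1}$, hence (ignoring the remainder) $p_k\lesssim \eta^{2^k}$ with $\eta=q_0$ a fixed constant $<1$ (note $q_0=CY^{2d}p_0$ is bounded only by a constant of order $3^{-2d}$, not by anything small in $L_0$). You then claim this eventually dominates $e^{-L_k^{\zeta_1}}$ because the latter ``grows only singly-exponentially in $k$''. But
\[
e^{-L_k^{\zeta_1}}=\exp\!\bigl(-L_0^{\zeta_1}(Y^{\zeta_1})^k\bigr)
\quad\text{and}\quad
\eta^{2^k}=\exp\!\bigl(-2^k\log(1/\eta)\bigr)
\]
are \emph{both} of the form $\exp(-c\,a^k)$; which one wins depends on whether $2>Y^{\zeta_1}$. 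With $Y\ge 11^{1/(1-\zeta_0)}$ this typically fails: e.g.\ $\zeta_0=\tfrac12$, $\zeta_1=\tfrac25$ force $Y\ge 121$ and $Y^{\zeta_1}\ge 121^{2/5}>6$. In that regime neither your ``eventually'' claim nor the asserted self-reproduction of $p_k\le e^{-L_k^{\zeta_1}}$ holds, since self-reproduction under the quadratic map requires $2L_k^{\zeta_1}\ge Y^{\zeta_1}L_k^{\zeta_1}+O(1)$, i.e.\ $Y^{\zeta_1}\le 2$. (The two-bad-box recursion \emph{does} suffice for the polynomial target $L_k^{-p}$ of Theorem~\ref{thmmsa1}, because there one only needs $2^k$ to beat the linear function $kp\log Y$; that is why your argument looks right by analogy but is not.)

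The remedy, as in \cite{GK1}, is to forbid not two but $K$ nonoverlapping bad sub-boxes with $K$ a fixed integer chosen so that $K>Y^{\zeta_1}$. The recursion $p_{k+1}\le (CY^d)^K p_k^{\,K}+r_{k+1}$ then genuinely self-reproduces the bound $e^{-L_k^{\zeta_1}}$, and the $K^k$-iteration reaches the target at some finite $\mathcal{K}$. The deterministic SLI step must correspondingly tolerate up to $K-1$ bad sub-boxes, i.e.\ a singular region of width $O(KL_k)$ inside $\Lambda_{YL_k}$; the hypothesis $Y\ge 11^{1/(1-\zeta_0)}$ is what leaves enough good cells for the chain after excising this region. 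Compare the remark after Theorem~\ref{thmmsa-subexp}: sub-exponential probability decay requires allowing more bad boxes.
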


\begin{theorem} \label{thmmsa-subexp}
Let $H_\omega$ be a standard ergodic random operator with (IAD)
and properties (SLI), (NE) and 
(W) in an  open interval $\mathcal{I}$.   Let
 $I_0$ be a compact subinterval of
$\mathcal{I}$,  
$E_0 \in I_0$,  ${\tilde{I}_0}$ a
compact subinterval  of $\mathcal{I}$ with
$I_0 \subset \tilde{I}_0^\circ $, and $0<\zeta_2<\zeta_1<\zeta_0<1$.  Then,
 given
 $1<\alpha<\zeta_0/\zeta_1$, there is 
${\mathcal{C}}=
{\mathcal{C}}(d,b,\varrho, Q_{{\tilde{I}_0}}, C_{\tilde{I}_0},
\gamma_{I_0}, \mathrm{dist}(I_0,\mathcal{I}\backslash{\tilde{I}_0}),
\zeta_0,\zeta_1, \zeta_2,\alpha)$,
such that, if at some finite scale $L_0 \ge{\mathcal{C}}$, $L_0 \in 6\mathbb{N}$, we verify that
\begin{equation} 
\mathbb{P}\{\Lambda_{L_0}(0) \;\;\mbox{is 
$(2L_0^{\zeta_0-1}, E_0)$-regular} \} \ge 1 -\mathrm{e}^{-L_0^{\zeta_1}} \, ,
\label{p1_subexp}
\end{equation}
then there exists 
 $\delta_2=
\delta_2(\zeta_0,\zeta_1, L_0)>0$ such that, if we set
  $I(\delta_2)=[E_0-\delta_2, E_0+\delta_2]\cap I_0$,
$m_0= 2L_0^{\zeta_0-1}$, and
$L_{k+1} = [L_k^\alpha]_{6\mathbb{N}}$, $k=0,1,\dots$,
we have 
\begin{equation} \label{msres_subexp}
\mathbb{P}\left\{R\left(\textstyle{{m_0 \over 2}},
L_k, I(\delta_2),x,y\right) \right\}\ge 1 -\mathrm{e}^{-L_{k}^{\zeta_2}}
\end{equation}
for all $k=0,1,2,\dots$  and  $x, y \in \mathbb{Z}^d$ with 
$|x-y| > L_k +\varrho$.
\end{theorem}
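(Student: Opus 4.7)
The proof proceeds by induction on $k$, following the structure of the polynomial-probability Dreifus--Klein multiscale analysis (Theorem~\ref{msa2}), but a single squaring step is insufficient for sub-exponential probabilities when $\alpha>1$, since $(e^{-L_k^{\zeta_2}})^2$ cannot beat $e^{-L_{k+1}^{\zeta_2}}=e^{-L_k^{\alpha\zeta_2}}$. The key new idea is to allow the number of ``bad'' $L_k$-cells inside a $\Lambda_{L_{k+1}}$-box to grow with the scale as $N_k\sim L_k^{(\alpha-1)\zeta_2}$, so that grouping $\lceil(N_k+1)/2\rceil$ of them into $\varrho$-non-overlapping pairs yields a bound $\binom{n_k^d}{N_k+1} e^{-\lceil(N_k+1)/2\rceil L_k^{\zeta_2}}$ from (IAD) and the inductive hypothesis, which is dominated by $e^{-L_{k+1}^{\zeta_2}}$ for our choice of $N_k$.

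For the base case, I would combine (\ref{p1_subexp}) with (W) and a resolvent perturbation argument to extend $(m_0,E_0)$-regularity of $\Lambda_{L_0}(0)$ to $(m_0/2,E)$-regularity for all $E\in I(\delta_2):=[E_0-\delta_2,E_0+\delta_2]\cap I_0$, conditional on $\mathrm{dist}(\sigma(H_{\omega,0,L_0}),I(\delta_2))>\eta_0$ for suitable $\eta_0$. Choosing $\delta_2\ll\eta_0\sim e^{-L_0^{\zeta_2}}/L_0^{bd}$, the Wegner event has probability $\le Q_{I_0}\eta_0 L_0^{bd}\le\tfrac12 e^{-L_0^{\zeta_2}}$; combined with $e^{-L_0^{\zeta_1}}$ from (\ref{p1_subexp}), the single-box bad probability is $\le e^{-L_0^{\zeta_2}}$ for $L_0\ge\mathcal{C}$ (using $\zeta_2<\zeta_1$), and (IAD) then squares this to the pair statement (\ref{msres_subexp}) at $k=0$.

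For the inductive step, assume (\ref{msres_subexp}) at scale $L_k$ and cover $\Lambda_{L_{k+1}}(x)$ by $n_k^d\approx L_k^{(\alpha-1)d}$ cells of side $L_k$. A deterministic lemma, obtained by iterating the cell form (\ref{sli}) of (SLI) from the center of $\Lambda_{L_{k+1}}(x)$ to its boundary belt, gives: if at most $N_k$ cells are $(m_0/2,E)$-non-regular and $\mathrm{dist}(E,\sigma(H_{\omega,x,L_{k+1}}))>\eta_{k+1}:=e^{-L_{k+1}^{\zeta_2}}/\mathrm{poly}(L_{k+1})$, then $\Lambda_{L_{k+1}}(x)$ is $(m_0/2,E)$-regular; mass preservation at $m_0/2$ follows from $N_k/n_k\to 0$ (which uses $\zeta_2<1$) together with the fact that the Wegner correction $N_k\log(1/\eta_{k+1})$ is dominated by $m_0 L_{k+1}$ (which uses $\zeta_2<\alpha/(2\alpha-1)$, a consequence of the hypotheses $\zeta_2<\zeta_1<\zeta_0<1$ and $\alpha<\zeta_0/\zeta_1$). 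The pair bad event at $L_{k+1}$ is then decomposed into four cases according to whether each big box fails through many bad cells ($A$) or through Wegner ($B$): the $A\cap A$ case uses the pair-grouping described above, cross-box (IAD), and a combinatorial union bound to obtain $\binom{n_k^d}{N_k+1}^2 e^{-(N_k+1)L_k^{\zeta_2}}\le\tfrac14 e^{-L_{k+1}^{\zeta_2}}$; the $B\cap B$ case uses cross-box (IAD) to square the Wegner factor, so $\eta_{k+1}$-grid discretization of $I(\delta_2)$ combined with (W) gives $|I(\delta_2)|\eta_{k+1}^{-1}(Q_{\tilde I_0}\eta_{k+1}L_{k+1}^{bd})^2\le\tfrac14 e^{-L_{k+1}^{\zeta_2}}$; the mixed cases $A\cap B,B\cap A$ combine one combinatorial factor with one Wegner factor under (IAD) and contribute the remaining $\tfrac12 e^{-L_{k+1}^{\zeta_2}}$.

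The main technical obstacle is the simultaneous tuning of $N_k$, $\eta_{k+1}$, and $\delta_2$: $N_k$ must be large enough that $\tfrac12(N_k+1)L_k^{\zeta_2}\ge L_{k+1}^{\zeta_2}$ after absorbing combinatorics, yet small enough that the iterated SLI chain still delivers total decay $e^{-m_0 L_{k+1}/4}$. The subtlety specific to the sub-exponential regime --- absent in Theorem~\ref{msa2} --- appears in the Wegner case $B\cap B$: a naive single-box Wegner union over energies would produce a factor $|I(\delta_2)|/\eta_{k+1}$ that sub-exponentially small $\eta_{k+1}$ cannot absorb, and the cure is precisely to square the Wegner factor via cross-box (IAD), which is exactly what the pair formulation of the statement permits. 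The dependence of $\mathcal{C}$ on $C_{\tilde I_0},Q_{\tilde I_0},\gamma_{I_0}$, and $\mathrm{dist}(I_0,\mathcal{I}\setminus\tilde I_0)$ encodes the coordination among (NE), (W), and (SLI) needed for all these estimates to hold uniformly from the same initial scale.
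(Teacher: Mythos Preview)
Your outline captures the essential innovation the paper identifies for Theorem~\ref{thmmsa-subexp}: allowing the number $N_k$ of permitted non-regular sub-boxes to grow with the scale, so that grouping them into independent pairs (each controlled by the inductive pair hypothesis \eqref{msres_subexp}) produces sub-exponentially small probabilities. The deterministic control via iterated (SLI), with $N_k/n_k\to 0$ and the constraint $(2\alpha-1)\zeta_2<\alpha$ for mass preservation, is also on target.

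Two points need correction. First, your four-way split misfires on the mixed terms: $\mathbb{P}\{B_y(I)\}=\mathbb{P}\{\exists E\in I:\text{ some }y\text{-side Wegner fails}\}$ is of order one once $|I(\delta_2)|$ is fixed, so the ``one Wegner factor'' in $A_x\cap B_y$ contributes nothing. The decomposition actually used (see the proof of Theorem~\ref{msa2}, equation \eqref{probest}) is the three-way inclusion $(A_x(E)\cup B_x(E))\cap(A_y(E)\cup B_y(E))\subset A_x(E)\cup A_y(E)\cup(B_x(E)\cap B_y(E))$, taken as a union over $E\in I$; this forces the \emph{single}-box bound $\mathbb{P}\{A_x(I)\}\lesssim e^{-L_{k+1}^{\zeta_2}}$ to hold without any help from the $y$-side, hence $N_k\ge c\,L_k^{(\alpha-1)\zeta_2}$ with $c>2$ rather than merely $N_k\sim L_k^{(\alpha-1)\zeta_2}$. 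Your first paragraph already asserts this stronger single-box bound, but the stated choice of $N_k$ delivers only half the needed exponent; once $N_k$ is corrected, the squaring in your $A\cap A$ step becomes superfluous rather than essential. Second, the paper's treatment of the joint Wegner term $\bigcup_E[B_x(E)\cap B_y(E)]$ does not discretize in energy but invokes the two-volume estimate \eqref{wegnerd}, conditioning on one box, using (NE) to count its eigenvalues in $\tilde I_0$, and then applying (W) to the other---this is where (NE) and $C_{\tilde I_0}$ enter the constant $\mathcal{C}$. Your energy-grid squaring of the single-box (W) is a legitimate alternative that in fact bypasses (NE), but note that $B_u(E)$ must also include the Wegner condition on the intermediate-scale boxes enveloping each cluster of bad cells (cf.\ \eqref{w14} and \eqref{slibad24}), not only on $\Lambda_{L_{k+1}}(u)$; otherwise the (SLI) chain has no way to jump across the singular regions.
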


The equivalent to (\ref{p111th}) holds in the context of
Theorem~\ref{thmmsa-subexp}, but it will not be needed.
In order to get sub-exponential decay of probabilities, the proof of 
Theorem~\ref{thmmsa-subexp}  allows the
number of bad boxes to grow with the scale.

\begin{proof}[Outline of the proof of Theorem~\ref{tBMSA}]

Theorem~\ref{tBMSA} is proven by a bootstrapping argument,
 making successive use of
 Theorems~\ref{thmmsa1},
\ref{msa2}, \ref{thmsuit-subexp}, and
\ref{thmmsa-subexp}.  We give here an  outline of the proof, and 
refer to \cite{GK1} for the full proof.
\begin{enumerate}
\item  Under the hypotheses  of Theorem~\ref{tBMSA}, 
we note that  hypothesis (\ref{hypH2})  of  Theorem~\ref{thmmsa1}
is the same as hypothesis (\ref{hypH})  for appropriate choices
of the parameters. 

\item We   apply Theorem~\ref{thmmsa1} obtaining
a sequence of length scales satisfying conclusion (\ref{x11}), with
its polynomial decay estimate of the probability of bad events. 

\item   In view of Remark \ref{regsuit},  it  follows that hypothesis
 (\ref{p1})  of  Theorem~\ref{msa2} is now satisfied at  suitably 
large scale. (We have bootstrapped from hypothesis (\ref{hypH}) 
to  hypothesis (\ref{p1})!).  Thus we can  apply Theorem~\ref{msa2} with
appropriate parameters, getting 
 $\delta_1 >0$ and a sequence of length scales satisfying
 conclusion (\ref{p111th}) for all $E \in I(\delta_1)$.  We set
$\delta_0=\delta_1$.

\item   We fix  $\zeta$ and $\alpha$  as in Theorem~\ref{tBMSA}, 
and pick $\zeta_0,\zeta_1,\zeta_2$ such that
$0<\zeta<\zeta_2<\zeta_1<\zeta_0<1<\alpha< \zeta_0\zeta_1^{-1}
 < \zeta_2^{-1}<\zeta^{-1} $. We note that we have bootstrapped  again: 
 hypothesis
 (\ref{hyp1_subexp}) of
 Theorem~\ref{thmsuit-subexp} is  satisfied at  all energies
 $E \in I(\delta_0)$ at appropriately large scale 
(the same for all $E$).
Applying Theorem~\ref{thmsuit-subexp}, we obtain a
 sequence of length scales for which
 conclusion (\ref{xxk_subexp}) holds for all $E \in I(\delta_0)$,
with its sub-exponential decay estimate of the probability of bad events. 

\item Using the last part of Remark \ref{regsuit}, we can see that we have 
bootstrapped to Theorem~\ref{thmmsa-subexp}: for any
 $0<\zeta_2< \zeta_1<\zeta_0<1$,
 hypothesis (\ref{p1_subexp}) is satisfied at all energies
 $E \in I(\delta_1)$ at sufficiently large scale
(depending on $\zeta_0,\zeta_1,\zeta_2 $ but independent of $E$).  We apply 
 Theorem~\ref{thmmsa-subexp}, obtaining $\delta_2>0$ and and
an exponentially growing sequence of length scales, depending on
  $\zeta_0,\zeta_1,\zeta_2 $, but independent of $E$, such that conclusion 
(\ref{msres_subexp}) holds for all  $E \in I(\delta_1)$.

\item  We have constructed in Step 5 
 a  sequence of
 length scales for which
(\ref{msres_subexp}) holds for all  $E \in I(\delta_0)$.
  Since the interval $I(\delta_0)$  (which is independent of $\zeta$)
 can be covered by 
$[\frac{\delta_0}{\delta_2}] +1$
 closed  intervals of length $\delta_2$,
we note that the desired conclusion (\ref{msres-bstp}) now follows from
(\ref{msres_subexp}), at the energies that are the centers
of the $[\frac{\delta_1}{\delta_2}] +1$ covering intervals, if we take $L_0$
appropriately large.

\end{enumerate}
\end{proof}

We will illustrate how to do a multiscale analysis by proving 
Theorem~\ref{msa2}
in Section~\ref{proofMSA}, and  refer to \cite{GK1}
for the proofs of Theorems~\ref{thmmsa1},
 \ref{thmsuit-subexp}, and
\ref{thmmsa-subexp}.

\section{From the multiscale analysis to localization} \label{secMASAloc}

The connection between the multiscale analysis and
 localization is given by the following theorem.

\begin{theorem} \label{tBMSA2}  
Let $H_\omega$ be a standard  ergodic random operator with (IAD)
and properties  (SGEE) and
 (EDI) in an open interval $\mathcal{I}$.  Then
\begin{equation}
\Sigma_{\mathrm{MSA}} \cap \mathcal{I} \subset
\Sigma_{\mathrm{EL}}  \cap \Sigma_{\mathrm{SSEHSKD}}
 \cap \mathcal{I} \, .
\end{equation}
\end{theorem}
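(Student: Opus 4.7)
The plan is to extract two consequences from the MSA conclusion \eqref{msres-bstp}: exponential decay of generalized eigenfunctions (giving $E_0\in\Sigma_{\mathrm{EL}}$) and sub-exponential decay of the spectral kernel $\chi_x P_\omega(\lambda)\chi_y$ uniformly in $\lambda$ (giving $E_0\in\Sigma_{\mathrm{SSEHSKD}}$). First I would fix $E_0\in\Sigma_{\mathrm{MSA}}\cap\mathcal{I}$; by Definition~\ref{sigmaSMA} there is an open interval $I\ni E_0$, which we may take inside $\mathcal{I}$, such that for any prescribed $\zeta\in(0,1)$ and $\alpha\in(1,\zeta^{-1})$, some mass $m>0$, some initial scale $L_0$, and the geometric sequence $L_{k+1}=[L_k^{\alpha}]_{6\mathbb{N}}$, estimate \eqref{msres-bstp} holds. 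For fixed $x\in\mathbb{Z}^d$ the set of $y\in\mathbb{Z}^d$ with $L_k<|y-x|\le L_{k+1}$ has polynomial-in-$L_k$ cardinality, so a union bound over such $y$ combined with \eqref{msres-bstp} and Borel--Cantelli furnishes a full-probability event $\Omega_x$ such that, for each $\omega\in\Omega_x$, there is $k_\omega<\infty$ with the property that, for every $k\ge k_\omega$ and every such $y$, the dichotomy $R(m,L_k,I,x,y)$ holds: at each $E'\in I$, either $\Lambda_{L_k}(x)$ or $\Lambda_{L_k}(y)$ is $(\omega,m,E')$-regular.

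For exponential localization, let $\psi$ be a polynomially $\mathcal{H}_-$-bounded generalized eigenfunction of $H_\omega$ with generalized eigenvalue $E\in I$ produced by (GEE). Since $H_{\omega,x,L_k}$ has only countably many eigenvalues per scale, outside a $\mu_\omega$-null set of $E$ the regular box in the dichotomy is necessarily the one centered at $y$ whenever $|y-x|$ is large. Iterating the EDI \eqref{edi} at scale $L_k$ then contracts $\|\chi_y\psi\|$ by a factor $\tilde{\gamma}_I dL_k^{d-1}e^{-mL_k/2}$ per step of length $L_k/2$; chaining these steps across the annuli $\{L_k<|y-x|\le L_{k+1}\}$ produces $\|\chi_y\psi\|\le C_\omega e^{-m'|y-x|}$ for some $m'\in(0,m/2)$. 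The resulting exponential decay combined with the polynomial growth afforded by (GEE) forces $\psi\in L^2$, so $\psi$ is a bona fide eigenfunction. Since $\mu_\omega$-a.e.\ $\lambda$ is a generalized eigenvalue, the spectrum of $H_\omega$ in $I$ is pure point with exponentially decaying eigenfunctions, i.e., $E_0\in\Sigma_{\mathrm{EL}}$.

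For SSEHSKD, I would start from the generalized eigenfunction expansion \eqref{eq:expantrace}, which, combined with Cauchy--Schwarz in $d\mu_\omega$, yields for any bounded Borel $f$ with $\sup|f|\le 1$
\begin{equation*}
\|\chi_x E_\omega(I)f(H_\omega)\chi_y\|_2^2
\;\le\;\mu_\omega(I)\int_I\|\chi_x P_\omega(\lambda)\chi_y\|_2^2\,d\mu_\omega(\lambda).
\end{equation*}
For $\mu_\omega$-a.e.\ $\lambda$ the columns of $P_\omega(\lambda)$ are generalized eigenfunctions with generalized eigenvalue $\lambda$, so EDI can be applied in both the $x$- and $y$-slots at the largest scale $L_k$ with $2L_k\le|x-y|$. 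Using Lemma~\ref{lempoly} to bound the boundary kernel traces by a polynomial in $\langle x\rangle,\langle y\rangle$, one obtains a pointwise (in $\lambda$) bound involving the product of two finite-volume resolvent norms centered at $x$ and $y$. Splitting the $\mathbb{P}$-expectation according to whether the good event $R(m,L_k,I,x,y)$ holds, using the exponential estimate $e^{-mL_k/2}$ on the good part, and controlling the bad part via Cauchy--Schwarz together with \eqref{eq:ingr1bis} from (SGEE) (which supplies $\mathbb{E}[\mu_\omega(I)^2]<\infty$) and the sub-exponential probability bound \eqref{msres-bstp}, one arrives at
\begin{equation*}
\mathbb{E}\!\left[\sup_{f:\sup|f|\le 1}\|\chi_x E_\omega(I)f(H_\omega)\chi_y\|_2^2\right]
\;\le\;C_{I,\zeta'}\,e^{-|x-y|^{\zeta'}}
\end{equation*}
for any $\zeta'<\zeta$; since $\zeta\in(0,1)$ was arbitrary, $E_0\in\Sigma_{\mathrm{SSEHSKD}}$.

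The main obstacle is the SSEHSKD step; the (EL) conclusion is, modulo bookkeeping, the classical Fr\"ohlich--Martinelli--Scoppola--Spencer/von Dreifus--Klein argument adapted to the present regularity framework. The delicate point in the kernel estimate is to apply EDI in both position coordinates of $P_\omega(\lambda)$ with a bound that is uniform in $\lambda\in I$, and then to marry the $L^2(\mathbb{P})$-control of $\mu_\omega(I)$ supplied by (SGEE) with the sub-exponential probability estimate from the MSA via a Cauchy--Schwarz decomposition of the expectation in a way that preserves sub-exponential spatial decay. This is the spectral-measure/eigenfunction-expansion argument introduced by Germinet and Klein in \cite{GK1}, and is precisely the reason (SGEE) is required in place of the merely pointwise (GEE) used for (EL).
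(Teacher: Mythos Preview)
Your overall architecture matches the paper's (split into the EL argument of Theorem~\ref{tBMSA77} and the kernel-decay argument of Theorem~\ref{tBMSA25}), and you correctly identify the roles of (EDI), Lemma~\ref{lempoly}, the expansion \eqref{eq:expantrace}, and the second-moment bound \eqref{eq:ingr1bis} from (SGEE). But there is one genuine gap in each half.

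\textbf{EL part.} The sentence ``since $H_{\omega,x,L_k}$ has only countably many eigenvalues per scale, outside a $\mu_\omega$-null set of $E$ the regular box in the dichotomy is necessarily the one centered at $y$'' is not a valid argument. Regularity of $\Lambda_{L_k}(x)$ requires not merely $E\notin\sigma(H_{\omega,x,L_k})$ but the decay estimate \eqref{regular}; countability of the finite-volume spectrum gives no information about which box in the dichotomy is the regular one. What actually forces $\Lambda_{L_k}(x_0)$ to be singular for large $k$ is Lemma~\ref{lemmasing}: if $\|\chi_{x_0}\psi\|\neq 0$ and $\Lambda_{L}(x_0)$ were $(\omega,m,E)$-regular, then (EDI) together with the polynomial $\mathcal{H}_-$-bound on $\psi$ would drive $\|\chi_{x_0}\psi\|$ to zero as $L\to\infty$, a contradiction. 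This deterministic step is what pins the singularity at the center and lets you iterate (EDI) outward through the annuli as the paper does.

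\textbf{SSEHSKD part.} Applying EDI ``in both the $x$- and $y$-slots'' to produce ``the product of two finite-volume resolvent norms centered at $x$ and $y$'' does not work: the event $R(m,L_k,I,x,y)$ only guarantees that \emph{one} of the two boxes is $(\omega,m,\lambda)$-regular for each $\lambda\in I$, and the resolvent norm at the other box is completely uncontrolled (no Wegner estimate is assumed here). The paper's Lemma~\ref{lemdecay} instead uses the self-adjointness $P_\omega(\lambda)=P_\omega(\lambda)^\dagger$ to write $\|\chi_xP_\omega(\lambda)\chi_y\|_2=\|\chi_yP_\omega(\lambda)\chi_x\|_2$, applies EDI \emph{once}, at whichever box happens to be regular for that $\lambda$, and then absorbs the remaining boundary factor $\|\Gamma_{\cdot,L_k}P_\omega(\lambda)\chi_{\cdot}\|_2$ via Lemma~\ref{lempoly}. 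With that single-slot estimate in hand, your good-event/bad-event splitting and the use of \eqref{eq:ingr1bis} go through exactly as you describe.
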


To prove  Theorem~\ref{tBMSA2} we divide it into
 Theorems~\ref{tBMSA77} and \ref{tBMSA25}.  
Without loss of generality we assume that if 
 properties (GEE), (SGEE), or (EDI) hold,  then they hold for every
$\omega \in \Omega$.

\begin{lemma}  \label{lemmasing}
Let $H_\omega$ be a standard ergodic random operator
with properties (GEE) and  (EDI) in  an open interval $\mathcal{I}$.
Let us fix $m>0$.
For  every $\omega$,
given  $x \in \mathbb{Z}^d$  such that  
 there exists 
a generalized eigenfunction $\psi$ for
$H_\omega$ with generalized eigenvalue $E\in \mathcal{I}$ and
$\| \chi_{x}\psi\| \not= 0$,
 there exists
 $\widetilde{L}(\omega,E,m,x) < \infty$, such that 
the box  $\Lambda_{L}(x) $ is not 
 $(\omega,m, E)$-regular if $L \ge \widetilde{L}(\omega,E,m,x)$.
\end{lemma}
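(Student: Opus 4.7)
The plan is to argue by contradiction: if $\Lambda_L(x)$ were $(\omega,m,E)$-regular for arbitrarily large $L$, then the EDI would force $\|\chi_x\psi\|$ to be arbitrarily small, contradicting the hypothesis that $\|\chi_x\psi\|\neq 0$.

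First I would pick any $L\in 2\mathbb{N}$ large enough that $\Lambda_L(x)$ is $(\omega,m,E)$-regular. By Definition~\ref{dsuitable}(iii) this in particular means $E\notin\sigma(H_{\omega,x,L})$, so the EDI in the form \eqref{edi} applies and yields
\begin{equation*}
\|\chi_x\psi\|\;\le\;d\,\tilde\gamma_I\,L^{d-1}\,\|\Gamma_{x,L}R_{\omega,x,L}(E)\chi_x\|\,\|\chi_y\psi\|
\end{equation*}
for some $y\in\Upsilon_L(x)$. Since $\Lambda_1(x)\subset\Lambda_{L/3}(x)$ for $L\ge 6$, we have $\chi_x\le\chi_{x,L/3}$ as multiplication operators, so the regularity bound \eqref{regular} gives $\|\Gamma_{x,L}R_{\omega,x,L}(E)\chi_x\|\le e^{-mL/2}$.

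Next I would control $\|\chi_y\psi\|$ using the polynomial bound implicit in $\psi\in\mathcal{H}_-$. Since $\langle z\rangle$ is comparable to $\langle y\rangle$ on the unit cube centered at $y$, we obtain
\begin{equation*}
\|\chi_y\psi\|^2\;\le\;C_\nu\,\langle y\rangle^{4\nu}\int_{\Lambda_1(y)}|\psi(z)|^2\langle z\rangle^{-4\nu}\,dz\;\le\;C_\nu\,\langle y\rangle^{4\nu}\,\|\psi\|_{\mathcal{H}_-}^2,
\end{equation*}
so $\|\chi_y\psi\|\le C_\psi\langle y\rangle^{2\nu}$. Because $y\in\Upsilon_L(x)$ we have $|y|\le|x|+L/2$, hence $\langle y\rangle\le C(x)\,L$ for $L\ge 1$. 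Plugging back in:
\begin{equation*}
0<\|\chi_x\psi\|\;\le\;K(\omega,\psi,x,I,\nu)\,L^{d-1+2\nu}\,e^{-mL/2}.
\end{equation*}

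The right-hand side tends to $0$ as $L\to\infty$, yielding the desired contradiction. Thus one may take $\widetilde{L}(\omega,E,m,x)$ to be any $L_\ast$ for which $K\,L^{d-1+2\nu}e^{-mL/2}<\|\chi_x\psi\|$ for all $L\ge L_\ast$; any $L\ge\widetilde L$ must then violate at least one of the conditions defining regularity (either $E\in\sigma(H_{\omega,x,L})$ or \eqref{regular} fails), so $\Lambda_L(x)$ is not $(\omega,m,E)$-regular. No step is truly hard here; the only point requiring care is to make sure one correctly handles the polynomial growth of the generalized eigenfunction and the passage from $\chi_{x,L/3}$ to $\chi_x$ in the resolvent estimate so that the exponential decay $e^{-mL/2}$ dominates all polynomial factors in $L$.
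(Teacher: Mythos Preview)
Your proposal is correct and follows essentially the same approach as the paper: apply the EDI together with the regularity bound and the polynomial $\mathcal{H}_-$-control on the generalized eigenfunction to produce an estimate of the form $\|\chi_x\psi\|\le C(x,\psi)\,L^{q}\,e^{-mL/2}$, which tends to zero. The only cosmetic differences are that the paper works directly with \eqref{ediNew} (bounding $\|\Gamma_{x,L}\psi\|$ via the $\mathcal{H}_-$-norm of $\psi$) rather than the decomposed form \eqref{edi}, and phrases the argument as a direct contrapositive rather than a contradiction.
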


\begin{proof}  If  $x \in \mathbb{Z}^d$ and $\psi$ is
a generalized eigenfunction for
$H_\omega$ with generalized eigenvalue $E $,
 and the box  $\Lambda_{L}(x) $ is 
 $(\omega,m, E)$-regular, it follows from
(\ref{ediNew}) that for 
$E\in \mathcal{I}$ we have
\begin{eqnarray} \nonumber
\| \chi_x \psi \| &\le& \tilde{\gamma}_{\{E\}} \,
\mathrm{e}^{-m \frac L 2} 
\|  \Gamma_{x,L} \psi\| \le \tilde{\gamma}_{\{E\}} \,
\mathrm{e}^{-m \frac L 2} 
\| \langle x\rangle^{2\nu} \Gamma_{x,L}\|_{\infty} 
\|\psi\|_{\mathcal{H}_-} \\
&\le&
 d \tilde{\gamma}_{\{E\}} \|\psi\|_{\mathcal{H}_-} 
 L^{d-1}
\mathrm{e}^{-m \frac L 2} 
\langle |x|+\textstyle{ \frac L 2}-1\rangle^{2\nu}\nonumber \\
&\le&
4^\nu d \tilde{\gamma}_{\{E\}} \|\psi\|_{\mathcal{H}_-} 
\langle x\rangle^{2\nu}
 L^{d-1}
\langle \textstyle{ \frac L 2}-1\rangle^{2\nu}
\mathrm{e}^{-m \frac L 2} 
\, .\label{ediNew66}
\end{eqnarray}
Since the last expression in \eqref {ediNew66} goes to $0$
 as $L \to \infty$, the lemma follows. 
\end{proof}

The connection between the multiscale analysis and the
generalized eigenfunction expansion is given by the following
 lemma \cite[Lemma 4.1]{GK1}.

\begin{lemma}
\label{lemdecay}
Let $H_\omega$ be a standard ergodic random operator
with properties  (GEE) and  (EDI) in  an open interval $\mathcal{I}$. 
 Given an open interval $I$ with compact  $\bar{I} \subset \mathcal{I}$, $m>0$,
 $L \in 6\mathbb{N}$, and
 $x,y \in \mathbb{Z}^d$, let $ R(m,L, I,x,y) $ be as in  (\ref{eq:defsetR}).
For $\mathbb{P}$-almost every  $\omega\in R(m,L, I,x,y) $, we have
\begin{equation}\label{eq:wule}
\left\|\chi_{x}P_\omega(\lambda)\chi_{y}\right\|_2
\leq 
C  \tilde{\gamma}_{\bar{I}} \,\mathrm{e}^{-m \frac L 4} 
\langle x \rangle^{2\nu} \langle y \rangle^{2\nu},
\end{equation}
 for  $\mu_\omega$-almost all
 $\lambda\in I$, with $C=C(m,d,\nu )<+\infty$.
\end{lemma}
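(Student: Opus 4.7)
\emph{Proof proposal for Lemma \ref{lemdecay}.} My plan is to apply the eigenfunction decay inequality (EDI) column by column to the operator $P_\omega(\lambda)$, exploiting the fact that $P_\omega(\lambda)$ maps into generalized eigenfunctions, and then to sum the resulting bounds in the Hilbert--Schmidt sense. First I would unpack the hypothesis $\omega\in R(m,L,I,x,y)$: for every $E'\in I$, at least one of the boxes $\Lambda_L(x)$ or $\Lambda_L(y)$ is $(\omega,m,E')$-regular. By the symmetry $P_\omega(\lambda)=P_\omega(\lambda)^\dagger$ from \eqref{pdagger}, the Hilbert--Schmidt norms of $\chi_x P_\omega(\lambda)\chi_y$ and $\chi_y P_\omega(\lambda)\chi_x$ coincide, so it is enough to handle the case in which $\Lambda_L(x)$ is regular. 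Fix such a $\lambda\in I$; regularity then gives $\lambda\notin\sigma(H_{\omega,x,L})$ and $\|\Gamma_{x,L}R_{\omega,x,L}(\lambda)\chi_{x,L/3}\|\le e^{-mL/2}$, hence $\|\Gamma_{x,L}R_{\omega,x,L}(\lambda)\chi_{x}\|\le e^{-mL/2}$ since $\chi_x\le\chi_{x,L/3}$.

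Next, let $\{e_j\}$ be an orthonormal basis of $\mathcal{H}$. For each $j$, the vector $\chi_y e_j$ has compact support, hence lies in $\mathcal{H}_+$, so $\psi_j:=P_\omega(\lambda)\chi_y e_j\in\mathcal{H}_-$, and by the GEE properties recalled after (SGEE), whenever $\psi_j\neq0$ it is a generalized eigenfunction of $H_\omega$ with generalized eigenvalue $\lambda$. Applying the EDI \eqref{ediNew} gives
\[
\|\chi_x\psi_j\|\le\tilde\gamma_{\bar I}\,\|\Gamma_{x,L}R_{\omega,x,L}(\lambda)\chi_x\|\,\|\Gamma_{x,L}\psi_j\|\le\tilde\gamma_{\bar I}\,e^{-mL/2}\,\|\Gamma_{x,L}\psi_j\|.
\]
Squaring and summing over $j$ yields
\[
\|\chi_x P_\omega(\lambda)\chi_y\|_2\le\tilde\gamma_{\bar I}\,e^{-mL/2}\,\|\Gamma_{x,L}P_\omega(\lambda)\chi_y\|_2.
\]

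To control the trailing factor I would decompose $\Gamma_{x,L}=\sum_{z\in\Upsilon_L(x)}\chi_z$ a.e., use $\|\cdot\|_2\le\|\cdot\|_1$, and invoke Lemma \ref{lempoly} to get $\|\chi_z P_\omega(\lambda)\chi_y\|_2\le(3/2)^{2\nu}\langle z\rangle^{2\nu}\langle y\rangle^{2\nu}$. Combined with $|\Upsilon_L(x)|\le d(L-1)^{d-1}$ and the elementary estimate $\langle z\rangle\le\sqrt{2}\langle x\rangle\langle L\rangle$ for $z\in\Upsilon_L(x)$, this produces a polynomial factor $C_{d,\nu}\,L^{d-1+2\nu}\,\langle x\rangle^{2\nu}\langle y\rangle^{2\nu}$. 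The combined bound is
\[
\|\chi_x P_\omega(\lambda)\chi_y\|_2\le C_{d,\nu}\,\tilde\gamma_{\bar I}\,L^{d-1+2\nu}e^{-mL/2}\langle x\rangle^{2\nu}\langle y\rangle^{2\nu},
\]
and absorbing the polynomial into half of the exponential via $L^{d-1+2\nu}e^{-mL/4}\le C(m,d,\nu)$ yields \eqref{eq:wule}.

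The main bookkeeping hurdle is ensuring that the column-by-column application of EDI is legitimate and that the estimate survives when only $\Lambda_L(y)$ is regular. The former is handled by noting that EDI is assumed to hold for \emph{every} generalized eigenfunction associated with the generalized eigenvalue $\lambda\in I$, and that nothing in the argument requires $\psi_j\neq0$ (the bound is trivial otherwise). The latter is handled by the $P_\omega(\lambda)=P_\omega(\lambda)^\dagger$ symmetry, which allows us to repeat the argument with the roles of $x$ and $y$ exchanged, since the Hilbert--Schmidt norm is invariant under taking adjoints.
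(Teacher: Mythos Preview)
Your proof is correct and follows essentially the same route as the paper's: use the symmetry $P_\omega(\lambda)=P_\omega(\lambda)^\dagger$ to reduce to the case where $\Lambda_L(x)$ is regular, apply (EDI) to the columns $P_\omega(\lambda)\chi_y\phi$ (which are generalized eigenfunctions for $\mu_\omega$-a.e.\ $\lambda$), pass to the Hilbert--Schmidt norm, and then control $\|\Gamma_{x,L}P_\omega(\lambda)\chi_y\|_2$ via the decomposition $\Gamma_{x,L}=\sum_{z\in\Upsilon_L(x)}\chi_z$ together with Lemma~\ref{lempoly}, finally absorbing the resulting polynomial in $L$ into half of the exponential. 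The only differences are cosmetic: the paper writes the column argument for a generic $\phi\in\mathcal{H}$ rather than an orthonormal basis, and bounds $\langle z\rangle^{2\nu}$ by $\langle |x|+\tfrac{L}{2}-1\rangle^{2\nu}\le 2^{2\nu}\langle x\rangle^{2\nu}\langle\tfrac{L}{2}-1\rangle^{2\nu}$ rather than via your $\langle z\rangle\le\sqrt{2}\langle x\rangle\langle L\rangle$.
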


\begin{proof}
It follows from (\ref{pdagger}) that
$$
\left\|\chi_{x}P_\omega(\lambda)\chi_{y}\right\|_2
=
\left\|\chi_{y}P_\omega(\lambda)\chi_{x}\right\|_2\, ,
$$
for $\mu_\omega$-almost every $\lambda$,
so the roles played by $x$ and 
$y$ are symmetric.

Let $\omega\in R(m,L, I,x,y) $; then for any $\lambda\in I$, either 
$\Lambda_L(x)$
or $\Lambda_L(y)$ is $(m,\lambda)$-regular for $H_\omega$,
 say $\Lambda_L(x)$. If
$\phi\in\mathcal{H}$, for $\mu_\omega$-almost all 
$\lambda$  and all $y \in \mathbb{Z}^d$ the vector
$P_\omega(\lambda)\chi_{y}\phi$ is a  generalized eigenfunction of
$H_\omega$ with generalized eigenvalue $\lambda$, so 
 for $\mathbb{P}$-almost every $\omega$ it follows from 
property (EDI)
(see  (\ref{ediNew})), using $\chi_x=  \chi_{x, \frac L 3}\, \chi_x$, that
\begin{equation}
\| \chi_x P_\omega(\lambda)\chi_{y}\phi \| \le  
 \tilde{\gamma}_{\bar{I}} 
\| \Gamma_{x,L} R_{\omega,x,L}(\lambda) \chi_{x,\frac L 3}\| 
\|  \Gamma_{x,L}  P_\omega(\lambda)\chi_{y}\phi\| .\label{eigdec2}
\end{equation}
 Since 
 $\Lambda_L(x)$ is $(m,\lambda)$-regular, we have, using also
Lemma~\ref{lempoly} and the definition of the Hilbert-Schmidt  norm, that
\begin{eqnarray}
\| \chi_x P_\omega(\lambda)\chi_{y} \|_2  &\le & 
 \tilde{\gamma}_{\bar{I}} 
\mathrm{e}^{-m \frac L 2} \|  \Gamma_{x,L}
P_\omega(\lambda)\chi_{y}\|_2  \\ &\le &
 \tilde{\gamma}_{\bar{I}} 
d \left(\textstyle{ \frac 3 2}\right)^{2\nu}  L^{d-1}
\mathrm{e}^{-m \frac L 2} 
\langle |x|+\textstyle{ \frac L 2}-1\rangle^{2\nu}
\langle y \rangle^{2\nu} \\ &\le &
 \tilde{\gamma}_{\bar{I}} 
d 3^{2\nu} L^{d-1}
\mathrm{e}^{-m \frac L 2} 
\langle \textstyle{ \frac L 2}-1\rangle^{2\nu}
\langle x\rangle^{2\nu}
\langle y \rangle^{2\nu} \, ,
\end{eqnarray}
so \eqref{eq:wule} follows.
\end{proof}

\begin{theorem} \label{tBMSA77}  Let $H_\omega$ be a 
 standard ergodic random operator  with (IAD) and properties 
(GEE) and (EDI) in an open interval $\mathcal{I}$.  Then
\begin{equation}
\Sigma_{\mathrm{MSA}} \cap \mathcal{I} \subset
\Sigma_{\mathrm{EL}}  
 \cap \mathcal{I} \, .
\end{equation}
Moreover if  $E \in \Sigma_{\mathrm{MSA}} \cap \mathcal{I}$,
and we pick  an open interval $I \ni E$ and $m>0$ as in 
Definition~\ref{sigmaSMA} with compact $\bar{I} \subset   \mathcal{I}$,
then for $\mathbb{P}$-almost every  
$\omega$,  given  a generalized
 eigenfunction $\Psi$  for $H_\omega$
with generalized eigenvalue $E^\prime\in I$, we have
\begin{equation}\label{decayeigenfunction}
\limsup_{|x| \to \infty} \frac {\log\| \chi_x \psi \| }{|x|} \le 
- m \, .
\end{equation}
\end{theorem}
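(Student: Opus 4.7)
The plan is to fix $E \in \Sigma_{\mathrm{MSA}} \cap \mathcal{I}$ and extract the data $I \ni E$, $m>0$, $\zeta \in (0,1)$, $\alpha \in (1, \zeta^{-1})$, $L_0$, and scales $L_{k+1}=[L_k^\alpha]_{6\mathbb{N}}$ from Definition~\ref{sigmaSMA} (arranging $\bar{I}\subset\mathcal{I}$ compact), and then exploit the pairwise multiscale estimate \eqref{msres-bstp} together with the EDI \eqref{edi} and the polynomial bound \eqref{eq:normchi}. The first step is a Borel--Cantelli argument: with cut-off $M_k = 3L_{k+1}$, the event
\[
B_k = \bigcup_{\substack{u,v \in \mathbb{Z}^d \cap \Lambda_{M_k}(0)\\ |u-v|>L_k+\varrho}} R(m, L_k, I, u, v)^c
\]
is a union of $O(L_k^{2d\alpha})$ terms, each of probability at most $\mathrm{e}^{-L_k^\zeta}$ by \eqref{msres-bstp}, so $\mathbb{P}(B_k) \le CL_k^{2d\alpha}\mathrm{e}^{-L_k^\zeta}$ is summable (the $L_k$ grow super-geometrically). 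This produces a full-measure set $\Omega_0$ on which $B_k$ fails for all $k \ge K_0(\omega)$. Inside each surviving $R(m, L_k, I, u, v)$ one of $\Lambda_{L_k}(u)$, $\Lambda_{L_k}(v)$ is $(\omega,m,E')$-regular for every $E' \in I$, so on $\Omega_0$ the set of $u \in \Lambda_{M_k}(0) \cap \mathbb{Z}^d$ whose box $\Lambda_{L_k}(u)$ is not $(\omega,m,E')$-regular has diameter at most $L_k+\varrho$. Given a generalized eigenfunction $\psi$ of $H_\omega$ with generalized eigenvalue $E' \in I$, I then pick $x_0 \in \mathbb{Z}^d$ with $\|\chi_{x_0}\psi\|>0$ (possible since $\psi \in \mathcal{H}_-\setminus\{0\}$); Lemma~\ref{lemmasing} yields $K_1(\omega,\psi)$ such that $\Lambda_{L_k}(x_0)$ is \emph{not} $(\omega,m,E')$-regular for $k\ge K_1$, so once additionally $|x_0|\le M_k/2$, every $y \in \Lambda_{M_k}(0) \cap \mathbb{Z}^d$ with $|y-x_0|>L_k+\varrho$ has $\Lambda_{L_k}(y)$ that \emph{is} $(\omega,m,E')$-regular.

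The core step is an EDI iteration. For $|x|$ large, I choose the unique $k$ with $L_k<|x-x_0|\le L_{k+1}$ (so $L_k\ge|x-x_0|^{1/\alpha}\to\infty$), set $y_0=x$, and iterate \eqref{edi}: at each $y_j$ for which $\Lambda_{L_k}(y_j)$ is regular I gain
\[
\|\chi_{y_j}\psi\| \le d\tilde\gamma_{\bar I}\,L_k^{d-1}\,\mathrm{e}^{-mL_k/2}\,\|\chi_{y_{j+1}}\psi\|
\]
with $y_{j+1}\in\Upsilon_{L_k}(y_j)$, i.e.\ $|y_{j+1}-y_j|=L_k/2-1$. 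I continue until either the path leaves $\Lambda_{M_k}(0)$ (strictly better, handled by capping $n^*$ at $n_{\max}\asymp L_k^{\alpha-1}$) or it hits a first non-regular $y_{n^*}$, which by the previous paragraph must lie in $\{u:|u-x_0|\le L_k+\varrho\}$. The triangle inequality then gives $n^*(L_k/2-1)\ge |y_{n^*}-x|\ge|x-x_0|-L_k-\varrho$, so $n^*L_k/2 = |x-x_0|(1+o(1))$. Bounding the terminal $\|\chi_{y_{n^*}}\psi\|$ by $(3/2)^\nu\langle y_{n^*}\rangle^{2\nu}\|\psi\|_{\mathcal{H}_-}$ via \eqref{eq:normchi}, taking logarithms and dividing by $|x|$ produces
\[
\frac{\log\|\chi_x\psi\|}{|x|}\le -m\,\frac{n^*L_k/2}{|x|} + (d-1)\,\frac{n^*\log L_k}{|x|}+o(1),
\]
whose first term tends to $-m$ and whose second is $O(|x|^{-1/\alpha}\log|x|)\to 0$ because $L_k\to\infty$ with $|x|$. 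This is \eqref{decayeigenfunction}.

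Exponential decay places $\psi\in\mathcal{H}$, so by the remark following (GEE) $\psi$ is a bona fide eigenfunction; since by (GEE) $\mu_\omega$-a.e.\ $\lambda$ is a generalized eigenvalue whose generalized eigenfunctions arise as $P_\omega(\lambda)\phi$, the spectral measure of $H_\omega E_\omega(I)$ is supported on a countable set of eigenvalues, giving pure point spectrum on $I$ with exponentially decaying eigenfunctions, i.e.\ $E\in\Sigma_{\mathrm{EL}}$. The main technical obstacle is the polynomial-versus-exponential bookkeeping in the iteration: the factor $L_k^{(d-1)n^*}$ accumulated over $n^*\sim 2|x|/L_k$ iterations must be absorbed by $\mathrm{e}^{-mn^*L_k/2}$, which succeeds precisely because $L_k\to\infty$ forces $(d-1)\log L_k/L_k \to 0$ without degrading the leading rate $m$.
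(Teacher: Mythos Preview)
Your strategy is the same as the paper's: Borel--Cantelli on the multiscale estimate \eqref{msres-bstp} to localize the ``bad set'' at each scale, anchoring via Lemma~\ref{lemmasing} so the bad set sits near a chosen $x_0$, and then iterating the EDI \eqref{edi} from $x$ through regular boxes. But two parameter choices leave genuine gaps in obtaining the sharp rate $m$ claimed in \eqref{decayeigenfunction}.

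First, in case B you assert $n^*L_k/2=|x-x_0|(1+o(1))$ from $n^*(L_k/2-1)\ge|x-x_0|-L_k-\varrho$. This requires $L_k=o(|x-x_0|)$, but your partition $L_k<|x-x_0|\le L_{k+1}$ allows $|x-x_0|$ arbitrarily close to $L_k$ (for every $k$), where $L_k/|x-x_0|\approx 1$ and the bound degenerates. Second, your dismissal of case A as ``strictly better'' fails with $M_k=3L_{k+1}$: when $|x-x_0|\approx L_{k+1}$ one has $|x|\le|x_0|+L_{k+1}$, so the distance from $x$ to $\partial\Lambda_{M_k}(0)$ is only about $L_{k+1}/2$, and the path (whose direction $y_{j+1}\in\Upsilon_{L_k}(y_j)$ is dictated by \eqref{edi}, not chosen by you) can exit after $n^*\approx L_{k+1}/L_k$ steps, giving $n^*L_k/2\approx L_{k+1}/2\approx|x-x_0|/2$ and hence rate only $m/2$.

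The paper repairs both edges simultaneously with a buffer parameter $\rho\in(\tfrac13,1)$: it works in the annulus $A_{k+1}(x_0)=\Lambda_{2bL_{k+1}}(x_0)\setminus\Lambda_{2L_k}(x_0)$ with $b>(1+\rho)/(1-\rho)$ (all boxes there regular, by the same Borel--Cantelli plus Lemma~\ref{lemmasing}), but restricts the starting point $x$ to the smaller annulus $\tilde A_{k+1}(x_0)=\Lambda_{\frac{2b}{1+\rho}L_{k+1}}(x_0)\setminus\Lambda_{\frac{2}{1-\rho}L_k}(x_0)$. The geometry then guarantees $\mathrm{dist}(x,\mathbb{Z}^d\setminus A_{k+1}(x_0))\ge\rho|x-x_0|$, so the EDI iteration runs for at least $\approx 2\rho|x-x_0|/L_k$ regular steps before the path can exit $A_{k+1}(x_0)$ in \emph{either} direction. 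This yields rate $\tfrac{\rho(3\rho-1)}{2}m$ uniformly over $x\in\tilde A_{k+1}(x_0)$; since the $\tilde A_{k+1}(x_0)$ overlap and cover $\{|x-x_0|>L_0/(1-\rho)\}$, letting $\rho\uparrow 1$ gives \eqref{decayeigenfunction}. Your argument is patched by the same device (or, more crudely, by taking $M_k=CL_{k+1}$ with $C\to\infty$ and restricting to $CL_k<|x-x_0|\le L_{k+1}$).
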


\begin{proof}  Given $E \in \Sigma_{\mathrm{MSA}} \cap \mathcal{I}$,
we pick an open interval $I\ni E$ as in Definition~\ref{sigmaSMA}
with compact $\bar{I} \subset   \mathcal{I}$. 
We fix  $\zeta$ and $\alpha$ such that $0<\zeta<1$ and
$1< \alpha < \zeta^{-1}$. 
By Definition~\ref{sigmaSMA} there is a scale $L_0$ and a 
mass $m>0$, such that, if
we set
$L_{k+1} = [L_k^\alpha]_{6\mathbb{N}}$, $k=0,1,\ldots$, then 
for $x$
and   $y \in \mathbb{Z}^d$ with $ |x-y|> L_k +\varrho$
we have the estimate
(\ref{msres-bstp})  for $k=0,1,2,\ldots$.

We will prove that $E \in \Sigma_{\mathrm{EL}}$
by showing that 
 for $\mathbb{P}$-almost every  $\omega$ each 
generalized eigenfunction of $H_\omega$ with 
generalized eigenvalue in $I$ is exponentially decaying
 in the $L^2$-sense.  This suffices since
  for $\mathbb{P}$-almost every  
$\omega$  we have that $\mu_\omega$-almost every  
$E^\prime \in \mathcal{I}$ is a generalized eigenvalue for
$H_\omega$, so we can then conclude that $H_\omega$ has pure point
spectrum in $I$.

 We fix $b >1$, to be chosen later.
Given  $x_0 \in  \mathbb{Z}^d$,
 for each  $k = 0,1,\cdots $
we define the discrete annulus
\begin{equation}
A_{k+1}(x_0) =\left\{
 \Lambda_{2bL_{k+1}}(x_0)\setminus \Lambda_{2L_k}(x_0) \right\}
\cap \mathbb{Z}^d \, ,
\end{equation}
 and  the event
\begin{eqnarray}\nonumber
\lefteqn{E_k(x_0) = 
\left\{\mbox{$\omega$;
$ \Lambda_{L_k}(x_0)$ and $\Lambda_{L_k}(x)$
 are both not
$(\omega,m,E^\prime)$-regular}\right.}\\
&& \hspace{1.6in}\left.\mbox{ for some  $E^\prime \in I $ and 
$x \in A_{k+1}(x_0)  $}\right\} \, .
\end{eqnarray}
By (\ref{msres-bstp}),
\begin{equation}
\mathbb{P}\left\{E_k(x_0)\right\} \le
 (2bL_{k+1})^d\,\mathrm{e}^{-2L_{k}^\zeta} \, , 
\end{equation}
and hence
\begin{equation}
 \sum^{\infty}_{k=0} \, \mathbb{P}\left\{E_k(x_0)\right\}
< \infty ,
\end{equation}
so it follows from the Borel-Cantelli Lemma and the countability of
$\mathbb{Z}^d$ 
 that
\begin{equation}
{\mathbb{P}}\{\mbox{$E_k(x_0)$ occurs infinitely often for some  $x_0 \in
{\mathbb{Z}}^d$  } \} = 0 \, .
\end{equation}
Thus,  for $\mathbb{P}$-almost every  
$\omega$, given  $x_0 \in{\mathbb{Z}}^d$ there is
$k_1(\omega,x_0) \in \mathbb{N}$ such that  
$\omega \notin  E_k(x_0)$ for $k \ge k_1(\omega,x_0)$.

For $\mathbb{P}$-almost every  
$\omega$,  given  a generalized
 eigenfunction $\Psi$  for $H_\omega$
with generalized eigenvalue $E^\prime\in I$, we
pick $x_0 \in \mathbb{Z}^d$ such that $\| \chi_{x_0}\psi\| \not= 0$.
We  set 
$k_2(\omega, E^\prime,x_0) =\min\{k\in\mathbb{N};
 L_k \ge \widetilde{L}(\omega,E^\prime,m,x_0)\}$,
where  $ \widetilde{L}(\omega,E^\prime,m,x_0)$ is as in
  Lemma~\ref{lemmasing}.  Thus, if 
$k_3(\omega, E^\prime,x_0)= 
\max \{k_1(\omega,x_0),k_2(\omega, E^\prime,x_0)\}$,
for  $k \ge k_3(\omega,  E^\prime,x_0)$ we conclude that 
$\Lambda_{L_k}(x)$ is
$(\omega,m,E^\prime)$-regular for all $x \in A_{k+1}(x_0)$. 
We pick  $\rho$, with $\frac 13 < \rho < 1$,   and $b > {1+\rho \over
1-\rho}$, and set
\begin{equation}
\tilde A_{k+1}(x_0) = \left\{\Lambda_{{2b\over 1+\rho}L_{k+1}}(x_0)\setminus
\Lambda_{{2\over 1-\rho}L_k}(x_0)\right\}
\cap \mathbb{Z}^d \, ,
\end{equation}
Note that $\tilde{A}_{k+1}(x_0) \subset A_{k+1}(x_0)$ and
\begin{equation} \label{dist}
\mathrm{dist} (x,\mathbb{Z}^d \backslash A_{k+1}(x_0)) \ge 
\rho|x-x_0| \;\;\mbox{for all} \;\;  x \in
\tilde A_{k+1}(x_0)\,  .
\end{equation}
Thus, if  $x \in A_{k+1}(x_0)$ with $k \ge k_3(\omega, I,x_0)$,
it follows from (\ref{edi})  that
\begin{eqnarray} \label{edi77}
\| \chi_x \psi \| \le  d\tilde{\gamma}_{\bar{I}}  L_k^{d-1}\,
\mathrm{e}^{-m \frac {L_k} 2} \| \chi_{x_1} \psi \|
\end{eqnarray}
for some $x_1 \in \Upsilon_{L}(x)$.  If we take 
 $x \in \tilde{A}_{k+1}(x_0)$, we have
 $x_1 \in  A_{k+1}(x_0)$ in view of  \eqref{dist}, and hence
we can apply again (\ref{edi}) as in  (\ref{edi77})
to estimate $\| \chi_{x_1} \psi \|$ in terms of some 
$\| \chi_{x_2} \psi \|$ for some $x_2 \in \Upsilon_{L}(x_1)$.
In fact, it follows from  \eqref{dist} that
for  $x \in \tilde{A}_{k+1}(x_0)$  this procedure can be repeated
$n$ times, yielding
\begin{eqnarray} \label{edi778}
\| \chi_x \psi \| &\le & \left(d\tilde{\gamma}_{\bar{I}}  L_k^{d-1}\,
\mathrm{e}^{-m \frac {L_k} 2}\right)^n \| \chi_{x_n} \psi \|  \\
&\le& \textstyle{\left(\frac 3 2\right)^\nu} \|\psi\|_{\mathcal{H}_-}
\left(d\tilde{\gamma}_{\bar{I}}  L_k^{d-1}\,
\mathrm{e}^{-m \frac {L_k} 2}\right)^n \langle x_n\rangle^{2\nu}
\label{edi779}
\end{eqnarray}
for some $x_n \in \mathbb{Z}^d$ with $|x_n -x| \le n (\frac{L_k}2 -1)$,
as long as
$ n(\frac{L_k}2 -1)  < \rho|x-x_0|$. (We used \eqref {eq:normchi}
to obtain \eqref{edi779}).  We thus have the estimate  \eqref{edi779})
with
\begin{eqnarray}  \label{n}
n= \frac{ \rho|x-x_0|}{\frac{L_k}2 -1} - 1 
\ge  \frac{ \textstyle{\frac {3 \rho -1}2}|x-x_0|}{\frac{L_k}2 -1} \, .
\end{eqnarray}
Note that for all $k$ sufficiently large we have $\frac{L_k}2 -1 \ge \frac{L_k}4$
and $d\tilde{\gamma}_{\bar{I}}  L_k^{d-1}\,
\mathrm{e}^{-m \frac {L_k} 2} \le \mathrm{e}^{-m \frac {L_k} 4}$, 
in which case  it follows from \eqref{edi778} and \eqref{n}
that for each $x \in \tilde{A}_{k+1}(x_0)$ we have
\begin{eqnarray}
\| \chi_x \psi \| & \le&
 \textstyle{\left(\frac 3 2\right)^\nu} \|\psi\|_{\mathcal{H}_-}
 \langle |x_0| + \rho |x-x_0|\rangle^{2\nu}
\,\mathrm{e}^{-  \textstyle{\frac {3 \rho -1}2} m |x-x_0|}\\
& \le& 3^\nu  \|\psi\|_{\mathcal{H}_-} \langle x_0\rangle^{2\nu}
 \langle\rho |x-x_0|\rangle^{2\nu}
\, \mathrm{e}^{-  \textstyle{\frac {3 \rho -1}2} m |x-x_0|} \, .
\end{eqnarray}
Thus there exists $\tilde{k}$, depending only on $\rho$, $d$, 
$\nu$, $ \|\psi\|_{\mathcal{H}_-} $, $x_0$, $L_0$, $\alpha$, 
$\tilde{\gamma}_{\bar{I}}$, and $m$, such that if
 $x \in \tilde{A}_{k+1}(x_0)$ with $k \ge \tilde{k}$ we have (recall
$\frac 13< \rho < 1$)
\begin{equation}
\| \chi_x \psi \|  \le 
\mathrm{e}^{-  \textstyle{\frac {\rho(3 \rho -1)}2} m |x-x_0|}\, .
\end{equation}
 Since if $x \in \mathbb{Z}^d$ is such that
 $|x-x_0| > {L_0\over 1-\rho}$, we have  $x
\in \tilde A_{k+1}(x_0)$ for some $k$, we conclude that there is a finite constant
$C_{\psi,\rho}$ such that
\begin{equation}
\| \chi_x \psi \|  \le C_{\psi,\rho}\, 
\mathrm{e}^{-  \textstyle{\frac {\rho(3 \rho -1)}2} m |x-x_0|}
\;\;\;\mbox{for all $x \in \mathbb{Z}^d$}\, ,
\end{equation}
and hence $\psi$ decays exponentially in the $L^2$-sense.  In fact, 
we proved that for each $\frac 13< \rho < 1$ we have
\begin{equation}
\limsup_{|x| \to \infty} \frac {\log\| \chi_x \psi \| }{|x|} \le
-\textstyle{\frac {\rho(3 \rho -1)}2} m \, ,
\end{equation}
so letting $\rho \to 1$ we get \eqref{decayeigenfunction}.
\end{proof}

We  now show that the multiscale analysis imply
 strong sub-exponential
HS-kernel decay \cite[Theorem 3.8]{GK1}.  (Note that for smooth
functions of Schr\"odinger and classical wave operators
 we always have kernel decay in the deterministic case \cite{GK2,BGK}.)

\begin{theorem} \label{tBMSA25}  Let $H_\omega$ be a 
 standard ergodic random operator  with (IAD) and properties 
(SGEE) and (EDI) in an open interval $\mathcal{I}$.  Then
\begin{equation}
\Sigma_{\mathrm{MSA}} \cap \mathcal{I} \subset
\Sigma_{\mathrm{SSEHSDC}}
 \cap \mathcal{I} \, .
\end{equation}
\end{theorem}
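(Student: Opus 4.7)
The plan is to combine the generalized eigenfunction expansion supplied by (SGEE) with a good/bad box decomposition driven by the event $R(m,L_k,I,x,y)$ of Definition~\ref{sigmaSMA}, promoting the pointwise-in-$\omega$ argument of Theorem~\ref{tBMSA77} to a square-integrable, Hilbert--Schmidt-norm estimate. Fix $E\in\Sigma_{\mathrm{MSA}}\cap\mathcal{I}$; choose an open $I\ni E$ with compact $\bar I\subset\mathcal{I}$, and take $m>0$, $\zeta\in(0,1)$, $\alpha\in(1,\zeta^{-1})$ and the scales $L_{k+1}=[L_k^{\alpha}]_{6\mathbb{N}}$ as in Definition~\ref{sigmaSMA}. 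For any prescribed target rate $\zeta^{\star}\in(0,1)$ we may arrange $\zeta$ close to $1$ and $\alpha$ close to $1$ so that $\zeta/\alpha>\zeta^{\star}$. Given $x,y\in\mathbb{Z}^d$ with $|x-y|$ large, select $k$ with $L_k+\varrho<|x-y|\le L_{k+1}+\varrho$, so $L_k\ge c|x-y|^{1/\alpha}$ for some $c>0$.

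Applying the Bochner-integral representation \eqref{eq:expantrace} to $\chi_x E_\omega(I)f(H_\omega)\chi_y$, for any Borel $f$ with $|\!|\!|f|\!|\!|\le 1$ one has $\chi_x E_\omega(I)f(H_\omega)\chi_y = \int_I f(\lambda)\,\chi_x P_\omega(\lambda)\chi_y\,d\mu_\omega(\lambda)$ as an element of $\mathcal{T}_2(\mathcal{H})$, so Cauchy--Schwarz in $d\mu_\omega$ gives
\[
\sup_{|\!|\!|f|\!|\!|\le 1}\|\chi_x E_\omega(I)f(H_\omega)\chi_y\|_2^{2} \le \mu_\omega(I)\int_I\|\chi_x P_\omega(\lambda)\chi_y\|_2^{2}\,d\mu_\omega(\lambda).
\]
Split the expectation of this quantity according to $R_k:=R(m,L_k,I,x,y)$. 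On $R_k$, Lemma~\ref{lemdecay} yields $\|\chi_x P_\omega(\lambda)\chi_y\|_2\le C\tilde\gamma_{\bar I}\,e^{-mL_k/4}\langle x\rangle^{2\nu}\langle y\rangle^{2\nu}$ for $\mu_\omega$-a.e.\ $\lambda\in I$, so combined with $\mathbb{E}[\mu_\omega(I)^{2}]<\infty$ from (SGEE), the good-event contribution is bounded by a constant times $e^{-mL_k/2}\langle x\rangle^{4\nu}\langle y\rangle^{4\nu}$, which is sub-exponentially small in $|x-y|$, as $L_k\ge c|x-y|^{1/\alpha}$ absorbs the polynomial prefactor.

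For the bad-event contribution, the deterministic polynomial estimate $\|\chi_x P_\omega(\lambda)\chi_y\|_2\le(3/2)^{2\nu}\langle x\rangle^{2\nu}\langle y\rangle^{2\nu}$ of Lemma~\ref{lempoly} reduces matters to bounding $\mathbb{E}[\chi_{R_k^c}\mu_\omega(I)^{2}]$. This is the main obstacle: Definition~\ref{sigmaSMA} only supplies $\mathbb{P}(R_k^c)\le e^{-L_k^{\zeta}}$, while (SGEE) by itself only places $\mu_\omega(I)$ in $L^{2}(\Omega)$, which is not sufficient to convert the probability bound into a bound on the weighted expectation via Cauchy--Schwarz alone. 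The resolution is to invoke the stronger hypothesis (KK12), routinely available for the operators of interest and explicitly flagged in the Remark following (SGEE), which promotes $\mu_\omega(I)$ to $L^\infty(\Omega)$; then $\mathbb{E}[\chi_{R_k^c}\mu_\omega(I)^{2}]\le\|\mu_\omega(I)\|_\infty^{2}\,e^{-L_k^{\zeta}}$. (Any $L^{p}$ control with $p>2$ also suffices via H\"older, at the cost of a $(1-2/p)$ loss in the exponent.) Assembling both pieces and using $L_k\ge c|x-y|^{1/\alpha}$ produces an estimate of the form \eqref{sdlocintro9} with $\zeta^{\star}\in(0,1)$ arbitrary, establishing $E\in\Sigma_{\mathrm{SSEHSKD}}$ and completing the proof.
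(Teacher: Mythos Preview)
Your good/bad decomposition and the use of Lemma~\ref{lemdecay} on the good event are exactly as in the paper, but your treatment of the bad event contains a genuine gap: you end up needing to control $\mathbb{E}[\chi_{R_k^c}\mu_\omega(I)^{2}]$ and then invoke the stronger hypothesis \eqref{KK12}, which is \emph{not} assumed in Theorem~\ref{tBMSA25}. The theorem is stated under (SGEE) alone, so appealing to \eqref{KK12} (or any $L^p$ control with $p>2$) is not permitted.

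The obstacle is self-inflicted. It arises because you apply the Cauchy--Schwarz/Bochner bound $\sup_f\|\cdots\|_2^2\le \mu_\omega(I)\int_I\|\chi_x P_\omega(\lambda)\chi_y\|_2^2\,d\mu_\omega(\lambda)$ uniformly over good and bad events, which on the bad event forces a quadratic $\mu_\omega(I)^2$. The paper avoids this by \emph{not} using the generalized eigenfunction expansion on the bad event at all: one simply bounds
\[
\|\chi_x f(H_\omega)E_\omega(I)\chi_0\|_2^2 \le |\!|\!|f|\!|\!|^2\,\|E_\omega(I)\chi_0\|_2^2 \le 4^\nu\,\mu_\omega(I),
\]
the last step because $\|E_\omega(I)\chi_0\|_2^2=\mathrm{tr}(\chi_0 E_\omega(I)\chi_0)\le \|T\chi_0\|^2\,\mathrm{tr}(T^{-1}E_\omega(I)T^{-1})$. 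This bound is \emph{linear} in $\mu_\omega(I)$, so Cauchy--Schwarz in $\omega$ gives $\mathbb{E}[\chi_{R_k^c}\mu_\omega(I)]\le[\mathbb{E}\mu_\omega(I)^2]^{1/2}\,\mathrm{e}^{-L_k^\zeta/2}$, and now (SGEE) alone suffices. A secondary issue: your good-event bound carries the prefactor $\langle x\rangle^{4\nu}\langle y\rangle^{4\nu}$, which does not depend only on $|x-y|$; the paper handles this by invoking $\mathbb{Z}^d$-ergodicity to reduce to $y=0$ at the outset, so that $\langle x\rangle$ is controlled by $L_{k+1}$ and absorbed into the exponential.
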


\begin{proof}
Given $E \in \Sigma_{\mathrm{MSA}} \cap \mathcal{I}$,
we pick an open interval $I\ni E$ as in Definition~\ref{sigmaSMA}
with compact $\bar{I} \subset   \mathcal{I}$. 
 We will use  the
generalized eigenfunction expansion (\ref{eq:expantrace})  to show
that for any $0<\xi<1$. there is a finite constant $C_\xi$ such that 
\begin{equation} \label{expec}
\mathbb{E}\left\{\sup_{|\!|\!|f|\!|\!|\leq 1} \left\|\chi_{x}
f(H_\omega)E_{\omega}(I)
\chi_{0}\right\|_2^2 \right\}
\leq  C_\xi\, \mathrm{e}^{-|x|^\xi},
\end{equation} 
for all $ x\in \mathbb{Z}^d$,
the supremum being taken
over all  Borel functions $f$ of a real variable, with 
$\!|\!|\!| f |\!|\!|=\sup_{t \in \mathbb{R}} |f(t)|$.
 Since our random operator is 
 $\mathbb{Z}^d$-ergodic,  probabilities are
translation invariant, so there is no loss of generality in taking $y=0$.

Given $0<\xi<1$, we pick $\zeta$ such that
$\zeta^2 <\xi<\zeta <1$  (always possible) and set $\alpha = \frac \zeta \xi $,
note $\alpha < \zeta^{-1}$. 
By Definition~\ref{sigmaSMA} there is a scale $L_0$ and a 
mass $m_\zeta>0$, such that, if
we set
$L_{k+1} = [L_k^\alpha]_{6\mathbb{N}}$, $k=0,1,\ldots$, then for each $k$ 
we have the estimate
(\ref{msres-bstp}) with $y=0$
and   $x \in \mathbb{Z}^d$ such that $ |x|> L_k +\varrho$.

Let us now fix $x \in \mathbb{Z}^d$ and pick $k$ such that
 $L_{k+1} +\varrho \geq |x|> L_k +\varrho$.  In this case
Lemma~\ref{lemdecay}
asserts that if $\omega\in R\left(m_\zeta, L_k, I,x,0\right)$, then
\begin{eqnarray} \label{eqdecay}
 \left\|\chi_{x} P_\omega(\lambda)\chi_{0}
\right\|_2
\leq C_1 \,\mathrm{e}^{-m_\zeta \frac {L_k}4} 
\langle x \rangle^{2\nu}
\le C_1 C_2\,\mathrm{e}^{-  L_k^\zeta}  \ ,
\end{eqnarray}
 for  $\mu_\omega$-almost all
 $\lambda\in I$,
with finite constants 
$C_1=C_1(m_\zeta,d,\nu,\tilde{\gamma}_{\bar{I}})$
  and $C_2=C_2(\nu,\varrho, \zeta,\xi,  m_\zeta) $.
We split the expectation in (\ref{expec}) in two pieces:  where
(\ref{eqdecay}) holds, and over the complementary event,
which has probability less than 
$\mathrm{e}^{-L_{k}^{{\zeta}}}$ by (\ref{msres-bstp}).
From (\ref{eq:expantrace}) we have  
\begin{eqnarray}
\lefteqn{ \hspace{-2cm}
\sup_{|\!|\!|f|\!|\!|\leq 1} 
\left\|\chi_{x}f(H_\omega)E_{\omega}(I) 
\chi_{0}\right\|_2 } \nonumber \\
& \leq &
\sup_{|\!|\!|f|\!|\!|\leq 1}
\int_{I} |f(\lambda)| \left\|\chi_{x} 
P_\omega(\lambda)\chi_{0} \right\|_2
{\rm d}\mu_\omega(\lambda) \label{ineqop} \\
& \leq &
\int_{I} \left\|\chi_{x} 
P_\omega(\lambda)\chi_{0} \right\|_2
{\rm d}\mu_\omega(\lambda).\label{eq:stepop}
\end{eqnarray}
Thus, it follows from (\ref{eqdecay}) that 
[with $\mathbb{E} (F(\omega) ; A) \equiv 
\mathbb{E} (F(\omega)\chi_A(\omega)) $]
\begin{eqnarray}
\lefteqn{
\mathbb{E}\left\{ \sup_{|\!|\!|f|\!|\!|\leq 1}
\left\|\chi_{x} f(H_\omega)E_{\omega}(I)
\chi_{0}\right\|_2^2; 
{R(m_\zeta,L_k, I,x,0)} \right\} }\nonumber \\
&  & \hspace{1.5in} \leq
C_1^2C_2^2\;\mathbb{E}\{(\mu_\omega(I))^2\} 
\,\mathrm{e}^{-  2L_k^\zeta}\ .
\label{eq:stepxy1}
\end{eqnarray}
To estimate the second term,  note that using  (\ref{012}) we have
\begin{eqnarray}
\left\|\chi_{x}
f(H_\omega)E_{\omega}(I) \chi_{0}\right\|_2^2
& \le & |\!|\!|f|\!|\!|^2  \left\|E_{\omega}(I)
\chi_{0}\right\|_2^2 \nonumber \\
& \le & 4^\nu   |\!|\!|f|\!|\!|^2 \mu_\omega(I) \, ,
\end{eqnarray}
so, using the Schwarz's inequality and  (\ref{msres-bstp}) , 
\begin{eqnarray}
\lefteqn{
\mathbb{E}\left\{ \sup_{|\!|\!|f|\!|\!|\leq 1}
\left\|\chi_{x} f(H_\omega)E_{H_\omega}(I)
\chi_{0}\right\|_2^2; 
{\omega \notin R(m_\zeta,L_k, I,x,0)} \right\} }\hspace{1.0in}\nonumber \\
&  &  \hspace{.6in}\leq
4^\nu\;[\mathbb{E}\{(\mu_\omega(I)^2\}]^{\frac12} 
\,\mathrm{e}^{-   {\frac12} L_k^\zeta}\,.
 \label{eq:stepxy12}
\end{eqnarray}
Since
\begin{equation}
C_3=C_1^2C_2^2\;\mathbb{E}\{(\mu_\omega(I))^2\}  +
4^\nu\;[\mathbb{E}\{(\mu_\omega(I)^2\}]^{\frac12}  <\infty
\end{equation}
 in view of (\ref{eq:ingr1bis}), 
we   conclude from (\ref{eq:stepxy1}) and (\ref{eq:stepxy12})  that
 (recall $\alpha = \frac \zeta \xi $)
\begin{eqnarray} \label{expec2}
\lefteqn{\mathbb{E}\left\{\sup_{|\!|\!|f|\!|\!|\leq 1} \left\|\chi_{x}
f(H_\omega)E_{\omega}(I)
\chi_{0}\right\|_2^2\right\} }\\
&& \hspace{.5in} \le  C_5 \,\mathrm{e}^{- {\frac12} L_k^\zeta} 
\le C_3 \,\mathrm{e}^{-{\frac12} L_{k+1}^{\xi}} 
 \leq C_3 \mathrm{e}^{-{\frac12}(|x|- \varrho)^\xi} \le 
C_3 \mathrm{e}^{{\frac12}\varrho^\xi}\mathrm{e}^{-{\frac12} |x|^\xi} \,
\nonumber
\end{eqnarray}
for all $|x| \ge L_0 + \varrho$.
Thus  (\ref{expec}) follows (for a slightly smaller $\xi$), and
Theorem~\ref{tBMSA25} is proved.
\end{proof}

\section{How to do  a multiscale analysis}\label{proofMSA}

To exemplify how to perform  a multiscale analysis we  give the 
proof of Theorem~\ref{msa2},
a modification of
 the  proof of \cite[Theorem 2.2]{VDK}.

\begin{proof}[Proof of Theorem~\ref{msa2}]

Given $x \in \mathbb{Z}^d$ we set
\begin{eqnarray} \label{xiL}
\Xi_{L,\ell}(x)& = & 
\Lambda_{L}(x) \cap \left\{x+\frac{\ell}{3} {\mathbb{Z}}^d\right\}
 \subset {\mathbb{Z}}^d  \ ,
 \; \;  \Xi_{L,\ell}= \Xi_{L,\ell}(0) \, ,\\
{\mathcal{C}}_{L,\ell}(x)& =&\{ \Lambda_{\ell}(y); \;\;y \in \Xi_{L,\ell}(x)\, ,  
\Lambda_{\ell}(y)\sqsubset\Lambda_{L}(x) \} 
\, , \; \; {\mathcal{C}}_{L,\ell}={\mathcal{C}}_{L,\ell}(0)\,.
\end{eqnarray}
Note
$| \Xi_{L,\ell}(x)| \le (3 \frac L \ell+1)^d$.
By a \emph{cell} we will  mean a closed box 
 $\overline{\Lambda}_{\ell/3}(y)$ with $ y\in\Xi_{L,\ell}(x)$,  
 the \emph{core} of the box ${\Lambda}_{\ell}(y)$.
Thus ${\mathcal{C}}_{L,\ell}(x)$ is the collection of boxes of side $\ell$
whose core is a cell and are inside the boundary belt 
 $\tilde{\Upsilon}_L(x)$ of the big  box ${\Lambda}_{L}(x)$; 
we have  $| {\mathcal{C}}_{L,\ell}(x)| \le  (3 \frac L \ell-2)^d$.
Note that the big box is  covered by  cells:  
$\overline{\Lambda}_{L}(x) \subset
\bigcup_{y \in\Xi_{L,\ell}(x)} \overline{\Lambda}_{\ell/3}(y)$. 

Given $\theta, p, p^\prime$ such that
\begin{equation} \label{thetap}
 0< {p} <p^{\prime} < \theta - bd\; \;\;\mbox{and} \;\;\;
  1<\alpha<\min\left\{\frac {2{p}+2d}{{p} + 2d}, 
\frac{\theta }{{p} +bd}\right\}\, ,
\end{equation}
we pick  $s$ and $\theta^\prime$ such that
\begin{equation} \label{thetap2}
\frac \theta 2<  \theta^\prime\; \;\;\mbox{and} \;\;\;
 {p}+ bd  < s<\alpha s< \theta^\prime < \theta \, .
\end{equation}
 Recalling 
$m_0= 2\theta {\log L_0 \over L_0}$, we have
\begin{equation}
\frac{m_0}{2} <m_0^\prime= 2\theta^\prime {\log L_0 \over L_0} < m_0\, .
\end{equation}

If $\Lambda_{L_0}(x)$  is $(\omega, m_0, E_0)$-regular and
$\mathrm{dist} (\sigma (H_{\omega, x,L_0}),E_0)>{L_0^{-s}}$,
 it  follows from the (first) resolvent identity that
 $\Lambda_{L_0}(x)$  is $(\omega, {m_0^\prime}, E)$-regular for all
$E \in I=[E_0-\delta, E_0+\delta]\cap I_0$, where
\begin{equation} \label{delta100}
\delta=\delta (\theta,\theta^\prime,s,L_0) = {1 \over {2L_0^{2s}}} 
\left( \mathrm{e}^{-{m_0^\prime}{L_0\over
2}} - \mathrm{e}^{-m_0{L_0\over  2}}\right) \, .
\end{equation}
 Using the hypothesis \eqref{p1} with Remark~\ref{regsuit},
plus property (W)  at $E_0$ with $\eta = L_0^{-s}$ (see \eqref{wegner}) ,
we conclude that
\begin{eqnarray} \label{stat222}
\lefteqn{\mathbb{P} \{\mbox{
$\Lambda_{L_0}(0)$ is 
$(\omega,{m_0^\prime}, E)$-regular for every $E \in I $}  \} }   \\
 &  &\hspace{1.4in} \ge 
1 -\frac1{L_0^{p^{\prime}}} - \frac{Q_{I_0}}{ L_0^{s-bd}} \ge 1 -\frac1{L_0^{{p}}}\
 \nonumber
\end{eqnarray}
if $L_0 \ge \mathcal{B}_1=  \mathcal{B}_1(d,b,Q_{I_0},p,p^{\prime},s)$.
  Combining with
property  (IAD), we get that for $L_0 \ge \mathcal{B}_1$ we also have
\begin{equation} \label{start1}
\mathbb{P}\left\{R\left(m_0^\prime, L_0,
I,x,y\right) \right\} 
\ge 1 -\frac1{L_0^{2{p}}}
\end{equation}
 for all $x, y \in \mathbb{Z}^d$ with 
$|x-y| > L_0 +\varrho$.

We will first prove the  weaker conclusion (\ref{p111th}) by
 a single energy multiscale analysis which is basically the multiscale 
analysis of von Dreifus \cite{VD}, except that singular regions 
 are treated as in \cite{VDK}.
Let us fix $E \in I$, it obviously follows from \eqref{stat222} that
\begin{eqnarray} \label{stat2229}
\mathbb{P} \{
\Lambda_{L_0}(0) \;\;\mbox{is 
$(\omega,{m_0^\prime}, E)$-regular}  \}   \ge 1 -\frac1{L_0^{{p}}}\
\end{eqnarray}
if $L_0 \ge \mathcal{B}_1$.   Conclusion (\ref{p111th})  is  proven by induction. 
 Given a scale
$L \in6\mathbb{N}$ and  $m>0$, we let $p_L(m)$ be the probability 
that a  box at scale $L$ is $(\omega,{m}, E)$-singular 
(not $(\omega,{m}, E)$-regular),  i.e.,
\begin{eqnarray}  \label{pL1}
p_L(m) =
\mathbb{P}\{\Lambda_{L}(0) \;\mbox{ is  $(\omega,m, E)$-singular} \} \, .
\end{eqnarray}
 The induction step
 goes from scale $\ell \ge L_0$ to scale 
$L= [\ell^\alpha]_{6\mathbb{N}}\,$: given 
\begin{equation}
p_\ell(m) < \frac1{\ell^{{p}}}\;\;\; \mbox{with}\;\;\; 
 m=m_\ell \ge  2\theta^\prime {\log \ell \over\ell}\, , 
\end{equation}
we prove
\begin{equation}
p_L(M) <\frac1{L^{{p}}} 
\;\;\; \mbox{for some}\;\;\;  
 M= m _L \ge  2\theta^\prime {\log L \over L}\, .
\end{equation}
 To finish the proof of
 (\ref{p111th}), we   show $\inf_{k} m_{L_k} \ge \frac{m_0}2$, i.e.,
\begin{eqnarray}\label{summ}
 \sum_{k=0}^\infty (m_{L_k} -m_{L_{k+1}}) \le m_0^\prime - \frac{m_0}2 \, .
\end{eqnarray}

 The induction step proceeds roughly as in \cite{VDK}.  The deterministic
part is based on the SLI,
 but only boxes in ${\mathcal{C}}_{L,\ell} $ are  allowed.  
 The basic idea is that
if all boxes in ${\mathcal{C}}_{L,\ell}$ were $(\omega,{m}, E)$-regular,
 then it would
 follow from applying the estimate 
 (\ref{sli}) repeatedly that the big box $\Lambda_L (0)$ is also 
$(\omega,{M}, E)$-regular with the difference  $m-M$ ``small".

To see how this works, for a given $x \in \mathbb{Z}^d$
 we fix ${x_0\in\Xi_{\frac{L +\ell}3,\ell}}(x)$ and apply the SLI estimate
 (\ref{sli}) repeatedly with $\ell^\prime=\ell$, as long as we do not
 hit the boundary belt $\tilde{\Upsilon}_{L}(x)$ (see \eqref{bdrybelt}).  
Each time  the SLI is performed one gains a
 factor of $3^d \gamma_{I}$  and moves to
 an  adjacent cell (see Remark~\ref{rsli}).  After $N$ applications we have
\begin{eqnarray} \label{sl300}
\lefteqn{ \| {\Gamma_{x,L}} R_{\omega,x,L}(E) \chi_{x_0,{\frac \ell 3}}\| }\\  && \le
\left(3^d\gamma_{I}\right)^N \Pi_{i=1}^N 
 \| \Gamma_{x_i,\ell} R_{\omega,x_i,\ell}(E) \chi_{x_i,{\frac \ell 3}}\| 
 \| {\Gamma_{x,L}} R_{\omega,x,L}(E) \chi_{x_N,{\frac \ell 3}}\|  \nonumber \, ,
\end{eqnarray}
where $x_0,x_1,\ldots,x_N \in \Xi_{{L},\ell}(x)$ are centers of adjacent cells 
which are cores of boxes in ${{\mathcal{C}}_{L,\ell}(x)}$, i.e., 
$|x_{i}-x_{i-1}| = \frac \ell 3$ and
$\Lambda_{ \ell }(x_i)\in{{\mathcal{C}}_{L,\ell}(x)}$ for 
$i=0,1,\ldots,N$.  A moment of reflection shows that we are always 
in this situation as long as
\begin{equation}
(N-1)\frac \ell 3 \le \frac{L -3 } 2 - \frac \ell 2 - \frac {L+\ell} 6 \, .
\end{equation}
Since $N$ is an integer, we can always take $N$ to be the 
unique integer satisfying
\begin{equation} \label{sl305}
\frac L \ell -3 < N \le \frac L \ell -2 \, .
\end{equation}
If all boxes in ${{\mathcal{C}}_{L,\ell}(x)}$ are $(\omega,{m}, E)$-regular
we conclude from \eqref{sl300} and \eqref{sl305} that
\begin{eqnarray} \label{sl390}
 \| {\Gamma_{x,L}} R_{\omega,x,L}(E) \chi_{x_0,{\frac \ell 3}}\| \le
\left(3^d\gamma_{I}\mathrm{e}^{-m{\ell\over  2}}\right)^{ \frac L \ell -3}
\| R_{\omega,x,L}(E) \|   \, .
\end{eqnarray}

Thus, 
\begin{eqnarray}
\| {\Gamma_{x,L}} R_{\omega,x,L}(E) \chi_{x,\frac L 3}\|   & \le  &
\sum_{x_0 \in\Xi_{\frac{L +\ell}3,\ell}(x)} \| {\Gamma_{x,L}} R_{\omega,x,L}(E)
\chi_{x_0,{\frac \ell 3}}\|\nonumber \\   & \le  &
\left({L\over \ell}+2\right)^d 
\sup_{x_0 \in\Xi_{\frac{L +\ell}3,\ell}(x)} \| {\Gamma_{x,L}}
R_{\omega,x,L}(E)
\chi_{x_0,{\frac \ell 3}}\|
\nonumber \\    & \le  &\left({L\over \ell}+2\right)^d
\left(3^d\gamma_{I}\mathrm{e}^{-m{\ell\over  2}}\right)^{ \frac L \ell -3}
\| R_{\omega,x,L}(E) \| .\label{bigboxest}
\end{eqnarray} 
If
$\| R_{\omega,x,L}(E) \| \le L^{s}$, which holds
outside   a set of small probability by the Wegner estimate \eqref{wegner},
 we get
\begin{eqnarray}
\| {\Gamma_{x,L}} R_{\omega,x,L}(E) \chi_{x,\frac L 3}\|   \le  L^{s}
\left({L\over \ell}+2\right)^d  
\left(3^d\gamma_{I}\mathrm{e}^{-m{\ell\over  2}}\right)^{ \frac L \ell -3}
 \equiv \mathrm{e}^{-M{L\over  2}}\, ,
\label{bigboxest3}
\end{eqnarray}
with\begin{eqnarray}
M \ge m\left(1 - \frac c {\log \ell}\right) \ge   2\theta^\prime {\log L \over L}
\end{eqnarray}
for $\ell$ sufficiently large, with $c$ a constant depending only on 
$d, \gamma_I,\theta^\prime,s,\alpha$ and $L_0$.
The desired estimate \eqref{summ} follows if  $L_0$ is large enough.

Unfortunately the probabilistic estimates do not work.  We assumed that 
 all boxes in ${{\mathcal{C}}_{L,\ell}(x)}$ are $(\omega,{m}, E)$-regular and
$\| R_{x,L}(E) \| \le L^{s}$, thus we can only conclude that
\begin{eqnarray}
p_L(M) &\le&  (3 \frac L \ell-2)^d p_\ell(m) + Q_I \frac 1 {L^{s -bd}}\le
 (3 \frac L \ell-2)^d \frac 1 {\ell^{{p}}} + Q_I  \frac 1 {L^{s -bd}}
\nonumber\\
&\le&  \frac 3 {\ell^{{p} - (\alpha -1)}} + 
Q_I  \frac 1{L^{s -bd}}\, .
\end{eqnarray}
To get $p_L(M) \le \frac 1 {L^{{p}}}$ we would need
${p} - (\alpha -1) > {p} $, which is impossible
since $\alpha >1$.

To fix this problem we must relax the condition that
 all boxes in ${{\mathcal{C}}_{L,\ell}(x)}$ are $(\omega,{m}, E)$-regular
and accept the presence of   at least one
 $(\omega,{m}, E)$-singular  box in 
${{\mathcal{C}}_{L,\ell}(x)}$. 
To exploit the independence of events in  
nonoverlapping boxes (property (IAD)) we will forbid
the existence of  two \emph{nonoverlapping}   singular boxes in 
${{\mathcal{C}}_{L,\ell}(x)}$.  

To see how we obtain the improvement in the probabilities, 
let
us consider the event
\begin{eqnarray} \label{Koverlap}
\lefteqn{Q_{x}^{(K)}(E,\ell,L,m) =}\\
&&\{\omega; \;\mbox{there are $K$ nonoverlapping  $(\omega,{m}, E)$-singular
 boxes in 
${{\mathcal{C}}_{L,\ell}(x)}$}\} .\nonumber
\end{eqnarray}
Using property (IAD) we get
\begin{eqnarray}\nonumber
\mathbb{P}\{Q_{x}^{(2)}(E,\ell,L,m)\}&\le&  
| {{\mathcal{C}}_{L,\ell}(x)}|^2p_\ell(m)^2 \le 
 \left(3 \frac L \ell-2\right)^{2d}\frac{1}{\ell^{2{p}}}\le  9^d  \ell^{2d(\alpha -1)} \frac{1}{\ell^{2{p}}}\\
&<& 
\frac{1}{2\ell^{\alpha {p}}}\le  \frac{1}{2L^{ {p}}} \, ,
\label{p999}
\end{eqnarray}
with \eqref{p999} valid for large $\ell$ if
 $\alpha < \frac {2{p}+2d}{{p} + 2d} = 1 + \frac {{p}}{{p} + 2d}$, which
allows for $\alpha >1$.

We may have fixed one problem but we created another: we cannot 
estimate the right hand side of  \eqref{sl300} as before,
because we may hit a singular box, i.e., some of the $x_i$'s in \eqref{sl300}
 may not be the centers of
$(\omega,{m}, E)$-regular boxes.  So we must make changes.  
Taking $\omega \notin Q_{0}^{(2)}(E,\ell,L,m)$ we exclude
the possibility of two nonoverlapping bad boxes in  ${{\mathcal{C}}_{L,\ell}(x)}$,
so if there is one singular box, say $\Lambda_\ell(u)$ (note $u$ depends on
 $\omega, \ell,m,E$), to guarantee
 that $\Lambda_\ell(u^\prime) \in {{\mathcal{C}}_{L,\ell}(x)}$ is
a regular box we need $|u ^\prime - u|  > \ell + \varrho$.  
Taking $\ell >3\varrho$,
it suffices to have  $|u ^\prime - u|  > \frac {4\ell}3 $.
Thus $\Lambda_{\frac {7\ell}3}(u)$ is our ``singular region", i.e., the region 
such that boxes in 
 ${{\mathcal{C}}_{L,\ell}(x)}$ with cores  outside this region are regular.
Given ${x\in\Xi_{\frac{L +\ell}3,\ell}}$, we estimate
$ \| {\Gamma_{x,L}} R_{\omega,x,L}(E) \chi_{x,{\frac \ell 3}}\| $ by applying
 the SLI estimate
 (\ref{sli}) repeatedly,  as long as we do not
 hit the boundary belt $\tilde{\Upsilon}_{L}(0)$, but we now have two cases:
\begin{itemize}
\item If  $x^\prime \notin \Lambda_{\frac {7\ell}3}(u)$ and 
 $\Lambda_\ell(x^\prime)  \in
{{\mathcal{C}}_{L,\ell}(x)}$,  then $x^\prime$
 is the center of a regular box in  ${{\mathcal{C}}_{L,\ell}(x)}$ and
we use  (\ref{sli}) with $\ell^\prime=\ell$, obtaining
\begin{eqnarray}\label{sligood}
\| {\Gamma_{x,L}} R_{\omega,x,L}(E) \chi_{x^\prime,{\frac \ell 3}}\|  \le
3^d \gamma_{I}   \mathrm{e}^{-m{\ell\over  2}}
\| {\Gamma_{x,L}} R_{\omega,x,L}(E) \chi_{x^{\prime\prime},{{\frac \ell 3}}}\|
\end{eqnarray}
for some 
$x^{\prime\prime} \in \Upsilon_{\ell,\ell}(x^{\prime})$,
 i.e., 
$|x^{\prime\prime}-x^{\prime}| = \frac \ell 3$.

\item
 If   $x^\prime \in \Lambda_{\frac {7\ell}3}(u)$
and $\Lambda_{{\frac {11\ell}3}}(u)\sqsubset \Lambda_L(x)$, we apply
the SLI estimate (\ref{sli})   with $y=x^\prime$, $y^\prime= u$, and
 $\ell^\prime= 3 \ell$, so $k=9$, obtaining
\begin{eqnarray} \label{slibad}
\lefteqn{\| {\Gamma_{x,L}} R_{\omega,x,L}(E) \chi_{x^\prime,{\frac \ell 3}}\| }\\
&& \le
9^d\gamma_{I}   
\| \Gamma_{u,3\ell} R_{\omega,u,3\ell}(E) \chi_{x^\prime,{\frac \ell 3}}\| 
\| {\Gamma_{x,L}} R_{\omega,x,L}(E) \chi_{x^{\prime\prime},{{\frac \ell 3}}}\|
\nonumber
\end{eqnarray} 
for some $x^{\prime\prime} \in \Upsilon_{3\ell,\ell}(u)$
(see \eqref{2Upsilon}), so 
$|x^{\prime\prime}-u| = \frac {4\ell} 3$, and hence
$x^{\prime\prime} \notin   \Lambda_{\frac {7\ell}3}(u)$ with
$\Lambda_\ell(x^{\prime\prime}) \in {{\mathcal{C}}_{L,\ell}(x)}$.
We are now in the previous case, so we can use   (\ref{sligood}) to get
\begin{eqnarray} \label{slibad2}
\lefteqn{\| {\Gamma_{x,L}} R_{\omega,x,L}(E) \chi_{x^\prime,{\frac \ell 3}}\| }\\
&& \le 
27^d \gamma_{I}^2   \mathrm{e}^{-m{\ell\over  2}}
\|  R_{\omega,u,3\ell}(E) \| 
\| {\Gamma_{x,L}} R_{\omega,x,L}(E) \chi_{x^{\prime\prime\prime},{{\frac \ell 3}}}\|
\nonumber
\end{eqnarray} 
for some
 $x^{\prime\prime\prime} \in \Upsilon_{\ell,\ell}(x^{\prime\prime})
$; note $|x^{\prime\prime\prime}-u| \le \frac {5\ell} 3$  
and $|x^{\prime\prime\prime}-x^\prime| \le \frac {8\ell} 3$.

\end{itemize}
To control $\|  R_{\omega,u,3\ell}(E) \| $ in \eqref{slibad2}
and $\|  R_{\omega,x,L}(E) \| $ in the final expression 
we will require
\begin{eqnarray}  \label{w1}
\|  R_{\omega,u,3\ell}(E) \| \le L^s \;\; \mbox{for all 
$u \in  \Xi_{L,\ell}(x)$}\, ,
\end{eqnarray}
and
\begin{eqnarray}
\|  R_{\omega,0,L}(E) \| \le L^s  \label{w2} \, .
\end{eqnarray}
To do so, let us define the events
\begin{eqnarray} \label{Wall}
\lefteqn{W_x(E,L,3\ell,s) =}\\
&&\left\{\omega; \;   
\mathrm{dist}\left({\sigma}(H_{{\omega},u,3\ell}) ,E\right) 
> \frac 1 {L^s} \; \; \mbox{for some 
$u \in  \Xi_{L,\ell}(x)$}\right\} \nonumber\
\end{eqnarray}
and
\begin{eqnarray} \label{W1}
W_x(E,L,s)= \left\{\omega; \;   
\mathrm{dist}\left({\sigma}(H_{{\omega},x,L}) ,E\right) 
> \frac 1 {L^s} \right\}\, ,
\end{eqnarray}
We will require $\omega \notin 
W_x(E,L,3\ell,s)\cup  W_x(E,L,s)$, so \eqref{w1} and \eqref{w2} hold.
This will be permissible since it follows from
 \eqref{wegner} that
\begin{eqnarray} \nonumber
\mathbb{P} \{W_x(E,L,3\ell,s)\cup  W_x(E,L,s) \}
&\le&  (3 \frac L \ell+1)^d  Q_I \frac{(3\ell)^{bd}}{ L^{s}} +  
 Q_I  \frac 1 {L^{s -bd}}\\
&<&  \frac{1}{2\ell^{\alpha {p}}}
\le  \frac{1}{2L^{ {p}}}  \label{probw}
\end{eqnarray}
for large $\ell$, since we chose $s > {p} + bd$.

Thus if  $\omega \notin Q_{x}^{(2)}(E,\ell,L,m) 
\cup W_x(E,L,3\ell,s)\cup  W_x(E,L,s)$,
  for each ${x_0\in\Xi_{\frac{L +\ell}3,\ell}}(x)$ we find that
after applying either
\eqref{sligood} or \eqref{slibad2} with \eqref{w1} repeatedly,  stopping
before we 
 hit the boundary belt $\tilde{\Upsilon}_{L}(x)$, we have
\begin{eqnarray} \label{sl360}
\lefteqn{ \| {\Gamma_{x,L}} R_{\omega,x,L}(E) \chi_{x_0,{\frac \ell 3}}\| }\\  && \le
\left(3^d \gamma_{I}   \mathrm{e}^{-m{\ell\over  2}}\right)^{N_r} 
\left(27^d \gamma_{I}^2 L^s  \mathrm{e}^{-m{\ell\over  2}}\right)^{N_s}
 \| {\Gamma_{x,L}} R_{\omega,x,L}(E) \chi_{x_N,{\frac \ell 3}}\|  \nonumber \\
 && \le
\left(3^d \gamma_{I}   \mathrm{e}^{-m{\ell\over  2}}\right)^{N_r} 
\left(27^d \gamma_{I}^2 L^s  \mathrm{e}^{-m{\ell\over  2}}\right)^{N_s}
L^s  \nonumber \, ,
\end{eqnarray}
where $N_r$ and $N_s$ are the number of times we used 
\eqref{sligood} or \eqref{slibad2}, respectively, $N= N_r + N_s$. 
Since $ m \ge  2\theta^\prime {\log \ell \over\ell}$
and $\theta^\prime > \alpha s$,
we can take $\ell$ sufficiently large such that
\begin{equation}  \label{1/2}
27^d \gamma_{I}^2 L^s  \mathrm{e}^{-m{\ell\over  2}} \le 
27^d \gamma_{I}^2 \frac 1 {\ell^{\theta^\prime -\alpha s}}     <  \frac 1 2 \, .
\end{equation}
Combining \eqref{sl360} and \eqref{1/2}, we get
\begin{eqnarray} \label{sl366}
 \| {\Gamma_{x,L}} R_{\omega,x,L}(E) \chi_{x_0,{\frac \ell 3}}\| 
\le
\left(3^d \gamma_{I}   \mathrm{e}^{-m{\ell\over  2}}\right)^{N_r} 
\frac 1 {2^{N_s}}  L^s   \, .
\end{eqnarray}
We cannot  hit the boundary belt $\tilde{\Upsilon}_{L}(x)$ as long
\begin{equation}
(N_r-1)\frac \ell 3 \le \frac{L -3 } 2 - \frac \ell 2 - \frac {L+\ell} 6 -
\frac {8 \ell}3 \, ,
\end{equation}
where we subtracted $\frac {8 \ell}3$ due to the fact that we may
 have gone through the bad region.  Thus we  always have
\eqref{sl366} if
\begin{equation}
N_r \le \frac L \ell -10 \, .
\end{equation}
We have then two possible cases:
 either 
 $N_s$  is large enough so that the right hand side of \eqref{sl366} is 
$\le  \mathrm{e}^{-m{L\over  2}}L^s$, or we get \eqref{sl366}
with $N_r$ the integer satisfying
\begin{equation}
 \frac L \ell -11 <N_r \le \frac L \ell -10 \, ,
\end{equation}and hence
\begin{eqnarray} \label{sl367}
 \| {\Gamma_{x,L}} R_{\omega,x,L}(E) \chi_{x_0,{\frac \ell 3}}\| 
\le
\left(3^d \gamma_{I}   \mathrm{e}^{-m{\ell\over  2}}\right)^{\frac L \ell -11} 
  L^s   \, .
\end{eqnarray}
The estimate \eqref{sl367} holds in either case, so we can proceed as in
\eqref{bigboxest} to get
\begin{eqnarray}
\| {\Gamma_{x,L}} R_{\omega,x,L}(E) \chi_{x,\frac L 3}\|   \le  L^{s}
\left({L\over \ell}+2\right)^d  
\left(3^d\gamma_{I}\mathrm{e}^{-m{\ell\over  2}}\right)^{ \frac L \ell -11}
 \equiv \mathrm{e}^{-M{L\over  2}} \, ,
\label{bigboxest39}
\end{eqnarray}
with\begin{eqnarray}
M \ge m\left(1 - \frac {c_1} {\log \ell}\right) \ge   2\theta^\prime {\log L \over L}
\end{eqnarray}
for $\ell$ sufficiently large, with $c_1$ a constant depending only on 
$d, \gamma_I,\theta^\prime,s,\alpha$ and $L_0$.
The desired estimate \eqref{summ}  follows if  $L_0$ is large enough.
Moreover, it follows from \eqref{p999}    and \eqref{probw} that
for sufficently large $L_0$ we have
\begin{equation}
p_L(M)\le \mathbb{P}\{ Q_{x}^{(2)}(E,\ell,L,m) 
\cup W_x(E,L,3\ell,s)\cup  W_x(E,L,s)\} <   \frac{1}{L^{ {p}}} \, .
\end{equation}
The single energy multiscale analysis (\ref{p111th}) is proven.

We now turn to the proof of the energy interval multiscale analysis
\eqref{msres-bstp2}.   We fix   a
compact subinterval ${\tilde{I}_0}$ of $\mathcal{I}$ with
$I_0 \subset \tilde{I}_0^\circ $, so
 $\mathrm{dist}(I_0,\mathcal{I}\backslash{\tilde{I}_0})>0$.  We require 
\eqref{thetap}, \eqref{thetap2}, and
\begin{equation} \label{theta2p}
\theta > 2p +(b+1)d  \, .
\end{equation}
 As before, the proof proceeds by induction.  
 The initial step in the
 induction is given by  \eqref{start1}.
 Given a scale
$L \in6\mathbb{N}$ and  $m>0$, we set 
\begin{eqnarray}  \label{pL15}
P_L(m,x,y) =
\mathbb{P}\,\left\{R\left(m, L,
I,x,y\right)^{\mathrm{c}}\right\}  \, ,
\end{eqnarray}
where  $A^{\mathrm{c}}$ denotes the complement of the event $A$. 
The induction step
 goes from scale $\ell \ge L_0$ to scale 
$L= [\ell^\alpha]_{6\mathbb{N}}\,$: given that 
\begin{equation}
P_\ell(m,x,y) < \frac1{\ell^{2{p}}}\;\;\mbox{for all
  $x, y \in \mathbb{Z}^d$ with 
$|x-y| > \ell +\varrho\,$,  with
$ m=m_\ell \ge  2\theta^\prime {\log \ell \over\ell}$},
\end{equation}
 we prove
\begin{equation}
P_L(M,x,y) <\frac1{L^{2{p}}}\;\mbox{for all
  $x, y \in \mathbb{Z}^d$ with 
$|x-y| > \ell +\varrho$, some
$M= m _L \ge  2\theta^\prime {\log L \over L}$}.
\end{equation}
To finish the proof of
 (\ref{msres-bstp2}), we   show that that \eqref {summ}
holds for these $m_{L_k}$'s.

The deterministic part of the argument is  quite similar to the one
 we used for the single energy multiscale analysis, except that the probabilistic
estimates will require us to accept the possibility of more singular boxes; 
for every $E \in I$ we will
 forbid the existence of  \emph{four} nonoverlapping  singular boxes in 
either ${\mathcal{C}}_{L,\ell}(x)$ or ${\mathcal{C}}_{L,\ell}(y)$. 
 But the probabilistic estimates
 will require some new ideas.

 Let  ${x} \in \mathbb{Z}^d$  
 and $E \in I$,
and suppose there are at most three nonoverlapping  
 $(\omega,{m}, E)$-singular boxes in 
${\mathcal{C}}_{L,\ell}({x})$, i.e., $ \omega \notin Q_{x}^{(4)}(E,\ell,L,m)$.
  In this case we can always find three boxes 
$\Lambda_\ell(u_i) \in {\mathcal{C}}_{L,\ell}({x})$, $i=1,2,3$,
with $|u_i -u_j| > \ell +\varrho$ if $i\not= j$, such that
 to guarantee
 that $\Lambda_\ell(u^\prime) \in {\mathcal{C}}_{L,\ell}({x})$ is
a $(\omega,{m}, E)$-regular box we need $|u ^\prime - u_i|  > \ell + \varrho$ for each $i=1,2,3$.  
(Note that the $u_i$ depend on
 $\omega, \ell,m,E$.  We may not need all three boxes, but under 
our hypothesis it is always true with three.)
Taking $\ell >3\varrho$,
it suffices to have  $|u ^\prime - u_i|  > \frac {4\ell}3 $ for all $i=1,2,3$. 
We have three cases:
\begin{enumerate}
\item The  closed  boxes $\bar{\Lambda}_{\frac {7\ell}3}(u_i)$,  $i=1,2,3$, are all disjoint.
In this case they are the ``singular regions".

\item Two of the closed   boxes  $\bar{\Lambda}_{\frac {7\ell}3}(u_i)$, say $i=1,2$,
 are not disjoint, with the third closed   box disjoint from the others.
In this case we can find $u_{1,2} \in  \Xi_{L,\ell}({x})$ such that
 $\Lambda_{\frac {7\ell}3}(u_3)$ and 
 $\Lambda_{ {5\ell}}(u_{1,2})$ are our ``singular regions".

\item None of the three closed  boxes  $\bar{\Lambda}_{\frac {7\ell}3}(u_i)$,  $i=1,2,3$
is disjoint from the other two. In this case we can find 
$u_{1,2,3} \in  \Xi_{L,\ell}({x})$ such that
$\Lambda_{ {7\ell}}(u_{1,2,3})$ is our ``singular region".
\end{enumerate}
The point is that all  boxes in 
 ${\mathcal{C}}_{L,\ell}({x})$ with cores  outside
the``singular regions" are regular.  
In all three cases we can find $v_{j} \in  \Xi_{L,\ell}({x})$, 
$\ell_j \in \{{\frac {7\ell}3}, 5 \ell, 7\ell\}$, with
$j=1,\ldots,r\le 3$,  $\sum_{j=1}^r \ell_j \le {\frac {22\ell}3}$, such that 
the closed boxes  $  \bar{\Lambda}_{\ell_j}(v_j)$ are  disjoint and
all  boxes in 
 ${\mathcal{C}}_{L,\ell}({x})$ with cores  outside 
$\bigcup_{j=1}^r   \Lambda_{\ell_j}(v_j)$ are  $(\omega,{m}, E)$-regular.

Given ${x_0\in\Xi_{\frac{L +\ell}3,\ell}}({x})$, we estimate
$ \| \Gamma_{x,L} R_{\omega,x,L}(E) \chi_{x_0,{\frac \ell 3}}\| $ as before
 by applying
 the SLI estimate
 (\ref{sli}) repeatedly,  as long as we do not
 hit the boundary belt $\tilde{\Upsilon}_{L}({x})$.  We now
 have the following cases:
\begin{itemize}
\item If  $x^\prime \notin\bigcup_{j=1}^r   \Lambda_{\ell_j}(v_j)$ and 
 $\Lambda_\ell(x^\prime)  \in
{\mathcal{C}}_{L,\ell}({x})$,  then $x^\prime$
 is the center of a regular box in  ${\mathcal{C}}_{L,\ell}({x})$ and
we use  (\ref{sligood}). 

\item
 If   $x^\prime \in \Lambda_{\ell_j}(v_j)$
and $\Lambda_{\ell_j + {\frac {4\ell}3}}(v_j)\sqsubset \Lambda_L({x})$, we apply
the SLI estimate (\ref{sli})   with $y=x^\prime$, $y^\prime= v_j$, and
 $\ell^\prime=\ell_j + {\frac {2\ell}3}$, so $k\le 23$, obtaining
\begin{eqnarray} \label{slisingular44}
\lefteqn{\| \Gamma_{{x},L} R_{\omega,{x},L}(E) \chi_{x^\prime,{\frac \ell 3}}\| }\\
&& \le
23^d\gamma_{I}   
\| \Gamma_{v_j,\ell_j + {\frac {2\ell}3}}
 R_{\omega,v_j,\ell_j + {\frac {2\ell}3}}(E) \chi_{x^\prime,{\frac \ell 3}}\| 
\| \Gamma_{{x},L} R_{\omega,{x},L}(E)  \chi_{x^{\prime\prime},{{\frac \ell 3}}}\|
\nonumber
\end{eqnarray} 
for some $x^{\prime\prime} \in \Upsilon_{\ell_j + {\frac {2\ell}3},\ell}(v_j)$
(see \eqref{2Upsilon}), so 
$|x^{\prime\prime}-v_j| =\frac {\ell_j }2+ {\frac {\ell}6}$, and hence
$x^{\prime\prime} \notin
  \bigcup_{{j^\prime}=1}^r   \Lambda_{\ell_{j^\prime}}(v_{j^\prime})$ with
$\Lambda_\ell(x^{\prime\prime}) \in {\mathcal{C}}_{L,\ell}({x})$.
We are now in the previous case, so we can use   (\ref{sligood}) to get
\begin{eqnarray} \label{slibad24}
\lefteqn{\|  \Gamma_{{x},L} R_{\omega,{x},L}(E)  \chi_{x^\prime,{\frac \ell 3}}\| }\\
&& \le 
69^d \gamma_{I}^2   \mathrm{e}^{-m{\ell\over  2}}
\|  R_{\omega,v_j,\ell_j + {\frac {2\ell}3}}(E) \| 
\| \Gamma_{{x},L} R_{\omega,{x},L}(E) \chi_{x^{\prime\prime\prime},{{\frac \ell 3}}}\|
\nonumber
\end{eqnarray} 
for some
 $x^{\prime\prime\prime} \in \Upsilon_{\ell,\ell}(x^{\prime\prime})
$; note $|x^{\prime\prime\prime}-v_j| \le \frac {\ell_j + \ell} 2$  
and $|x^{\prime\prime\prime}-x^\prime| \le \ell_j + \frac {\ell} 3$.

\end{itemize}

To control $\|  R_{\omega,v_j,\ell_j + {\frac {2\ell}3}}(E)  \| $ in
 \eqref{slibad24}
we now require  
\begin{eqnarray}  \label{w14}
\|  R_{\omega,v,\ell^\prime}(E) \| \le L^s \;\; \mbox{for all 
$v \in  \Xi_{L,\ell}({x})$ and
 $\ell^\prime \in \{3 \ell, \frac {17\ell}3,\frac {23\ell}3\}$}\, ,
\end{eqnarray}
i.e.,
\begin{equation}
\omega \notin \
\bigcup_{\ell^\prime \in\{3 \ell, \frac {17\ell}3,\frac {23\ell}3\}}
W_x(E,L,\ell^\prime,s)\, . 
\end{equation}
Given ${x_0\in\Xi_{\frac{L +\ell}3,\ell}}({x})$,  we  apply either
\eqref{sligood} or \eqref{slibad24} with \eqref{w14} repeatedly,
 as long as we do not
 hit the boundary belt $\tilde{\Upsilon}_{L}(x)$, obtaining
\begin{eqnarray} \label{sl3604}
\lefteqn{ \|  \Gamma_{{x},L} R_{\omega,{x},L}(E) \chi_{x_0,{\frac \ell 3}}\| }\\  && \le
\left(3^d \gamma_{I}   \mathrm{e}^{-m{\ell\over  2}}\right)^{{N_r}} 
\left(69^d \gamma_{I}^2 L^s  \mathrm{e}^{-m{\ell\over  2}}\right)^{N_s}
 \|  \Gamma_{{x},L} R_{\omega,{x},L}(E)  \chi_{x_N,{\frac \ell 3}}\|  \nonumber \, ,
\end{eqnarray}
where ${N_r}$ and $N_s$ are the number of times we used 
\eqref{sligood} or \eqref{slibad24} with \eqref{w14}, respectively and
 $N+N_r +N_s$.
Since $ m \ge  2\theta^\prime {\log \ell \over\ell}$
and $\theta^\prime > \alpha s$,
we can take $\ell$ sufficiently large such that
\begin{equation}  \label{1/24}
69^d \gamma_{I}^2 L^s  \mathrm{e}^{-m{\ell\over  2}} \le 
69^d \gamma_{I}^2 \frac 1 {\ell^{\theta^\prime -\alpha s}}     <  \frac 1 2 \, .
\end{equation}
Combining \eqref{sl3604}, \eqref{1/24}, and  taking
$\omega \notin W_x(E,L,s)$, i.e., 
$\|  R_{\omega,x,L}(E) \| \le L^s$,
 we get
\eqref{sl366}, but now to guarantee that 
we do not  hit the boundary belt $\tilde{\Upsilon}_{L}(x)$ we need
\begin{equation}
({N_r}-1)\frac \ell 3 \le \frac{L -3 } 2 - \frac \ell 2 - \frac {L+\ell} 6 -
8 \ell \, ,
\end{equation}
where we subtracted ${8 \ell}$ due to the fact that we may
 have gone through the bad regions.  Thus we  always have
\eqref{sl366} if
\begin{equation}
{N_r} \le \frac L \ell -26 \, .
\end{equation}
As before, we have  two possibilities:
 either 
 $N_s$  is large enough so that the right hand side of \eqref{sl366} is 
$\le  \mathrm{e}^{-m{L\over  2}}L^s$, or we get \eqref{sl366}
with ${N_r}$ the integer satisfying
\begin{equation}
 \frac L \ell -27 <{N_r} \le \frac L \ell -26 \, ,
\end{equation}
and hence
\begin{eqnarray} \label{sl3674}
 \| \Gamma_{{x},L} R_{\omega,{x},L}(E)  \chi_{x_0,{\frac \ell 3}}\| 
\le
\left(3^d \gamma_{I}   \mathrm{e}^{-m{\ell\over  2}}\right)^{\frac L \ell -27} 
  L^s   \, .
\end{eqnarray}
The estimate \eqref{sl3674} holds in either case, so we can proceed as in
\eqref{bigboxest} to get
\begin{eqnarray}\label{decay7}
\|  \Gamma_{{x},L} R_{\omega,{x},L}(E)  \chi_{x_0,\frac L 3}\|   \le  L^{s}
\left({L\over \ell}+2\right)^d  
\left(3^d\gamma_{I}\mathrm{e}^{-m{\ell\over  2}}\right)^{ \frac L \ell -27}
 \equiv \mathrm{e}^{-M{L\over  2}}
\label{bigboxest37}
\end{eqnarray}
with\begin{eqnarray} \label{M7}
M \ge m\left(1 - \frac {c_2} {\log \ell}\right) \ge   2\theta^\prime {\log L \over L}
\end{eqnarray}
for $\ell$ sufficiently large, with $c_2$ a constant depending only on 
$d, \gamma_I,\theta^\prime,s,\alpha$ and $L_0$.
The desired estimate \eqref{summ}  follows if  $L_0$ is large enough.

To finish the proof we need to establish the desired estimate on $P_L (M,x,y)$,
where 
  $x, y \in \mathbb{Z}^d$ with 
$|x-y| > L +\varrho$.  Given $u \in \mathbb{Z}^d$, let $Q^{(K)}_{u}(I,\ell,L,m)$ be
 the event that there is an energy $E \in I$
for which  ${\mathcal{C}}_{L,\ell}(u)$
contains  at least $K$ $(\omega,{m}, E)$-singular nonoverlapping  boxes,
i.e.,
\begin{eqnarray}
Q^{(K)}_{u}(I,\ell,L,m)= \bigcup_{E \in I} Q_{u}^{(K)}(E,\ell,L,m)\, ,
\end{eqnarray}  
and let
\begin{eqnarray}
V_{u}(I,\ell,L,s) = \bigcup_{E \in I}
 \left[\left(\bigcup_{\ell^\prime \in\{3 \ell, \frac {17\ell}3,\frac {23\ell}3\}}
W_u (E,L,\ell^\prime,s)\right)\cup W_u (E,L,s)\right]\, .
\end{eqnarray}
We set 
\begin{equation}
Q^{(K)}_{x,y}(I,\ell,L,m)=
 Q^{(K)}_{x}(I,\ell,L,m)\cup Q^{(K)}_{y}(I,\ell,L,m) \, ,
\end{equation}
and
\begin{equation}
V_{x,y}(I,\ell,L,s) =  V_{x}(I,\ell,L,s) \cap V_{y}(I,\ell,L,s)\, .
\end{equation}
 If
$\omega \notin Q^{(4)}_{x,y}(I,\ell,L,m)
\cup V_{x,y}(I,\ell,L,s)$, 
 for every $E\in I$ we have \eqref{decay7}
 and \eqref{M7} for either $ \Lambda_L(x)$ or $\Lambda_L (y)$,
and hence, using the tranlstion invariance of the probabilities,
we have
\begin{equation} \label{probest}
P_L(M,x,y) \le 2\mathbb{P}\{Q^{(4)}_{0}(I,\ell,L,m)\} + 
 \mathbb{P}\{V_{x,y}(I,\ell,L,s)\} \, .
\end{equation}

We first estimate $ \mathbb{P}\{Q^{(4)}_{0}(I,\ell,L,m)\}$.  
Let  ${\mathcal{C}}_{L,\ell}^{(K)}$ denote be the collection
of  $K$ nonoverlapping boxes in ${\mathcal{C}}_{L,\ell}$. 
We have, using  property (IAD) and  the induction hypothesis, that
\begin{eqnarray} \label{PQ2}
\lefteqn{\mathbb{P}\{Q^{(4)}_{0}(I,\ell,L,m)\}   } \\
&& \le 
\sum_{\{\Lambda_\ell(u), \Lambda_\ell(v)\} \in {\mathcal{C}}_{L,\ell}^{(2)}}
\hspace{-.2in} \mathbb{P}\{R(m,\ell,I,u,v)^{\mathrm{c}}\} \,
\mathbb{P}\left\{\bigcup_{\stackrel
{\{\Lambda_\ell(u^\prime), \Lambda_\ell(v^\prime)\}
 \in {\mathcal{C}}_{L,\ell}^{(2)}}
{\{\Lambda_\ell(u), \Lambda_\ell(v), \Lambda_\ell(u^\prime),
 \Lambda_\ell(v^\prime)\} \in {\mathcal{C}}_{L,\ell}^{(4)}}} \hspace{-.5in}
R(m,\ell,I,u^\prime,v^\prime)^{\mathrm{c}}\right\}  \nonumber \\
&& \le 
\sum_{\{\Lambda_\ell(u), \Lambda_\ell(v)\} \in {\mathcal{C}}_{L,\ell}^{(2)}}
\hspace{-.2in} \mathbb{P}\{R(m,\ell,I,u,v)^{\mathrm{c}}\} \,
\mathbb{P}\left\{\bigcup_{
{\{\Lambda_\ell(u^\prime), \Lambda_\ell(v^\prime)\}
 \in {\mathcal{C}}_{L,\ell}^{(2)}}} \hspace{-0.2in}
R(m,\ell,I,u^\prime,v^\prime)^{\mathrm{c}}\right\}  \nonumber \\
&&  \le  \left(\sum_{\{\Lambda_\ell(u), \Lambda_\ell(v)\}
 \in {\mathcal{C}}_{L,\ell}^{(2)}}
\hspace{-.2in} \mathbb{P}\{R(m,\ell,I,u,v)^{\mathrm{c}}\} 
\right)^2   =  \left(\sum_{\{\Lambda_\ell(u), \Lambda_\ell(v)\}
 \in {\mathcal{C}}_{L,\ell}^{(2)}}
\hspace{-.2in} P_\ell (m,u,v)
\right)^2 \nonumber \\ && 
\le \left( \left( 3 \frac L \ell\right)^{2d} \frac1{\ell^{2{p}}}\right)^2
 \le  3^{4d} \frac1{\ell^{4({p} - d(\alpha -1))}} \, . \nonumber
\end{eqnarray}

It remains to estimate $\mathbb{P}\{V_{x,y}(I,\ell,L,s)\}$.  
 Let 
$\tilde{\sigma}(A)= \sigma(A)\cap {\tilde{I}_0}$ for any operator
 $A$.
If $\Lambda_{\ell_1}(u)$ and $\Lambda_{\ell_2}(v)$ are 
nonoverlapping
boxes, then it follows from properties (IAD), (NE) and (W) that
for $ \eta < \mathrm{dist} ( I_0, \mathcal{I}\backslash \tilde{I}_0)$ we have
\begin{equation} \label{wegnerd}
\mathbb{P} \left\{ \mathrm{dist} \left(\tilde{\sigma}(H_{\omega,u,\ell_1}),
\tilde{\sigma}(H_{\omega,v,\ell_2})\right) \le \eta\right\}
\le C_{{\tilde{I}_0}}  Q_{{\tilde{I}_0}} 
\eta \ell_1^{bd} \ell_2^{d} \ .
\end{equation}
To see that, let $\mathcal{F}_1$ and $\mathcal{F}_2$ be the
 $\sigma$-algebras generated by events based on the boxes 
$\Lambda_{\ell_1}(u)$ and $\Lambda_{\ell_2}(v)$, 
 respectively. We set $ \mathbb{P}_i$ to be the restriction of
the probability measure $\mathbb{P}$ to $ \mathcal{F}_i$, with
$\mathbb{E}_i$ the corresponding expectation and $\omega_i$
the corresponding variable of integration,  $i=1,2$.
Using  the independence given by property (IAD), we have
\begin{eqnarray}\label{wegnerd1}
\lefteqn{\mathbb{P} \left\{ \mathrm{dist} \left(\tilde{\sigma}(H_{\omega,u,\ell_1}),
\tilde{\sigma}(H_{\omega,v,\ell_2})\right) \le \eta\right\} = }\\
&&\mathbb{E}_2\left\{\mathbb{P}_1 
 \left\{ \mathrm{dist} \left(\tilde{\sigma}(H_{{\omega_1},u,\ell_1}),
\tilde{\sigma}(H_{\omega_2,v,\ell_2})\right) \le \eta\right\}\right\}\nonumber
\end{eqnarray}
 For a fixed $\omega_2$ we have 
$\tilde{\sigma}(H_{\omega_2,v,\ell_2}) =
\{\lambda_1, \lambda_2, \ldots,\lambda_{N_{\omega_2}}\}$, where
\begin{equation}\label{NEaveraged}
\mathbb{E}_2 (N_{\omega_2})\le  C_{{\tilde{I}_0}}  \ell_2^d
\end{equation}
 by property (NE).  (Note that
$N_{\omega_2}$ and $\lambda_1, \lambda_2, \ldots,\lambda_{N_{\omega_2}}$ depend on
$\omega_2,v,\ell_2$.)  Using property (W), we get
\begin{eqnarray} \label{wegnerd2}
\mathbb{P}_1 
\lefteqn{ \left\{ \mathrm{dist} \left(\tilde{\sigma}(H_{{\omega_1},u,\ell_1}),
\tilde{\sigma}(H_{\omega_2,v,\ell_2})\right) \le \eta\right\} \le}\\
&& \sum_{j=1}^{N_{\omega_2}}  \mathbb{P}_1 
 \left\{ \mathrm{dist} \left(\tilde{\sigma}(H_{{\omega_1},u,\ell_1}),
\lambda_j)\right) \le \eta\right\}  \le  Q_{{\tilde{I}_0}} 
\eta \ell_1^{bd} N_{\omega_2}\, . \nonumber
\end{eqnarray}
The estimate \eqref{wegnerd}  follows from \eqref{wegnerd1},
\eqref{wegnerd2}, and \eqref{NEaveraged}.

Let  $Z_{x,y}(I,\ell,L,s)$  denote the event that 
\begin{eqnarray}
\mathrm{dist} \left(\tilde{\sigma}(H_{\omega,u,\ell_1}),
\tilde{\sigma}(H_{\omega,v,\ell_2})\right) \le \frac 2 {L^s}
\end{eqnarray}
for either
\begin{itemize}
\item[(i)] $u=x$, $v=y$, and $\ell_1=\ell_2=L$,  or

\item[(ii)] $u=x$,  $\ell_1=L$,  and some 
$v \in  \Xi_{L,\ell}(y)$ and $\ell_2 \in 
\{3 \ell, \frac {17\ell}3,\frac {23\ell}3\}$, or

\item[(iii)] $v=y$,  $\ell_2=L$,  and some 
$u \in  \Xi_{L,\ell}(x)$, and $\ell_2 \in 
\{3 \ell, \frac {17\ell}3,\frac {23\ell}3\}$, or

\item[(iv)] some  $u \in  \Xi_{L,\ell}(x)$,
$v \in  \Xi_{L,\ell}(y)$,  and $\ell_1,\ell_2 \in 
\{3 \ell, \frac {17\ell}3,\frac {23\ell}3\}$.

\end{itemize}
Clearly
\begin{equation}
V_{x,y}(I,\ell,L,s) \subset Z_{x,y}(I,\ell,L,s) \, ,
\end{equation}
and it follows from \eqref{wegnerd}, if $L_0$ is large enough so 
$ \frac 1 {L^s} \le 
\frac 1 {L_0^s}< \mathrm{dist} ( I_0, \mathcal{I}\backslash \tilde{I}_0)$, that 
\begin{eqnarray}  \label{wegnerd5}
\lefteqn{\mathbb{P}\left\{ Z_{x,y}(I,\ell,L,s)\right\}
}\\
&\le&  \frac{2 C_{{\tilde{I}_0}}  Q_{{\tilde{I}_0}}}{L^s} 
\left\{L^{(b+1)d} + 
6 \left(3 \frac L \ell\right)^d L^d \left(\frac {23\ell}3 \right)^{bd}
+ \left(3 \frac L \ell\right)^{2d}\left(\frac {23\ell}3 \right)^{(b+1)d}\right\} 
\nonumber \\
&\le& \nonumber  
 \frac{C_{d,b,\alpha} C_{{\tilde{I}_0}}  Q_{{\tilde{I}_0}}}{L^s} 
\left\{L^{(b+1)d} +   \,L^{\left(2   + \frac{b-1}\alpha\right)d}
\right\} \le 
 \frac{2C_{d,b,\alpha}  C_{{\tilde{I}_0}}  Q_{{\tilde{I}_0}}}{L^{s-(b+1)d}} \, ,
\end{eqnarray}
where $C_{d,b,\alpha} $ is a finite constant depending only on $d,b,\alpha$.

It now follows from \eqref{probest}, \eqref{PQ2}, and \eqref{wegnerd5} that 
\begin{eqnarray}
P_L(M,x,y) \le 2\cdot   3^{4d} \frac1{\ell^{4({p} - d(\alpha -1))}} 
+  \frac{4C_{d,b,\alpha}  C_{{\tilde{I}_0}}  Q_{{\tilde{I}_0}}}{L^{s-(b+1)d}}
<\frac1{L^{2{p}}}
\end{eqnarray}
for sufficiently large $L$, since $\alpha < \frac{2{p}+2d}{{p} + 2d}$
 and $s> 2 {p} + (b+1)d$.

\end{proof}

\begin{acknowledgement} It is a pleasure to thank my collaborators on the
 multiscale analysis: Henrique von Dreifus, Alexander Figotin, and Fran\c cois Germinet.
\end{acknowledgement}



\begin{thebibliography}{ASFHH}

\bibitem[A]{A}  Aizenman, M.: {Localization at weak disorder: some
elementary bounds}. Rev. Math. Phys. {\bf 6}, 1163-1182 (1994)

\bibitem[AENSS]{AENSS}  Aizenman, M., Elgart, A., Naboko, S.,   
 Schenker, J.,  Stolz, G.: Moment analysis for localization in random Schr\"odinger 
operators. Inv. Math. \textbf{163}, 343-413 (2006)


\bibitem[AM]{AM}  Aizenman, M.,  Molchanov, S.:  {Localization at large
disorder and extreme energies:  an elementary derivation}.  Commun. Math.
Phys. {\bf 157}, 245-278 (1993)

\bibitem[ASFH]{ASFH}  Aizenman, M.,  Schenker, J., Friedrich, R., 
Hundertmark, D.: Finite volume fractional-moment criteria for 
Anderson localization.
Commun. Math. Phys. \textbf{224}, 219-253 (2001)  

\bibitem[An1]{And}  Anderson, P.:  Absence of diffusion in certain random
lattices.  Phys. Rev. {\bf 109}, 1492-1505 (1958)

\bibitem[An2]{And2}  Anderson, P.: {A question of classical localization.
a theory of white paint}, Philosophical Magazine {\bf B} {\bf 53}, 505-509
(1958)



\bibitem[BCH1]{BCH1} Barbaroux, J.M., Combes, J.M., Hislop, P.D.:
Localization near band edges for random Schr\"odinger operators.
Helv. Phys. Acta {\bf 70}, 16-43  (1997)


\bibitem[BCH2]{BCH2} Barbaroux, J.M, Combes,  J.M.,   Hislop, P.D.: {
Landau Hamiltonians with Unbounded Random Potentials},  Lett.
Math. Phys. {\bf 40}, 355-369 (1997).



\bibitem[Be]{B}  Berezanskii, Ju.M.: {\em Expansions in eigenfunctions of
selfadjoint  operators}. Providence: Amer. Mat. Soc., 1968



\bibitem[BoGK]{BGK}  Bouclet, J.M.,   Germinet, F., Klein, A.:
Sub-exponential decay of operator kernels
 for functions of generalized Schr\"odinger operators. 
Proc. Amer. Math. Soc.  \textbf{132} ,  2703-2712  (2004)

\bibitem[BouK]{BK} Bourgain, J., Kenig, C.: On localization in the continuous Anderson-Bernoulli model in higher dimension,  Invent. Math. \textbf{161}, 389-426 (2005)



\bibitem[C]{Car}  Carmona, R.:  Exponential localization in one dimensional 
disordered systems.  Duke Math. J. {\bf 49}, 191-213 (1982)

\bibitem[CKM]{CKM}  Carmona, R., Klein, A., Martinelli, F.:  Anderson
localization for Bernoulli and other singular potentials.  Commun.
Math. Phys. {\bf 108}, 41-66 (1987)

\bibitem[CL]{CL}  Carmona, R,  Lacroix, J.: {\em Spectral theory of random
Schr\"odinger operators}. Boston: Bir\-kha\"user, 1990

\bibitem[CoH1]{CH1} Combes,  J.M.,   Hislop, P.D.: {Localization for some
continuous, random Hamiltonian in d-dimension}. J. Funct. Anal. {\bf
124}, 149-180 (1994)

\bibitem[CoH2]{CH2} Combes,  J.M.,   Hislop, P.D.:{ Landau Hamiltonians
with random potentials: localization and the density of states}.
Commun. Math. Phys. {\bf 177}, 603-629 (1996)

\bibitem[CoHN]{CHN} Combes,  J.M.,   Hislop, P.D.,  Nakamura, S.:{ The
$L^p$-theory of the spectral  shift function, the Wegner estimate and
the integrated density of states for some random operators}. 
Commun. Math. Phys. {\bf 218}, 113-130 (2001)

\bibitem[CoHKN]{CHKN} Combes,  J.M.,   Hislop, P.D.,  Klopp, F.,
Nakamura, S.: 
The Wegner estimate and the integrated density of states for
 some random operators. Proc. Indean Acad. Sci. (Math. Sci.) 
\textbf{112}, 31-53 (2002)


\bibitem[CoHT]{CHT}  Combes,  J.M.,   Hislop, P.D., Tip, A.: {Band edge
localization and the density of states for acoustic and electromagnetic
waves in random media}.
 Ann. Inst. H. Poincare Phys. Theor. \textbf{70} ,  381-428 (1999)


\bibitem[CyFKS]{CFKS}  Cycon, H.L.,   Froese, R.G.,  Kirsch, W., Simon, B.:
 {\em Schr\"odinger operators}. Heidelberg: Springer-Verlag, 1987


\bibitem[DSS]{DSS1}  Damanik, D., Sims, R., Stolz, G.: Localization for one 
dimensional, continuum, Bernoulli-Anderson models. 
 Duke Math. J. \textbf{114},  59-100 (2002)


\bibitem[DSt]{DS}  Damanik, D.,  Stollmann, P.: {Multi-scale analysis implies
strong dynamical localization}.  Geom. Funct. Anal. {\bf 11}, 11-29 (2001)



\bibitem[DeG]{DBG}  De Bi\`evre, S.,  Germinet, F.:  {Dynamical localization for
 random dimer Schr\"odinger operator}. J. Stat. Phys. {\bf 98}, 1135-1147 (2000)




\bibitem[DelJLS]{DRJLS} Del Rio, R.,  Jitomirskaya, S.,  Last, Y., Simon, B.: {Operators
with singular continuous spectrum IV: Hausdorff dimensions, rank one
pertubations and localization}. J. d'Analyse Math. {\bf 69}, 153-200 (1996)

\bibitem[DelyLS]{DelyLS}  Delyon, F., Levy, Y., Souillard, B.:  Anderson
localization for multidimensional systems at large disorder or low
energy.  Commun. Math. Phys. {\bf 100}, 463-470 (1985)


\bibitem[Dr]{VD}  von Dreifus, H.: {\em On the effects of randomness in
ferromagnetic models and Schr\"odinger operators}.  Ph.D. thesis, New York
University (1987)

\bibitem[DrK]{VDK} von Dreifus, H.,  Klein, A.: {A new proof of localization in
the Anderson tight binding model}.  Commun. Math. Phys. {\bf 124},
285-299 (1989)

\bibitem[DrK2]{VDK2} von Dreifus, H.,  Klein, A.:  {Localization for random
 Schrodinger operators with correlated potentials}. Commun. Math. Phys.
{\bf140}, 133-147 (1991)


\bibitem[FK1]{FK3}  Figotin, A.,  Klein, A.: {Localization phenomenon in
gaps of the spectrum of random lattice operators}.  J. Stat. Phys.
 {\bf 75}, 997-1021 (1994)

\bibitem[FK2]{FK4}  Figotin, A.,  Klein, A.: {Localization of electromagnetic
and acoustic waves in random media. Lattice model}.  J. Stat. Phys. {\bf 76},
985-1003 (1994) 


\bibitem[FK3]{FK1} Figotin, A.,  Klein, A.: {Localization of classical waves I:
Acoustic waves}.  Commun. Math. Phys. {\bf 180}, 439-482 (1996)

\bibitem[FK4]{FK2} Figotin, A.,  Klein, A.: {Localization of classical waves II:
Electromagnetic waves}.  Commun. Math. Phys. {\bf 184}, 411-441 (1997)


\bibitem[FiLM]{FLM} Fischer, W., Leschke, H., M\"uller, P.: Spectral
localization by Gaussian random potentials in multi-dimensional
continuous space. J. Statist. Phys. \textbf{101}  935--985 (2000)


\bibitem[FrMSS]{FMSS} Fr\"ohlich, J.,    Martinelli, F.,  Scoppola, E., Spencer, T.:
{Constructive proof of localization in the Anderson tight binding model}.
Commun. Math. Phys. {\bf 101}, 21-46 (1985)


\bibitem[FrS]{FS}  Fr\"ohlich, J.,  Spencer, T.: {Absence of diffusion with
Anderson tight binding model for large disorder or low energy}. Commun.
Math. Phys. {\bf 88}, 151-184 (1983)

\bibitem[G]{Ge}  Germinet, F.: {Dynamical localization II with an
application to the almost Mathieu operator}. J. Stat Phys. {\bf 95}, 
273-286 (1999)


\bibitem[GD]{GDB} Germinet, F., De Bi\`evre, S.: {Dynamical localization for
discrete and continuous random Schr\"odinger operators}.  Commun.
Math. Phys. {\bf 194}, 323-341 (1998)

\bibitem[GHK1]{GHK1}  Germinet, F., Hislop, P., Klein, A.: Localization for the Schr\"odinger operator with a Poisson random potential. C.R. Acad. Sci. Paris Ser. I \textbf{341}, 525-528 (2005) 

\bibitem[GHK2]{GHK2}  Germinet, F., Hislop, P., Klein, A.: 
Localization for  Schr\"odinger operators with  Poisson random potential.
J. Eur. Math. Soc. (JEMS).  In press


\bibitem[GHK3]{GHK3}  Germinet, F., Hislop, P., Klein, A.: 
Localization at low energies for attractive Poisson random Schr\"odinger operators.
CRM Proceedings \& Lecture Notes. In press



\bibitem[GJ]{GJ} Germinet, F,  Jitomirskaya, S.: {Strong dynamical
localization for the almost Mathieu model}.   
Rev. Math. Phys. {\bf 13}, 755-765    (2001)



\bibitem[GK1]{GK1} Germinet, F.,  Klein, A.: {Bootstrap multiscale analysis and
localization in random media}.  Commun. Math. Phys. \textbf{222},
415-448 (2001)


\bibitem[GK2]{GK2}  Germinet, F,  Klein, A.: Decay of operator-valued kernels of 
functions of Schr\"odinger and other operators.
Proc. Amer. Math. Soc. 131,  911-920  (2003)


\bibitem[GK3]{GK3} Germinet, F.,  Klein, A.: A characterization of the
Anderson metal-insulator transport transition. 
Duke Math. J. \textbf{124}, 309-351 (2004).


\bibitem[GK4]{GK4} Germinet, F.,  Klein, A.: {Explicit finite volume criteria for
localization in random media and applications}.  Geom. Funct. Anal. \textbf{13}, 
1201-1238 (2003)

\bibitem[GK5]{GK5} Germinet, F,  Klein, A.: {The
Anderson metal-insulator transport transition}.  Contemporary Mathematics
\textbf{339}, 43-57 (2003)

\bibitem[GK6]{GK6}  Germinet, F.,  Klein, A.:  Higher order
localization for random Schr\"odinger operators through 
explicit finite volume  criteria.  Markov Processes Relat. Fields  
 \textbf{9}, 633-650 (2003)

\bibitem[GK7]{GKsudec} Germinet, F., Klein, A.: New characterizations of the region of complete localization for random Schr\"odinger operators.  J. Stat. Phys. \textbf{122}, 73-94 (2006)




\bibitem[GoMP]{GMP}  Gol'dsheid, Ya., Molchanov, S., Pastur, L.:  Pure point
spectrum of stochastic one dimensional Schr\"odinger operators.
Funct. Anal. Appl. {\bf 11}, 1-10 (1977)


\bibitem[Gr]{Gr} Grinshpun, V: . Constructive proof of the localization for 
finite-difference infinite-order  operator with random potential. 
Random Oper. Stochastic Equations {\bf 2}, 25-42 (1994) 

\bibitem[HK]{HK}  Hislop, P.D.,  Klopp, F.:  The integrated density
of states for some random operators with nonsign definite potentials.
J. Funct. Anal. {\bf 195}, 12-47 (2002)
     
\bibitem[HoM]{HM}   Holden, H., Martinelli, F.:  On absence of diffusion near
the bottom of the spectrum for a random Schr\"odinger operator.
 Commun. Math. Phys. {\bf 93}, 197-217 (1984)




\bibitem[Jo1]{Jo1}  John, S.: Localization of Light. Phys. Today, May
1991

\bibitem[Jo2]{Jo2}  John, S.: The Localization of Ligh. In ``Photonic
Band Gaps and Localization". NATO ASI Series B: Physical  {\bf 308}, 1993


\bibitem[K]{Kir} Kirsch, W.: Wegner estimates 
and Anderson localization for alloy-type potentials.
 Math. Z. \textbf{221},  507--512 (1996) 



\bibitem[KM]{KM}  Kirsch, W.,   Martinelli, F.: {On the ergodic properties of 
the spectrum of general random operators}.  J. Reine Angew. Math. {\bf
334}, 141-156 (1982) 

\bibitem[KSS1]{KSS} Kirsch, W.,   Stollmann,  P, Stolz, G.:
{Localization for random perturbations of periodic Schr\"odinger
 operators}.  Random Oper. Stochastic Equations {\bf 6},
 241-268 (1998) 

\bibitem[KSS2]{KSS2} Kirsch, W.,   Stollmann,  P, Stolz, G.:
Anderson localization for random Schr\"odinger operators
 with long range interactions. 
Comm. Math. Phys. {\bf 195}, 495-507 (1998)


\bibitem[Kl1]{K} Klein, A.: {Localization in the Anderson model with long
 range hopping}.  Brazilian Journal of Physics {\bf 23}, 363-371 (1993)

\bibitem[Kl2]{K1}  Klein, A.:  Extended states in the Anderson model on
 the Bethe lattice.  
 Adv. Math. {\bf 133}, 163--184 (1998)

\bibitem[Kl3]{K2}  Klein, A.:  Spreading of wave packets in the Anderson model on the Bethe lattice.
Commun. Math. Phys. {\bf 177}, 755--773 (1996)

\bibitem[Kl4]{K3}  Klein, A.:  {Localization of light in randomized periodic 
media}.  In \emph{Diffuse Waves in Complex Media}, J.-P. Fouque, ed.,
 pp. 73-92,  Kluwer, The Netherlands, 1999



\bibitem[KlK1]{KK1} Klein, A.,  Koines, A.: {A general framework for localization 
of classical waves: I. Inhomogeneous  media and defect eigenmodes}.
Math. Phys. Anal. Geom. \textbf{4}, 97-130 (2001)

\bibitem[KlK2]{KK2} Klein, A.,  Koines, A.: {A general framework for localization 
of classical waves: II. Random media}.
Math. Phys. Anal. Geom.  \textbf{7}, 151-185 (2004)

\bibitem[KlKS]{KKS} Klein, A.,  Koines, A., Seifert,  M.: {Generalized
eigenfunctions for waves in inhomogeneous media}. 
 J. Funct. Anal. \textbf{190}, 255-291 (2002)

\bibitem[KlLS]{KLS}    Klein., A., Lacroix, J., Speis, A.: {Localization for the
Anderson model on a strip with singular potentials}.  J. Funct. Anal.
 {\bf  94}, 135-155 (1990)

\bibitem[KlMP]{KlMP}  Klein, A,  Martinelli, F.,  Perez, J.F.:
 A rigorous replica trick approach to Anderson localization in one dimension.
Commun. Math. Phys. {\bf 106}   623-633 (1986)

\bibitem[Klo1]{Klo5} Klopp,  F.:  Localisation pour des op\'erateurs de
Schr\"odinger ale\'atoires dans $L^2(\mathbb{R}^d)$: un mod\`ele semi-classique.
 Ann. Inst. Fourier {\bf 45}, 265-316 (1995) 

\bibitem[Klo2]{Kl0} Klopp,  F.:  Localization for semiclassical continuous random 
Schr\"odinger operators. II. 
The random displacement model. Helv. Phys. Acta \textbf{66}, 810--841 (1993)

\bibitem[Klo3]{Kl} Klopp,  F.: {Localization for continuous random
Schr\"odinger operators}. Commun. Math. Phys. {\bf 167}, 
553-569 (1995)



\bibitem[Klo4]{Kl3} Klopp,  F.:Weak disorder localization and Lifshitz tails.  
Commun. Math. Phys. \textbf{232},125-155 (2002)

\bibitem[Klo5]{Kl4}Klopp,  F.: Weak disorder localization and Lifshitz tails:
 continuous Hamiltonians.
Ann. I.H.P. \textbf{3}, 711-737 (2002)


\bibitem[Kr]{Kr} Kriecherbauer, T.: Estimates on Green's functions of quasi-periodic 
matrix operators and a new version of the coupling lemma in the Fr\"ohlich-Spencer
 technique. Internat. Math. Res. Notices {\bf 17}, 907--935 (1998)

\bibitem[KuS]{KS}  Kunz, H., Souillard, B.:  Sur le spectre des operateurs
aux differences finies aleatoires.  Commun. Math. Phys.  {\bf 78},
201-246 (1980)


\bibitem[L]{Lac}  Lacroix, J.:  Localisation pour l'op\'erateur de
Schr\"odinger al\'eatoire dans un ruban.  Ann. Inst. H. Poincar\'e ser
{\bf A40}, 97-116 (1984)




\bibitem[MS]{MS}  Martinelli, F.,  Scoppola, E.: {Introduction to the
mathematical theory of  Anderson localization}. Riv. Nuovo Cimento {\bf 10},
1-90 (1987)


\bibitem[Ma]{Ma}  Maynard, J.:  Acoustic Anderson Localization. In 
``Random Media and Composites''. SIAM, 1988



\bibitem[P]{Pa} Pastur,  L.: {Spectral properties of disordered systems in 
one-body approximation}.  Commun. Math. Phys. \textbf{75}, 179-196
 (1980) 

\bibitem[PF]{PF}  Pastur, L.,  Figotin, A.: {\em Spectra of Random and
Almost-Periodic Operators}.  Heidelberg: Springer-Verlag, 1992

\bibitem[SVW]{SVW}  Shubin, C., Vakilian, R., Wolff, T.:  Some
harmonic analysis questions suggested by Anderson-Bernoulli models.
Geom. Funct. Anal.  {\bf 8}, 932-964 (1998)


\bibitem[Si]{simonbkg}  Simon, B.: {Schr\"odinger semi-groups}.  Bull. Amer.
Math. Soc. {\bf Vol.7},  447-526 (1982)


\bibitem[SiW]{SW}   Simon, B., Wolff, T.:  Singular continuum spectrum under
rank one perturbations and localization for random Hamiltonians.
Commun. Pure. Appl. Math. {\bf 39}, 75-90 (1986)


\bibitem[Sp]{Sp}   Spencer, T.:   {Localization for random and quasiperiodic
 potentials.}   
J. Stat. Phys. {\bf 51}, 1009-1019 (1988)


\bibitem[St]{St} Stollmann. P.: Wegner estimates and
 localization for continuum Anderson models with some singular
 distributions. Arch. Math. (Basel) \textbf{75}, 307-311 (2000)

\bibitem[T]{T} Thouless, D.: Electrons in disordered systems and the theory
of localization.  Phys. Rep. \textbf{13}, 93-142  (1974)

\bibitem[U]{U} Ueki, N.:   Wegner estimates and localization for
 Gaussian random potentials. Publ. Res. Inst. Math. Sci.
 \textbf{40}, 29-90 (2004)


\bibitem[W1]{W1} Wang, W.-M.: {Microlocalization, percolation, and 
Anderson localization for the magnetic Schr\"odinger operator with a 
random potential}.  J. Funct. Anal.
\textbf{146}, 1-26  (1997)


\bibitem[W2]{W2} Wang, W.-M.: {Localization and universality of Poisson statistics
 for the multidimensional Anderson model at weak disorder}. 
Invent. Math. \textbf{146}, 365-398 (2001)


\bibitem[We]{W} Wegner, F.: Bounds on the density of states in
 disordered systems.  Z. Phys. B \textbf{44}, 9-15 (1981) 

\bibitem[WiBLR]{WBLR} Wiersma,  D.,   Bartolini, P.,  Lagendijk, A.,   Righini, R.:
Localization of light in a disordered medium. Nature {\bf 390},
671-673 (1997)

\bibitem[Z]{Z} Zenk, H.: Anderson localization for a multidimensional model including long
 range potentials and displacements.  Rev. Math. Phys. {\bf 14}, 273-302 (2002)  


\end{thebibliography}
\end{document}